\DeclareMathAlphabet\euscr{U}{eus}{m}{n}
\newcommand{\bl}{\bar\ell}
\newcommand{\br}{\bar r}
\newcommand{\bx}{\bar x}
\newcommand{\by}{\bar y}
\newcommand{\cL}{\mathcal L}
\newcommand{\Conf}{\euscr{C}}
\newcommand{\cQ}{\mathcal Q}
\newcommand{\cW}{\mathcal W}
\DeclareMathOperator{\cl}{clique}
\newcommand{\Ind}{\vvmathbb 1}
\DeclareMathOperator{\supp}{supp}
\newcommand{\N}{\mathbb N}
\newcommand{\R}{\mathbb R}
\newcommand{\hPhi}{\hat{\Phi}}
\newcommand{\tPhi}{\tilde{\Phi}}
\newcommand{\hw}{\hat w}
\newcommand{\tw}{\tilde w}
\let\pref=\prettyref
\newtheorem{theorem}{Theorem}
\newtheorem{claim}[theorem]{Claim}
\newtheorem{corollary}[theorem]{Corollary}
\newtheorem{lemma}[theorem]{Lemma}
\title{Towards the $k$-server conjecture:\\A unifying potential, pushing the frontier to the circle}
\author{Christian Coester\\CWI\\christian.coester@cwi.nl \and Elias Koutsoupias\\University of Oxford\\elias.koutsoupias@cs.ox.ac.uk}
\date{}
\begin{document}

\begin{titlepage}
	\maketitle
	\thispagestyle{empty}

\abstract{The $k$-server conjecture, first posed by Manasse, McGeoch and Sleator in 1988, states that a $k$-competitive deterministic algorithm for the $k$-server problem exists. It is conjectured that the work function algorithm (WFA) achieves this guarantee, a multi-purpose algorithm with applications to various online problems. This has been shown for several special cases: $k=2$, $(k+1)$-point metrics, $(k+2)$-point metrics, the line metric, weighted star metrics, and $k=3$ in the Manhattan plane.
	
	The known proofs of these results are based on potential functions tied to each particular special case, thus requiring six different potential functions for the six cases. We present a \emph{single} potential function proving $k$-competitiveness of WFA for all these cases. We also use this potential to show $k$-competitiveness of WFA on multiray spaces and for $k=3$ on trees. While the DoubleCoverage algorithm was known to be $k$-competitive for these latter cases, it has been open for WFA. Our potential captures a type of lazy adversary and thus shows that in all settled cases, the worst-case adversary is lazy. Chrobak and Larmore conjectured in 1992 that a potential capturing the lazy adversary would resolve the $k$-server conjecture.
	
	To our major surprise, this is not the case, as we show (using connections to the $k$-taxi problem) that our potential fails for three servers on the circle. Thus, our potential highlights laziness of the adversary as a fundamental property that is shared by all settled cases but violated in general. On the one hand, this weakens our confidence in the validity of the $k$-server conjecture. On the other hand, if the $k$-server conjecture holds, then we believe it can be proved by a variant of our potential.}

\end{titlepage}

\section{Introduction}

The $k$-server problem, introduced by Manasse, McGoech and Sleator \cite{ManasseMS88}, is one of the most fundamental problems in online optimization and contains other problems like paging or weighted paging as important special cases. It is defined as follows: $k$ servers are located in a metric space. One by one, points of the metric space are requested, and each request must be served upon arrival by moving one of the servers to the requested point. The problem is typically considered \emph{online}, where the choice of this server has to be made without knowledge of future requests. The goal is to minimize the total distance traveled by all servers.

When Manasse, McGeoch and Sleator \cite{ManasseMS88} introduced the $k$-server problem, they showed that on any metric space with $n\ge k+1$ points\footnote{On metric spaces with $n\le k$ points, the $k$-server problem is trivial.}, every deterministic online algorithm has competitive ratio at least $k$. They showed that this lower bound is tight when $k=2$ or $k=n-1$ by giving a $k$-competitive algorithm for these cases and boldly conjectured that a $k$-competitive online algorithm exists for the general case. This conjecture became known as the famous \emph{$k$-server conjecture} and has been a driving force in online optimization, making the $k$-server problem perhaps the most studied problem in the field. It has often been referred to as ``the holy grail of competitive analysis'', and many techniques developed for the $k$-server problem have later found applications to other problems.

Chrobak, Karloff, Payne, and Vishwanathan \cite{ChrobakKPV91} designed the elegant \emph{Double Coverage} algorithm to achieve the optimal competitive ratio of $k$ on the line metric. Shortly after, Chrobak and Larmore \cite{ChrobakL91} extended this algorithm to tree metrics, again matching the lower bound of $k$. The first algorithm for general metrics with a competitive ratio depending only on $k$ was found by Fiat, Rabani and Ravid \cite{FiatRR94}, achieving a competitive ratio exponential in $k$. Significant progress was made by Koutsoupias and Papadimitriou \cite{KoutsoupiasP95}, showing that a competitive ratio of $2k-1$ is achievable on general metric spaces.

While this reduces the gap between the upper and lower bound to a factor of $2$, it remains open to determine the exact competitive ratio. The lack of a proof of the $k$-server conjecture is even more puzzling given that the algorithm conjectured to achieve the competitive ratio of $k$ has been known for 30 years: The \emph{work function algorithm (WFA)}. It is this algorithm that achieves the aforementioned upper bound of $2k-1$ \cite{KoutsoupiasP95}. Its definition is generic\footnote{At any decision point, WFA chooses the action that would be best if the future were a mirror image of the past.}, with applications reaching far beyond the $k$-server problem. For instance, WFA achieves the optimal competitive ratio for metrical task systems \cite{BorodinLS92,BorodinE98}, the closely related \emph{generalized WFA} has been applied successfully to the weighted $k$-server problem \cite{BansalEK17}, the generalized $2$-server problem \cite{Sitters14} and layered graph traversal \cite{Burley96}, and work functions have also played a crucial role in recent breakthroughs for convex body chasing \cite{ArgueGGT20,Sellke20}. Given these connections, an exact understanding of the WFA for the $k$-server problem is likely to have a wider impact on online optimization in general.

WFA is known to achieve the tight competitive ratio of $k$ for the following special cases, which impose restrictions on the number of servers and/or the type of metric space:
\begin{itemize}
	\item $k=2$ \cite{ChrobakL92}
	\item $k=n-1$ (folklore; see e.g.~\cite{Koutsoupias09})
	\item $k=n-2$ \cite{KoutsoupiasP96,BartalK04}
	\item line metric \cite{BartalK04}
	\item weighted star metrics \cite{BartalK04}
	\item $k=3$ in the Manhattan plane \cite{BeinCL02}
\end{itemize}

While there has been a lack of progress on the $k$-server conjecture for about two decades, tremendous progress has been achieved for the \emph{randomized} $k$-server problem in recent years \cite{BansalBMN15,BCLLM18,Lee18}, leading to algorithms with polylogarithmic competitive ratios.

\subsection{Our contribution}

Our contribution consists of three parts.
\begin{enumerate}[(a)]
	\item The known proofs of the aforementioned six special cases where WFA is $k$-competitive all use a different potential function, and thus do not seem to point towards a potential function that can solve the $k$-server conjecture in the general case. We present a single potential function that proves the $k$-server conjecture for all these cases.
	\item Tree metrics are the only special case of the $k$-server problem where WFA is not known to be $k$-competitive but a different algorithm is (namely, the Double Coverage algorithm \cite{ChrobakL91}). In~\cite{BartalK04}, the question whether WFA is $k$-competitive on trees was raised as an intermediate step towards solving the $k$-server conjecture. In this direction, we use our potential function to show that WFA is $k$-competitive on multiray spaces (a type of tree metrics that generalizes the line and weighted star metrics) and for $k=3$ on general trees. Our proofs employ the quasi-convexity property of work functions in several new ways.
	\item Chrobak and Larmore~\cite{ChrobakL92} formulated three conjectures which say, essentially, that the ``adversary is lazy'' in the sense that at any time, the worst-case continuation of the request sequence begins with many requests to the $k$ offline server locations (forcing any sensible algorithm to converge to this configuration) before other points are requested. They verified their conjectures on tens of thousands of small metric spaces. In~\cite{BeinCL02}, a stronger statement was considered (ignoring the question what kind of work functions are ``reachable''), which fails in general but which they conjectured to be true on the circle metric. We reject all these conjectures by showing that for $k=3$, our potential captures exactly this lazy adversary (and a more restricted adversary for general $k$), but that it fails on the circle by giving an explicit request sequence as a counterexample. This highlights an important conceptual separation between all cases where $k$-competitiveness of WFA has been shown and the general case. We believe this property constitutes the main difficulty in resolving the $k$-server conjecture, and it suggests the circle as the main testing ground for further progress. Our method of constructing the counterexample is based on a connection with the $k$-taxi problem~\cite{CoesterK19}, which we use to generate phenomena of large metric spaces on a much smaller metric space.
\end{enumerate}

\subsection{Overview}
We provide various definitions and important lemmas in Section~\ref{sec:prelim}. In Section~\ref{sec:pot} we formally define our potential in two equivalent ways and show the basic way to use it to prove $k$-competitiveness. In Section~\ref{sec:interpretation}, we relate our potential to the lazy adversary potential that was defined implicitly by Chrobak and Larmore. We prove $k$-competitiveness on multiray spaces in Section~\ref{sec:multiray} and for $k=3$ on trees in Section~\ref{sec:trees}. The result on multi-ray spaces is our most involved proof, and implies the previously known $k$-competitiveness on the line and weighted stars as special cases. In Section~\ref{sec:nonlazy}, we provide a counter-example to our potential for $k=3$ on the circle, implying that the adversary is not lazy in this case, contrasting this case from all cases where WFA is known to be $k$-competitive. Additional proofs for previously known special cases using our potential are given in the appendix.
\section{Preliminaries}\label{sec:prelim}
\paragraph{Basic notation and abuse of notation.} We use $(M,d)$ to denote the metric space, where $d$ is the distance function. We denote by $n=|M|$ its size and by $\Delta=\max_{x,y}d(x,y)$ its diameter. For $x,y\in M$, we will often use the shorthand notation $xy:=d(x,y)$. A multiset $C\subseteq M$ of $k$ points is called a \emph{configuration}, representing the location of $k$ servers. We denote by $\Conf^k_M$ the set of all configurations. For two configurations, $X$ and $Y$, we denote by $d(X,Y)$ the value of their minimum matching. For notational convenience, we often use the empty space as a union operator on elements of $M$. For example, we often write $x_1x_2\dots x_i$ instead of $\{x_1,x_2,\dots,x_i\}$ when it is clear from the context that the set is meant. Similarly, given also a multiset $C$, we may write $Cx_1\dots x_i$ instead of $C\cup\{x_1,\dots,x_i\}$. For $x\in M$ and $i\in\N_0$, we write $x^i$ for the multiset containing $i$ copies of $x$.

For a set $S\subseteq M$ of points, let $\cl(S)$ for the sum of pairwise distances of the points in $S$.

\paragraph{The $k$-server problem.} An instance of the $k$-server problem is defined by a metric space $(M,d)$, an initial configuration $C_0\subseteq M$ of $k$ points and a sequence $r_1,r_2,\dots,r_T\in M$ of requests. A feasible solution is a sequence $C_1,C_2,\dots C_T$ of configurations such that $r_t\in C_t$ for all $t=1,\dots,T$. The cost of this solution is the sum $\sum_{t=1}^T d(C_{t-1},C_t)$.

\paragraph{The work function algorithm (WFA).} Given an instance of the $k$-server problem, the \emph{work function} $w_t$ at time $t$ is the function that maps any configuration $C$ to the minimal cost of serving the first $t$ requests and subsequently ending in configuration $C$. Formally,
\begin{align*}
	w_t(C):=\min_{\substack{C_1,\dots,C_t\\ \forall \tau\colon r_\tau\in C_\tau}}\sum_{\tau=1}^t d(C_{\tau-1},C_\tau) + d(C_t,C).
\end{align*}
The work function algorithm (WFA) selects $C_t\ni r_t$ so as to minimize $d(C_{t-1},C_t)+w_t(C_t)$, with ties broken arbitrarily.

\paragraph{Quasiconvexity.} A function $w\colon \Conf_M^k\to\R$ is called \emph{quasiconvex} if for any configurations $X$ and $Y$ there exists a bijection $\mu\colon X\to Y$ such that for any $A\subseteq X$,
\begin{align*}
	w(X)+w(Y)\ge w(A\cup\mu(X\setminus A)) + w(\mu(A)\cup (X\setminus A)).
\end{align*}
It was shown in \cite{KoutsoupiasP95} that if $w$ is quasiconvex, then $\mu$ can be chosen such that $\mu(x)=x$ for all $x\in X\cap Y$. More importantly, it was shown in \cite{KoutsoupiasP95} that any work function is quasiconvex.

\paragraph{Fundamentals about work functions.}
A function $w\colon \Conf_M^k\to\R$ is \emph{$1$-Lipschitz} if
\begin{align}
	w(X)-w(Y)\le d(X,Y)\label{eq:Lipschitz}
\end{align}
for all configurations $X$ and $Y$. The triangle inequality shows immediately that every work function is $1$-Lipschitz.

Let $\cQ^k_M$ be the set of functions $w\colon \Conf_M^k\to\R$ that are quasiconvex. Let $\cW^k_M\subseteq \cQ^k_M$ be the subset of functions that are additionally $1$-Lipschitz. We may drop $k$ and/or $M$ from the notation when they are clear from the context or immaterial. For $w\in\cW$ and configurations $X$ and $Y$, we say that $Y$ \emph{supports} $X$ if \eqref{eq:Lipschitz} holds with equality. Note that if $Y$ supports $X$ in $w_t$, then the cheapest way of serving the first $t$ requests and ending in configuration $X$ is equal to the cheapest way of serving the first $t$ requests and then first going to $Y$ and then to $X$. Thus, if $Y$ supports $X$, then there is no reason for an offline algorithm to be in configuration $X$ because it is at least as good to be in configuration $Y$ and delay the move from $Y$ to $X$ until later.

The \emph{support} of $w$, denoted $\supp(w)$, is the set of all configurations that are \emph{not} supported by any other configuration. Intuitively, $\supp(w_t)$ are the possible configurations where an optimal offline algorithm might be at time $t$. Clearly,
\begin{align*}
	w(X)=\min_{Y\in\supp(w)} w(Y)+d(X,Y)
\end{align*}
for any configuration $X$. In particular, any work function is fully specified by its support and the values it takes on support configurations.

For $r\in M$, let $\cW^k_M(r)\subseteq \cW^k_M$ be the subset of $1$-Lipschitz, quasiconvex functions with the property that every support configuration contains $r$. Again, we may drop $k$ and/or $M$ from the notation. Note that the work function $w_t$ at time $t$ is in $\cW(r_t)$.

There exists a simple update rule to compute the new work function when an additional request is issued. For $w\in \cW_M$ and $r\in M$, the updated work function $w\land r\in\cW(r)$ is defined by
\begin{align*}
	w\land r(C)= \min_{X\ni r} w(X)+d(X,C).
\end{align*}
It is easy to see that $w_t=w_{t-1}\land r_t$. A basic observation is that if $r_t\in C$, then $w_{t-1}(C)=w_t(C)$. Another basic property is that $w_t(C)\ge w_{t-1}(C)$.

\subsection{Extended cost, minimizers and duality}

The following lemma was proved by Chrobak and Larmore \cite{ChrobakL92} (see also \cite{Koutsoupias09}):
\begin{lemma}[Extended cost lemma]
	If for every $k$-server instance on a metric space $M$ it holds that
	\begin{align*}
		\sum_{t=1}^T\max_X\left[w_t(X)-w_{t-1}(X)\right]\le (\rho+1)\cdot\min_{X}w_T(X)+c_M
	\end{align*}
	for some constant $c_M$ independent of the request sequence, then WFA is $\rho$-competitive on $M$.
\end{lemma}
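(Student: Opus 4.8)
The plan is to establish the Extended Cost Lemma by constructing, from any online run of WFA, an offline solution whose cost is controlled by the telescoping sum on the left, and then comparing. Let me sketch the standard argument. Set $\Delta w_t := \max_X[w_t(X) - w_{t-1}(X)]$, the \emph{extended cost} at step $t$. First I would recall the classical \emph{online--offline cost relation} for WFA: if $C_{t-1}, C_t$ are the configurations chosen by WFA at steps $t-1$ and $t$, then WFA's incremental cost $d(C_{t-1}, C_t)$ is bounded by $w_t(C_t) - w_{t-1}(C_{t-1})$ plus a correction term. More precisely, the defining minimization of WFA and the update rule $w_t = w_{t-1} \land r_t$ give $w_t(C_{t-1}) = w_{t-1}(C_{t-1}) + d(C_{t-1}, C_t)$ when $C_t$ is chosen optimally (this is the standard ``WFA moves along a support path'' fact), so $d(C_{t-1},C_t) = w_t(C_{t-1}) - w_{t-1}(C_{t-1}) \le \Delta w_t$ since $r_t \in C_{t-1}$ would make the difference zero, but in general $w_t(C_{t-1}) - w_{t-1}(C_{t-1}) \le \Delta w_t$ trivially. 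Hence $\mathrm{WFA}\text{'s total cost} = \sum_t d(C_{t-1},C_t) \le \sum_t \Delta w_t$.

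Next I would lower-bound $\sum_t \Delta w_t$ by hand in terms of the optimal offline cost. Let $\mathrm{OPT} = \min_X w_T(X)$ be the optimal offline cost to serve all $T$ requests. Since each $w_t \ge w_{t-1}$ pointwise and $w_0 \equiv d(C_0, \cdot)$, we have $\Delta w_t \ge w_t(X) - w_{t-1}(X)$ for every fixed $X$; choosing $X$ to be an offline minimizer and telescoping gives $\sum_t \Delta w_t \ge w_T(X^*) - w_0(X^*) \ge \mathrm{OPT} - \Delta \cdot k$ (a constant $c_M$ absorbing the initial configuration term). Combining, $\mathrm{WFA} \le \sum_t \Delta w_t$ and $\mathrm{OPT} \ge \sum_t \Delta w_t - (\rho)\mathrm{OPT} - c_M$ is \emph{not} yet what we want — the point of the hypothesis is precisely that $\sum_t \Delta w_t \le (\rho+1)\mathrm{OPT} + c_M$, which we are \emph{assuming}. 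So the real content is: from the hypothesis and $\mathrm{WFA} \le \sum_t \Delta w_t$ we would only get $\mathrm{WFA} \le (\rho+1)\mathrm{OPT} + c_M$, which is off by one. The fix, which I would carry out carefully, is the sharper accounting: $\sum_t d(C_{t-1}, C_t) = \sum_t \big(w_t(C_{t-1}) - w_{t-1}(C_{t-1})\big)$, and one shows $w_t(C_{t-1}) - w_{t-1}(C_{t-1}) \le \Delta w_t - \big(w_t(C_t) - w_{t-1}(C_t)\big)$ is too optimistic; instead one uses that after serving request $r_t$ one may take $C_t \ni r_t$ with $w_t(C_t) = w_{t-1}(C_t)$, so the extended cost ``spent'' at the configuration WFA occupies is recovered. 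Writing $\Phi_t := w_t(C_t) - \text{(a linear functional)}$ as a potential and telescoping yields $\mathrm{WFA} \le \sum_t \Delta w_t - \mathrm{OPT} + c_M \le \rho \cdot \mathrm{OPT} + 2c_M$, i.e. $\rho$-competitiveness.

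Concretely, the key steps in order are: (1) show WFA's step cost equals $w_t(C_{t-1}) - w_{t-1}(C_{t-1})$ using the WFA selection rule and $w_t = w_{t-1}\land r_t$; (2) bound $\sum_t \big(w_t(C_{t-1}) - w_{t-1}(C_{t-1})\big) \le \sum_t \Delta w_t - \big(w_T(C_T) - w_0(C_0)\big) + O(\Delta)$ by a potential/telescoping argument that exploits $w_t(C_t) = w_{t-1}(C_t)$ when $r_t \in C_t$; (3) observe $w_T(C_T) \ge \min_X w_T(X) = \mathrm{OPT}$ and $w_0(C_0) = 0$; (4) plug in the hypothesis $\sum_t \Delta w_t \le (\rho+1)\mathrm{OPT} + c_M$ to conclude $\mathrm{WFA} \le (\rho+1)\mathrm{OPT} + c_M - \mathrm{OPT} + O(\Delta) = \rho\,\mathrm{OPT} + c_M'$, with $c_M' = c_M + O(\Delta)$ a new instance-independent constant (only depending on $M$ and $C_0$), giving $\rho$-competitiveness.

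The main obstacle is step (2): getting the ``$-\mathrm{OPT}$'' saving rather than just $\mathrm{WFA} \le \sum_t\Delta w_t$. This requires the standard but slightly delicate observation that the extended cost $\Delta w_t$ measures the increase over \emph{all} configurations, whereas WFA only pays for the increase at its current configuration, and the gap between these, accumulated over time, is exactly a telescoping quantity bounded below by the final work function value at WFA's configuration, hence by $\mathrm{OPT}$. I would present this via an explicit potential $\Phi_t = w_t(C_t)$ (or a minor variant subtracting $d(C_0, C_t)$ to handle the initial offset cleanly), verify $\Phi_t - \Phi_{t-1} \ge d(C_{t-1},C_t) - \Delta w_t$ at each step, and telescope; quasiconvexity and $1$-Lipschitzness of work functions (both recalled in the preliminaries) are what make the per-step inequality go through.
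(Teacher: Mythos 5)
Your overall strategy is the standard one and matches what the paper alludes to (it cites Chrobak--Larmore for this lemma and only remarks that the key fact is $\sum_t \nabla_t \ge \mathrm{cost}(\mathrm{WFA}) + \mathrm{OPT} - c$): bound WFA's step cost by the extended cost minus a telescoping term that accumulates to $w_T(C_T)\ge\min_X w_T(X)$. However, two of your displayed statements are wrong as written, and one of them is the step carrying the whole argument. The identity in your step (1), $w_t(C_{t-1}) = w_{t-1}(C_{t-1}) + d(C_{t-1},C_t)$, is false in general. The correct ``support path'' fact is
\begin{align*}
w_t(C_{t-1}) \;=\; \min_{X\ni r_t}\bigl(w_t(X)+d(X,C_{t-1})\bigr) \;=\; w_t(C_t)+d(C_t,C_{t-1}),
\end{align*}
i.e.\ the right-hand side features $w_t(C_t)$, not $w_{t-1}(C_{t-1})$. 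Your version asserts $w_{t-1}(C_t)=w_{t-1}(C_{t-1})$, which fails, e.g., whenever $C_{t-1}$ is supported in $w_{t-1}$ by a configuration containing $r_t$: then $w_t(C_{t-1})=w_{t-1}(C_{t-1})$ yet WFA may still pay $d(C_{t-1},C_t)>0$. With your identity, WFA's total cost would equal $\sum_t\bigl(w_t(C_{t-1})-w_{t-1}(C_{t-1})\bigr)\le\sum_t\Delta w_t$, and the extra ``$-\mathrm{OPT}$'' you need cannot be extracted afterwards; your step (2) as stated would then be trying to prove something that does not follow.

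The fix is exactly the correct identity above: it gives per step
\begin{align*}
\Delta w_t \;\ge\; w_t(C_{t-1})-w_{t-1}(C_{t-1}) \;=\; d(C_{t-1},C_t) + \bigl(w_t(C_t)-w_{t-1}(C_{t-1})\bigr),
\end{align*}
so with $\Phi_t:=w_t(C_t)$ you get $d(C_{t-1},C_t)\le \Delta w_t - (\Phi_t-\Phi_{t-1})$; note your written inequality $\Phi_t-\Phi_{t-1}\ge d(C_{t-1},C_t)-\Delta w_t$ has the sign of the potential term flipped and would telescope to the useless bound $\mathrm{WFA}\le\sum_t\Delta w_t + w_T(C_T)$. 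Summing the corrected inequality gives $\mathrm{WFA}\le\sum_t\Delta w_t - w_T(C_T) + w_0(C_0)$, and since $w_0(C_0)=0$ and $w_T(C_T)\ge\min_X w_T(X)$, the hypothesis yields $\mathrm{WFA}\le\rho\cdot\min_X w_T(X)+c_M$ with no extra $O(\Delta)$ corrections needed. Your surrounding prose (``the gap between the extended cost and what WFA pays telescopes to $w_T(C_T)\ge\mathrm{OPT}$'') shows you understand the mechanism; the proof just needs the two displayed formulas repaired. Also, neither quasiconvexity nor $1$-Lipschitzness is needed for this lemma, contrary to your last sentence -- only the update rule $w_t=w_{t-1}\land r_t$ and the WFA selection rule.
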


The power of this lemma is that it reduces the task of proving competitiveness of WFA to a property of work functions. In particular, we do not need to keep track of the actual configurations of the online and offline algorithm. The quantity $\max_X\left[w_t(X)-w_{t-1}(X)\right]$ is also called the \emph{extended cost} of the $t$th request, and the proof of the lemma is based on the fact that the total extended cost over all requests is an upper bound on the sum of WFA's cost and the optimal offline cost.

For a work function $w\in\cW^k_M$ and a point $y\in M$, we call a configuration $X\in\arg\min w(X)-d(y^k,X)$ a \emph{minimizer of $w$ with respect to $y$}. There is a direct connection between minimizers and the configurations $X$ maximizing the extended cost. 
This connection is captured by the duality lemma, which was first proved in~\cite{KoutsoupiasP95}. We give a slightly stronger version of the duality lemma by stating it as an equivalence rather than an implication.

\begin{lemma}[Duality lemma]
	Let $w\in\cW_M$ and $r\in M$. Define $w'=w\land r$. Then $A\in \arg\min_X w(X)-d(r^k,X)$ if and
	only if the following two conditions hold:
	\begin{align}
		A\in \arg\max_X w'(X)-w(X) \\
		A\in \arg\min_X w'(X)-d(r^k,X)
	\end{align}
\end{lemma}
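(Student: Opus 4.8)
The plan is to prove the duality lemma by relating everything back to the definition $w'(X) = \min_{Z \ni r} w(Z) + d(Z,X)$ and the fact that the optimal such $Z$ can always be taken to contain $r$. First I would fix notation: write $f(X) = w(X) - d(r^k, X)$ and $f'(X) = w'(X) - d(r^k, X)$, so that a minimizer of $w$ with respect to $r$ is a minimizer of $f$, and the two conditions on $A$ become $A \in \arg\max_X [w'(X) - w(X)]$ and $A \in \arg\min_X f'(X)$. The key elementary identities I will establish are: (i) $\min_X f(X) = \min_X f'(X)$, since serving an extra request to $r$ cannot change the cheapest way to end ``near $r^k$'' (any optimal $X$ for $f$ can be assumed to contain $r$, on which $w$ and $w'$ agree); (ii) $w'(X) \ge w(X)$ always, with equality iff $X$ itself is an optimal choice of $Z$ in the update formula, i.e. iff $r \in X$ is not forced to be paid for; and (iii) for any $X$, $w'(X) - w(X) \le$ something controlled by how far $X$ is from being a minimizer of $f$.

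**The two directions.**

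For the forward direction, suppose $A \in \arg\min_X f(X)$. I would argue that $A$ can be taken to satisfy $r \in A$ (replacing $A$ by its optimal $Z$ only decreases $f$, using $1$-Lipschitzness and the triangle inequality on $d(r^k, \cdot)$), hence $w'(A) = w(A)$ and $f'(A) = f(A) = \min_X f(X) = \min_X f'(X)$ by identity (i); this gives the second condition. For the first condition, I need $w'(X) - w(X) \le 0 = w'(A) - w(A)$ for all $X$, which is just $w' \ge w$ combined with $w'(A) = w(A)$. Wait — that only shows $A$ maximizes $w' - w$ if the max is $0$, so I need that the maximum of $w' - w$ is indeed $0$, equivalently that there exists some configuration on which $w' = w$; but any $X \ni r$ works, so $\max_X [w'(X) - w(X)] = 0$ and the first condition follows.

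For the converse, suppose $A$ satisfies both displayed conditions. From the first condition and $\max_X[w'(X)-w(X)] = 0$ we get $w'(A) = w(A)$. From the second condition, $f'(A) = \min_X f'(X) = \min_X f(X)$ by identity (i). Combining, $f(A) = w(A) - d(r^k,A) = w'(A) - d(r^k,A) = f'(A) = \min_X f(X)$, so $A$ is a minimizer of $w$ with respect to $r$, as desired. The heart of the argument is thus identity (i) together with the observation $\max_X [w'(X) - w(X)] = 0$.

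**Main obstacle.**

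I expect the main obstacle to be identity (i), $\min_X [w(X) - d(r^k,X)] = \min_X [w'(X) - d(r^k,X)]$, done carefully with multiplicities. The inequality ``$\le$'' is immediate from $w' \ge w$. For ``$\ge$'', given an optimal $X$ for the right-hand side, I must produce a configuration $X'$ with $w(X') - d(r^k, X') \le w'(X) - d(r^k,X)$; the natural candidate is the minimizing $Z \ni r$ in $w'(X) = w(Z) + d(Z,X)$, and I need $d(r^k,Z) \ge d(r^k,X) - d(Z,X)$, which is exactly a triangle-inequality / matching estimate: matching all $k$ copies of $r$ to $Z$ costs at most $d(r^k, X) + d(X, Z)$ in the other direction — this is where I must be careful that $d(\cdot,\cdot)$ on configurations is the min-cost matching and that $d(r^k, \cdot)$ behaves additively along such a path. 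Once this estimate is nailed down, the rest is bookkeeping. I would also remark that this equivalence formulation strengthens the implication in \cite{KoutsoupiasP95} precisely because both displayed conditions together pin down $A$ exactly, rather than merely being necessary consequences.
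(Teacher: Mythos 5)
There is a fatal gap: the two pillars of your argument --- identity (i), $\min_X [w(X)-d(r^k,X)]=\min_X[w'(X)-d(r^k,X)]$, and the claim $\max_X[w'(X)-w(X)]=0$ --- are both false. The quantity $\max_X[w'(X)-w(X)]$ is precisely the \emph{extended cost} of the request $r$, which is in general strictly positive (if it were always $0$, the total extended cost would vanish and the extended cost lemma would make WFA $0$-competitive). Concretely, take $k=1$ on the two-point space $\{0,1\}$ with $d(0,1)=1$, $w\equiv 0$ and $r=1$: then $w'(X)=d(1,X)$, so $\max_X[w'(X)-w(X)]=w'(0)=1$, and $\min_X[w(X)-d(r,X)]=-1$ while $\min_X[w'(X)-d(r,X)]=0$. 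In fact the duality lemma itself tells you that these two minima differ by exactly the extended cost. The error seems to originate from a sign confusion: a minimizer $A$ of $w(X)-d(r^k,X)$ wants to be \emph{far} from $r$ (moving a point of $X$ away from $r$ decreases $-d(r^k,X)$ faster than $1$-Lipschitzness lets $w(X)$ grow), so it cannot in general ``be taken to contain $r$'', and typically $w'(A)>w(A)$ --- indeed the whole content of the forward direction is that $w'(A)-w(A)$ equals the \emph{maximal} (positive) increase of the work function. Your verification of (i) also only re-derives the ``$\le$'' direction: exhibiting $Z$ with $f(Z)\le f'(X)$ bounds $\min f$ from above, not below. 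With (i) and the ``max is $0$'' claim gone, both of your directions collapse.

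For comparison, the paper does not attempt to reprove the forward direction from scratch: it is exactly the duality lemma of \cite{KoutsoupiasP95}, which states that a minimizer $A$ of $w(X)-d(r^k,X)$ satisfies $w'(A)+w(B)\ge w(A)+w'(B)$ and $w'(B)-d(r^k,B)\ge w'(A)-d(r^k,A)$ for all $B$ (a genuinely nontrivial fact whose known proof uses quasiconvexity). The new content of the lemma here is only the converse, and that is a one-line observation: if $A$ satisfies both displayed conditions, summing the two families of inequalities cancels the $w'$ terms and yields $w(B)-d(r^k,B)\ge w(A)-d(r^k,A)$ for every $B$, i.e.\ $A\in\arg\min_X w(X)-d(r^k,X)$. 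Note this converse uses no triangle inequality and no structure of $w'$ at all --- quite unlike the machinery you set up.
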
     
\begin{proof}
	The ``only if'' direction is the duality lemma of~\cite{KoutsoupiasP95}, where it was shown that if $A\in \arg\min_X w(X)-d(r^k,X)$ then for every configuration $B$
	\begin{align}
		w'(A)+w(B) &\geq w(A)+w'(B), & \text{and} \\
		w'(B)-d(r^k,B) &\geq w'(A)-d(r^k,A). 
	\end{align}
	By summing these two constraints we get $w(B)-d(r^k,B)\geq w(A)-d(r^k,A)$,
	which shows the other direction.
\end{proof}

It is interesting that the proof of the duality lemma does not use the fact that
$d$ is a distance, i.e., it satisfies the triangle inequality.

\subsection{Additional properties of work functions}

In this section, we provide additional properties of work functions that follow from the quasiconvexity property. We will use these properties to prove $k$-competitiveness on multiray spaces and for $k=3$ on trees.

The notion of quasiconvex or quasiconcave functions appears in many different areas and was
discovered independently a few times. As a result, they appear with different
terminology in literature. For example, in the early 1980s Celso and
Crawford~\cite{kelso1982job} defined a related notion as a sufficient condition to
the existence of Walrasian Equilibria and called a similar notion gross substitute
functions\footnote{Gross substitute functions are real functions defined for all
	subsets of a ground set $V$, whose restriction to subsets of each size $k$ are
	quasiconvex.}; in 1990, Dress and Wenzel~\cite{Dress1990} related them to a variant
of the greedy
algorithm and called them valuated matroids; Koutsoupias and
Papadimitriou~\cite{KoutsoupiasP95} defined them in the context of online algorithms for the
$k$-server problem and called them quasiconvex. They have also played a central
role in discrete optimization~\cite{Murota2003c}.

\begin{lemma}\label{lem:quasiMin}
	Let $w\in\cQ$. Let $X\in\arg\min w(X)$, and let $x\in X$. Then there exists $Y\in \arg\min_{Y\not\ni x}w(Y)$ such that $X-x\subset Y$.
\end{lemma}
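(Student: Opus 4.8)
\emph{Proof proposal.} The plan is to obtain the lemma from a \emph{single} use of quasiconvexity, applied to $X$ together with the cheapest configuration that already avoids $x$, and with the subset in the quasiconvexity inequality chosen so as to isolate the server at $x$. Concretely, I would fix some $Z\in\arg\min_{Y\not\ni x}w(Y)$; this minimum ranges over a finite, nonempty family ($y^k$ avoids $x$ for any $y\neq x$), so $Z$ exists. Since $w\in\cQ$, quasiconvexity supplies a bijection $\mu\colon X\to Z$, which by the refinement recalled above we may take to fix every point of $X\cap Z$. As $x\notin Z$ while $\mu(x)\in Z$, the point $\mu(x)$ differs from $x$.

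Next I would instantiate the quasiconvexity inequality with $A=X-x$ — here we may assume that $x$ has multiplicity one in $X$, as otherwise $X-x$ itself contains $x$ — so that $x\notin A$ and $X\setminus A=\{x\}$. This gives
\[
  w(X)+w(Z)\ \ge\ w\big((X-x)\cup\{\mu(x)\}\big)\ +\ w\big((Z-\mu(x))\cup\{x\}\big).
\]
Writing $Y:=(X-x)\cup\{\mu(x)\}$ and $W:=(Z-\mu(x))\cup\{x\}$, we have $X-x\subset Y$ by construction, so $Y$ is a configuration of the desired form and it only remains to certify that it minimizes $w$ over configurations avoiding $x$. The two observations I would use are: (i) $Y$ avoids $x$, since $X-x$ contains no copy of $x$ and $\mu(x)\neq x$, hence $Y$ competes in the minimum defining $Z$ and $w(Y)\ge w(Z)$; and (ii) $w(W)\ge w(X)$, since $X$ is a global minimizer of $w$. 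Substituting (i) and (ii) into the displayed inequality yields
\[
  w(Y)+w(W)\ \le\ w(X)+w(Z)\ \le\ w(W)+w(Z)\ \le\ w(W)+w(Y),
\]
whose two outer terms are equal, forcing every inequality to be tight; in particular $w(Y)=w(Z)$. Hence $Y\in\arg\min_{Y'\not\ni x}w(Y')$ and $X-x\subset Y$, which is exactly the claim.

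I do not anticipate a real obstacle: once one commits to the choice $A=X-x$, the statement drops out of quasiconvexity in a couple of lines. The only thing requiring care is the multiset bookkeeping — verifying that the configuration $Y$ handed back by quasiconvexity genuinely omits $x$, which is precisely where the choice of $Z$ (forcing $\mu(x)\neq x$) enters — together with the remark that the lemma is only meaningful when the copy of $x$ removed from $X$ is unique.
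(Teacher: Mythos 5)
Your proof is correct, but it takes a genuinely different route from the paper's. The paper argues by an extremal exchange: it picks $Y\in\arg\min_{Y\not\ni x}w(Y)$ with $X\cap Y$ maximal under inclusion, assumes some $x'\in(X-x)\setminus Y$ survives, and uses quasiconvexity to swap a single pair $(x',y')$ without increasing $w(Y)$, contradicting maximality. You instead manufacture the desired configuration in one shot: applying quasiconvexity to $X$ and an arbitrary minimizer $Z$ over configurations avoiding $x$, with $A=X-x$, yields $Y=(X-x)\cup\{\mu(x)\}$ and $W=(Z-\mu(x))\cup\{x\}$, and the sandwich $w(Y)+w(W)\le w(X)+w(Z)\le w(W)+w(Z)\le w(W)+w(Y)$ forces $w(Y)=w(Z)$. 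Your key observations — $\mu(x)\ne x$ because $\mu(x)\in Z\not\ni x$, and hence $Y\not\ni x$ — are exactly right, and your argument doesn't actually need the refinement that $\mu$ fixes $X\cap Z$, which you invoke but never use. What each approach buys: yours is shorter and constructive for this particular statement, while the paper's extremal-exchange template is reused almost verbatim for Lemmas~\ref{lem:quasiSub} and~\ref{lem:quasiGreedy}, so it gives a uniform treatment of all three. Both proofs (the paper's included) implicitly require $x$ to have multiplicity one in $X$ — otherwise $X-x$ still contains $x$ and the conclusion is vacuously unattainable — and you are right to flag this explicitly.
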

\begin{proof}
	Let $Y$ be chosen such that $X\cap Y$ is maximal under inclusion. Suppose towards a contradiction that there exists $x'\in (X-x)\setminus Y$. By quasiconvexity, there exists $y'\in Y\setminus X$ such that $w(X)+w(Y)\ge w(X-x'+y')+w(Y-y'+x')$. By choice of $X$, we have $w(X-x'+y')\ge w(X)$. Combining these last two inequalities, we get $w(Y)\ge w(Y-y'+x')$. But $Y-y'+x'\not\ni x$ and $X\cap Y\subsetneq X\cap (Y-y'+x')$, so this contradicts the choice of $Y$.
\end{proof}

\begin{lemma}\label{lem:quasiSub}
	Let $w\in \cQ_M^k$. Let $X\in\arg\min w(X)$, and let $A\subset M$ be a (multi)set of cardinality $|A|<k$. Then there exists $Y\in\arg\min_{Y\supset A} w(Y)$ such that $Y- A\subseteq X- A$.
\end{lemma}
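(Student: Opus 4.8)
The plan is to produce $Y$ as a minimizer of $w$ over a combinatorial ``box'' around $X$, and then to use quasiconvexity of $w$ together with the global optimality of $X$ to show that this box-minimizer is in fact optimal over \emph{all} configurations containing $A$. Write $m_p(\cdot)$ for the multiplicity of a point $p$ in a multiset, and $X\cup A$ for the multiset with $m_p(X\cup A)=\max(m_p(X),m_p(A))$. It suffices to find $Y\in\arg\min_{Z\supseteq A}w(Z)$ with $Y\subseteq X\cup A$, since then $Y-A\subseteq(X\cup A)-A=X-A$. Consider the finite, nonempty set $\mathcal B:=\{Z\in\Conf^k_M: A\subseteq Z\subseteq X\cup A\}$ (it is nonempty since $|A|<k$ and $A$ extends to a $k$-point configuration using points of $(X\cup A)-A=X-A$, of which there are $k-|X\cap A|\ge k-|A|>0$). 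I would let $Y\in\arg\min_{Z\in\mathcal B}w(Z)$ and prove that $w(Y)\le w(Z)$ for all $Z\supseteq A$; since $Y\in\mathcal B$, this finishes the proof.

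So suppose for contradiction that some $Y^*\supseteq A$ has $w(Y^*)<w(Y)$. Apply quasiconvexity of $w$ to $X$ and $Y^*$: there is a bijection $\mu\colon X\to Y^*$ with $\mu(x)=x$ for all $x\in X\cap Y^*$ and $w(X)+w(Y^*)\ge w\bigl(S\cup\mu(X\setminus S)\bigr)+w\bigl(\mu(S)\cup(X\setminus S)\bigr)$ for every $S\subseteq X$. For each point $z$ with $m_z(A)>m_z(X)$ we have $m_z(Y^*)\ge m_z(A)>m_z(X)$, so at least $m_z(A)-m_z(X)$ of the copies of $z$ in $Y^*$ are not common copies of $X$ and $Y^*$; since $m_z(X)<m_z(Y^*)$ also leaves no non-common copy of $z$ in $X$, each such copy is matched by $\mu$ to a copy of some point $\ne z$ in $X$. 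Choose $m_z(A)-m_z(X)$ of them, and let $S\subseteq X$ be the multiset of $\mu$-preimages of all these chosen copies (over all such $z$), so that $\mu(S)$ equals $A\setminus X$ exactly.

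The crux is the claim that $S\subseteq X\setminus A$. First, any copy $x\in S$ has $\mu(x)$ a copy of some point $z\ne x$, and this forces $m_x(Y^*)<m_x(X)$: otherwise all $m_x(X)$ copies of $x$ in $X$ would be common copies of $X$ and $Y^*$, hence $\mu$-fixed and mapped to copies of $x$, contradicting $\mu(x)\ne x$. It follows that $(i)$ no point $z$ with $m_z(A)>m_z(X)$ occurs in $S$ --- it would need $m_z(Y^*)<m_z(X)$, impossible since $m_z(Y^*)\ge m_z(A)>m_z(X)$; and $(ii)$ for any other point $p$ occurring in $S$, the copies of $p$ in $S$ are among the non-common copies of $p$ in $X$, of which there are $m_p(X)-m_p(Y^*)$, so $m_p(S)\le m_p(X)-m_p(Y^*)\le m_p(X)-m_p(A)$, the last step using $Y^*\supseteq A$. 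Statements $(i)$ and $(ii)$ together give $m_p(S)\le\max(0,m_p(X)-m_p(A))=m_p(X\setminus A)$ for every $p$, i.e.\ $S\subseteq X\setminus A$. Now put $Z^*:=\mu(S)\cup(X\setminus S)$, which is a configuration; its multiplicity at $p$ equals $m_p(A\setminus X)+m_p(X)-m_p(S)=m_p(X\cup A)-m_p(S)$, and using $0\le m_p(S)\le m_p(X\setminus A)$ together with the identity $m_p(X\cup A)-m_p(X\setminus A)=m_p(A)$, this lies between $m_p(A)$ and $m_p(X\cup A)$; hence $Z^*\in\mathcal B$. Finally, the quasiconvexity inequality for this $S$, combined with $w\bigl(S\cup\mu(X\setminus S)\bigr)\ge w(X)$ (as $X\in\arg\min w$), gives $w(Z^*)\le w(Y^*)$. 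Therefore $w(Y)\le w(Z^*)\le w(Y^*)<w(Y)$, a contradiction.

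The one step I expect to require genuine care is the multiset bookkeeping in the crux --- confirming that $S$ avoids precisely the points where $A$ exceeds $X$ and otherwise stays within the slack $X\setminus A$; the rest is a direct application of quasiconvexity and of the global optimality of $X$. (An alternative, slightly more verbose route inserts the points of $A$ one at a time, applying the single-point instance of the above to the minimizer over configurations containing the points inserted so far and chaining the resulting inclusions by transitivity; I would present the one-shot argument above.)
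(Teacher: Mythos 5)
Your proof is correct. It reaches the same destination as the paper's but by a genuinely different route. The paper takes $Y$ to be a minimizer over $\{Z\supseteq A\}$ for which the multiset $(Y-A)\setminus(X-A)$ is inclusion-minimal, and then rules out any offending element $y$ by a \emph{single} exchange: quasiconvexity between $X$ and $Y$ produces one $x\in X\setminus Y$ with $w(X)+w(Y)\ge w(X-x+y)+w(Y-y+x)$, optimality of $X$ kills the first term, and $Y-y+x$ contradicts the extremal choice of $Y$. You instead minimize over the box $\{Z: A\subseteq Z\subseteq X\cup A\}$ and perform all exchanges at once, invoking the full bijection form of quasiconvexity between $X$ and an arbitrary competitor $Y^*\supseteq A$ and swapping the whole sub-multiset $S$ with $\mu(S)=A\setminus X$ in one step. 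The core mechanism is identical (quasiconvex exchange plus global optimality of $X$ to discard one of the two resulting terms), but your batch version trades the paper's extremal/iterative argument for more delicate multiset bookkeeping --- which you carry out correctly: the verification that $S\subseteq X\setminus A$ and that $\mu(S)\cup(X\setminus S)$ lands back in the box are exactly the points that need care, and your multiplicity computations there check out. One stylistic note: your argument leans on the existence of a common-point-fixing bijection $\mu$ (the stronger form of quasiconvexity), whereas the paper's single-swap version only needs the existence of one good exchange partner; this is the same distinction the paper itself draws attention to in the footnote to \pref{cl:quasiPigeon}.
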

\begin{proof}
	Let $Y$ be chosen such that $(Y-A)\setminus (X-A)$ is minimal under inclusion and suppose towards a contradiction that there exists $y\in (Y-A)\setminus (X-A)$. By quasiconvexity, there exists $x\in X \setminus Y$ such that $w(X)+w(Y)\ge w(X-x+y)+w(Y-y+x)$. By choice of $X$, we have $w(X-x+y)\ge w(X)$. Combining these inequalities, we get $w(Y)\ge w(Y-y+x)$. But this contradicts the choice of $Y$ since we would rather have chosen $Y-y+x$.
\end{proof}

The next lemma shows that the greedy algorithm can be used to find a configuration
that minimizes the value of a quasiconvex function (as it was first shown
in~\cite{Dress1990}). The greedy algorithm starts with an arbitrary configuration
and processes its elements in order, replacing each element with the currently best
choice. That is, when the current configuration is $X$ and the algorithm processes
element $x\in X$, it replaces it with $x^*\in \arg\min_y w(X-x+y)$.

\begin{lemma}\label{lem:quasiGreedy}
	Let $w\in\cQ$, $A\subset M$ and $Y\in\arg\min_{Y\supseteq A} w(Y)$. Then there exists $X\in\arg\min_X w(X)$ with $Y-A\subset X-A$.
\end{lemma}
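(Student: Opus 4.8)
The plan is to mirror the proof of Lemma~\ref{lem:quasiSub} ``in reverse'': there one keeps a global minimizer fixed and pushes a constrained minimizer towards it; here I would keep $Y$ fixed and push a global minimizer $X$ towards $Y-A$. The two facts driving the argument are (i) $X$ globally minimizes $w$, so replacing any point of $X$ by another point cannot decrease $w(X)$; and (ii) $Y$ minimizes $w$ among configurations containing $A$, and since deleting from $Y$ a point of $Y-A$ leaves a configuration that still contains $A$, replacing a point of $Y-A$ in $Y$ by an arbitrary point cannot decrease $w(Y)$.

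Among all $X\in\arg\min_X w(X)$ for which the multiset difference $(Y-A)\setminus(X-A)$ is minimal under inclusion, I would pick one with $|X\cap Y|$ maximal, and suppose for contradiction that $(Y-A)\setminus(X-A)\neq\emptyset$. A short multiplicity count shows that then some point $y$ occurs strictly more often in $Y$ than in $X$, so, taking a quasiconvexity bijection $\mu\colon X\to Y$ that fixes $X\cap Y$ pointwise, some copy of $y$ in $Y$ equals $\mu(x)$ for a point $x\in X\setminus Y$. Quasiconvexity applied with $S=X-x$ yields
\begin{align*}
	w(X)+w(Y)\ge w(X-x+y)+w(Y-y+x).
\end{align*}
By (i), $w(X-x+y)\ge w(X)$; since $Y-y+x\supseteq A$, by (ii), $w(Y-y+x)\ge w(Y)$; hence both inequalities hold with equality and $X':=X-x+y$ is again a global minimizer.

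It remains to check that $X'$ strictly improves on $X$, contradicting its choice. Since $x\in X\setminus Y$, the configuration $X$ has a copy of $x$ to spare relative to $Y-A$, so deleting $x$ from $X$ does not enlarge $(Y-A)\setminus(X-A)$ at $x$; thus $(Y-A)\setminus(X'-A)$ is still a sub-multiset of $(Y-A)\setminus(X-A)$. Now one splits into two cases according to whether the multiplicity of $y$ in $X$ equals that in $A$ or is strictly smaller (the only possibilities, since $y\notin X-A$). In the first case the added copy of $y$ is ``visible'' outside $A$, so $(Y-A)\setminus(X'-A)$ is a \emph{strict} sub-multiset of $(Y-A)\setminus(X-A)$, contradicting minimality. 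In the second case that difference is unchanged, but $|X'\cap Y|=|X\cap Y|+1$ (the intersection grows at $y$ and, because $x\in X\setminus Y$, is unchanged at $x$), contradicting maximality of $|X\cap Y|$. Hence $(Y-A)\setminus(X-A)=\emptyset$, i.e.\ $Y-A\subseteq X-A$. The argument is conceptually light; the main obstacle is the multiset bookkeeping — choosing $y$ as a $\mu$-unmatched copy so that its partner $x$ lands in $X\setminus Y$ (which is exactly what prevents the difference from growing at $x$), and carrying out the two multiplicity cases, with the $|X\cap Y|$ tie-break handling the degenerate one.
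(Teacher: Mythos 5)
Your proof is correct and is precisely the exchange argument the paper intends by its one-line ``Similar to the proof of \pref{lem:quasiSub}''; you have moreover correctly identified the asymmetry that keeps the two proofs from being exact mirror images---when one modifies $X$ rather than $Y$, the copy of $y$ added to $X$ can be absorbed by $A$, so $(Y-A)\setminus(X-A)$ need not shrink, and a tie-break such as your secondary maximization of $|X\cap Y|$ is genuinely needed. Two cosmetic remarks: maximizing $|X\cap Y|$ alone already suffices (as in \pref{lem:quasiMin}), since any $y\in (Y-A)\setminus(X-A)$ necessarily occurs more often in $Y$ than in $X$; and your parenthetical ``the only possibilities, since $y\notin X-A$'' is slightly off in the multiset setting (the multiplicity of $y$ in $X$ could exceed that in $A$), though that omitted case behaves exactly like your first one.
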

\begin{proof}
	Similar to the proof of \pref{lem:quasiSub}.
\end{proof}

\begin{lemma}\label{lem:resolveMonotone}
	Let $w\in\cW^k_M(r)$, let $X\subseteq M$ be a $k$-point multiset and $x,y\in X$. If $X$ resolves\footnote{When $w(X)=w(X-x+y)+xy$, we say that $X$ ``resolves from $x$ to $y$''. If $y=r$ is the last request, we simply say that $X$ resolves from $x$. } from $x$ in $w$, then also $X-y+x$ resolves from $x$ in $w$.
\end{lemma}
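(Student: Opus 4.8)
The plan is to unwind the definitions. By definition, ``$X$ resolves from $x$'' means $w(X)=w(X-x+r)+xr$, and since $(X-y+x)-x+r=X-y+r$, the assertion ``$X-y+x$ resolves from $x$'' is exactly the equality $w(X-y+x)=w(X-y+r)+xr$. The inequality $w(X-y+x)\le w(X-y+r)+xr$ is immediate from $1$-Lipschitzness of $w$, using the matching between $X-y+x$ and $X-y+r$ that is the identity on $X-y$ and pairs the extra copy of $r$ with the extra copy of $x$ (cost $xr$). Hence the whole content of the lemma is the reverse inequality $w(X-y+x)\ge w(X-y+r)+xr$.

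To obtain it, I would apply quasiconvexity of $w$ to the pair of configurations $P:=X-y+x$ and $Q:=X-x+r$. A one-line multiset computation shows that, for every point $z$, the multiplicity of $z$ in $P$ minus its multiplicity in $Q$ equals $2\cdot\mathbf 1[z=x]-\mathbf 1[z=y]-\mathbf 1[z=r]$; hence, in the main case where $x,y,r$ are pairwise distinct, the multiset by which $P$ exceeds $Q$ is $\{x,x\}$ and the multiset by which $Q$ exceeds $P$ is $\{y,r\}$. The quasiconvexity bijection $\mu\colon P\to Q$ may be chosen to fix the common part (by the refinement of quasiconvexity from \cite{KoutsoupiasP95}), so it sends one of the two excess copies of $x$ to $y$ and the other to $r$. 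Taking $A$ to be $P$ with the excess copy of $x$ mapping to $y$ deleted, one checks $A\cup\mu(P\setminus A)=P-x+y=X$ and $\mu(A)\cup(P\setminus A)=(Q-y)+x=X-y+r$, so the quasiconvexity inequality becomes
\begin{align*}
w(X-y+x)+w(X-x+r)\ \ge\ w(X)+w(X-y+r).
\end{align*}
Rewriting the hypothesis as $w(X)-w(X-x+r)=xr$ and substituting yields exactly $w(X-y+x)\ge w(X-y+r)+xr$, which combined with the Lipschitz bound above gives the claimed equality.

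It remains to handle coincidences among $x,y,r$. If $x=y$ the claim is literally the hypothesis; if $x=r$ both ``resolves'' statements are vacuous, since $xr=0$ and the relevant configurations coincide; and if $y=r\neq x$ the same argument applies with $\mu$ now sending both excess copies of $x$ to copies of $y$, so that both mixed configurations equal $X$, the quasiconvexity inequality reads $w(X-y+x)+w(X-x+r)\ge 2w(X)$, and the conclusion follows as before using $w(X-x+r)=w(X)-xr$ together with the Lipschitz bound $w(X-y+x)\le w(X)+xy$. The only delicate point anywhere is the multiset bookkeeping in the quasiconvexity step — pinning down the excess multisets and verifying the two identities for the chosen $A$ — but this is purely combinatorial; the genuine idea is simply to recognize $(X-y+x,\ X-x+r)$ as the right pair to feed into quasiconvexity, after which everything is mechanical.
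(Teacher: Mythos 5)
Your proof is correct, and it takes a genuinely different route from the paper's. The paper argues by supposing that $X-y+x$ resolves from some other point $z\in X-y-x$, then runs a four-line cycle of (in)equalities: resolve both $X$ and $X-y+x$ down to configurations containing $r$, exchange $x$ and $z$ between $Cyzr$ and $Cx^2r$ via quasiconvexity, and undo the resolutions by Lipschitzness; since the cycle closes, every step is an equality, and the last equality is precisely the claimed resolution from $x$. You instead apply quasiconvexity once, directly to the pair $(X-y+x,\ X-x+r)$, swapping one excess copy of $x$ against the excess $y$ to get $w(X-y+x)+w(X-x+r)\ge w(X)+w(X-y+r)$, and then substitute the hypothesis and close with the Lipschitz upper bound. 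Your multiset bookkeeping for the excesses $\{x,x\}$ versus $\{y,r\}$ and the two identities for your choice of $A$ checks out, as do the degenerate cases $x=y$, $x=r$, and $y=r$. What each approach buys: yours is shorter, needs only a single quasiconvexity application, and avoids the fact (used implicitly by the paper) that under $w\in\cW(r)$ every configuration resolves from some single point; in exchange you lean on the refined form of quasiconvexity in which the bijection fixes the multiset intersection, which the paper's preliminaries do supply. Either argument is a valid proof of the lemma.
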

\begin{proof}
	Suppose that instead, $X-y+x$ resolves from some $z\in X-y-x$. Consider the $(k-3)$-point multiset $C:= X-y-x-z$.  Then
	\begin{align*}
	w(X) + w(X-y+x)&=w(Cxyz) + w(Cx^2z)\\
	&= w(Cyzr) + w(Cx^2r) +rx+ rz\\
	&\ge w(Cxyr) + w(Cxzr) +rx+rz\\
	&\ge w(Cxyz) + w(Cx^2 z),
	\end{align*}
	where the first inequality is by quasiconvexity and the last by $1$-Lipschitzness of $w$. Since the second and the last expression are the same, we have equality in all steps. But then the last step shows that $Cx^2z$ resolves from $x$. Since $Cx^2z=X-y+x$, the lemma follows.
\end{proof}

\section{The potential}\label{sec:pot}
We provide two different, but equivalent definitions of our potential function. The first formulation views the potential through the lens of the \emph{$m$-evader problem}, which is equivalent to the $k$-server problem when $m=n-k$. Thereafter, we will give a more compact and equivalent formulation of the same potential in the $k$-server view based on extending the metric space by adding antipodal points.

\subsection{The evader potential}
The $m$-evader problem is defined similarly to the $k$-server problem, but instead of $k$ servers there are $m$ evaders in the metric space, which must occupy $m$ different points at all times. When a point $r$ is requested, rather than moving a server towards $r$, an evader that might be located at $r$ has to move to a different point. The equivalence between the $k$-server problem and the $(n-k)$-evader problem follows by identifying a server configuration $C$ with the evader configuration $M\setminus C$.\footnote{This identification requires the server configuration to be a set rather than a multiset. This is no restriction on the power of $k$-server algorithms (online or offline).} Given a $k$-server work function $w$, we denote by $\hat w$ the corresponding evader work function, defined by $\hat w(C):=w(M\setminus C)$.

In the evader view, the potential $\hat\Phi$ is defined as follows. Let $y=(y_1,\dots,y_n)$ be a permutation of the points of the metric space $M$. Let
\begin{align}
	\hat{\Phi}_y(\hw)&:=\cl(y_1\dots y_{n-k-1})+\sum_{i=n-k}^n \min_{\substack{C\subseteq\{y_1,\dots,y_{i}\}\\|C|=n-k}}\left(\hw(C)+d(C,y_i^{n-k})\right)\nonumber\\
	\hat{\Phi}(\hw)&:=\min_y\hat{\Phi}_y(\hw).\label{eq:permPot}
\end{align}

\begin{theorem}\label{thm:evPot}
	Let $(M,d)$ be an $n$-point metric space. If for every $r\in M$ and every work function $w\in\cW_M^k(r)$ it holds that $\hPhi(\hw)=\hPhi_y(\hw)$ for a permutation $y$ of $M$ with $y_n=r$, then WFA is $k$-competitive on $M$.
\end{theorem}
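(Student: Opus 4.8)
The plan is to use the Extended Cost Lemma: it suffices to show that $\hPhi$ serves as a valid potential in the sense that, summed over the request sequence, the extended cost is bounded by $(k+1)\min_X w_T(X)$ plus a constant. Concretely, I would verify three properties of the function $\Phi$ (the $k$-server incarnation of $\hPhi$, i.e. $\Phi(w):=\hPhi(\hw)$): (i) a \emph{bounded range} property, $|\Phi(w)| \le c_M$ for some constant depending only on $M$ whenever $w$ is normalized so that $\min_X w(X)=0$; (ii) an \emph{initial value} bound; and (iii) the key \emph{amortization inequality}: for every $w\in\cW_M$ and every request $r$, writing $w'=w\land r$,
\begin{align*}
	\max_X\bigl[w'(X)-w(X)\bigr] + \Phi(w') - \Phi(w) \le (k+1)\bigl(\min_X w'(X) - \min_X w(X)\bigr).
\end{align*}
Summing (iii) telescopes the potential, and (i) absorbs the boundary terms into $c_M$, which is exactly what the Extended Cost Lemma needs.

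The heart of the argument is inequality (iii), and this is where the hypothesis of the theorem enters. The hypothesis says that for the work function $w'=w\land r$, which lies in $\cW_M^k(r)$, the outer minimum in the definition of $\hPhi$ is attained by a permutation $y$ with $y_n=r$. Fix such a $y$. Because $y_n=r$ and every support configuration of $w'$ contains $r$ (equivalently, every support configuration of $\hw'$ avoids $r$, so avoids $y_n$), the last summand in $\hPhi_y(\hw')$ — the term $\min_{C\subseteq\{y_1,\dots,y_n\},|C|=n-k}(\hw'(C)+d(C,y_n^{n-k}))$ — simplifies: the optimal $C$ need not use $y_n$, so this term equals $\min_{C\ni\hspace{-3pt}\not\ni y_n}\hw'(C)+\dots$, and one shows it equals a minimizer-type quantity tied to $\min_X w'(X)-d(r^k,X)$, i.e. to the extended-cost-maximizing configuration via the Duality Lemma. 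I would then compare $\hPhi_y(\hw')$ against $\hPhi_y(\hw)$ (using the \emph{same} permutation $y$ for $w$, which only increases $\hPhi(\hw)$ beyond $\hPhi_y(\hw)$, giving the inequality in the right direction since $\hPhi(\hw)=\min_y\hPhi_y(\hw)\le\hPhi_y(\hw)$) and against $\hPhi_{y'}(\hw)$ for a cleverly chosen permutation $y'$ — presumably one obtained from $y$ by cyclically shifting $r=y_n$ to an earlier position. Term by term, the summands of $\hPhi_y$ are minima of $1$-Lipschitz functions of $\hw$ shifted by fixed distances, so each changes by a controlled amount when $\hw$ changes to $\hw'$; the extended cost $\max_X[w'(X)-w(X)]$ should be matched precisely by the change in the one "active" summand corresponding to $i=n$, while the remaining $k$ summands (indices $i=n-k,\dots,n-1$) each contribute at most the change in $\min_X w'(X)$, accounting for the factor $k+1$.

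I expect the main obstacle to be the bookkeeping in this term-by-term comparison: one must simultaneously (a) exploit that the permutation $y$ is \emph{optimal} for $w'$ (so that internal structure of $\hPhi_y(\hw')$, e.g. which sub-configurations attain the inner minima, is rigid and can be read off via quasiconvexity and the support/minimizer machinery of Lemmas~\ref{lem:quasiMin}–\ref{lem:quasiGreedy}), and (b) only use a \emph{feasible}, not necessarily optimal, permutation for $w$, so the comparison is one-sided in the correct direction. Reconciling these — showing that the optimal $y$ for $w'$ yields a sufficiently good (even if suboptimal) decomposition for $w$, with the deficit bounded by $(k+1)\,\Delta$-type constants — is the delicate part. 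A secondary technical point is establishing (i), the uniform bound $|\Phi|\le c_M$: this follows because each of the $k+1$ summands is sandwiched between $\min_X w(X)$ and $\min_X w(X)+O(\Delta)$ by $1$-Lipschitzness, so after normalization $\Phi$ lies in $[0, O(k\Delta + \cl(M))]$, but one must check the lower bound carefully since subtracting $d(C,y_i^{n-k})$ could in principle make a summand negative relative to the baseline.
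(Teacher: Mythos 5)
Your high-level plan shares the paper's key mechanism---fix the permutation $y$ that is optimal for $w'=w\land r$ with $y_n=r$, use $\hPhi(\hw)\le\hPhi_y(\hw)$ to compare both work functions under the \emph{same} $y$, and extract the extended cost from the $i=n$ summand via the duality lemma---but your amortization inequality (iii) is false as stated, and this breaks the accounting. With your definition $\Phi(w):=\hPhi(\hw)$, the potential is pointwise nondecreasing in $\hw$ (each summand of $\hPhi_y$ is a minimum of nondecreasing quantities), and under the theorem's hypothesis one in fact has $\Phi(w')-\Phi(w)\ge\max_X\left[w'(X)-w(X)\right]\ge 0$. On a step where the extended cost is positive but $\min_X w(X)$ does not change (which happens routinely), the left-hand side of (iii) is strictly positive while the right-hand side is zero. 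Relatedly, your claim that each of the summands $i=n-k,\dots,n-1$ changes by at most $\min_X w'(X)-\min_X w(X)$ is neither true in general (each such summand is a minimum of a pointwise-increasing function and can grow by as much as the extended cost) nor needed.

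The correct bookkeeping is the reverse of yours: the per-step inequality is simply $\max_X\left[w'(X)-w(X)\right]\le\hPhi(\hw')-\hPhi(\hw)$, obtained by dropping the first $k$ summands (they can only increase) and applying duality to the last one; the factor $(k+1)\min_X w_T(X)$ enters only \emph{once}, as an endpoint bound $\hPhi(\hw_T)\le(k+1)\min_X w_T(X)+c_M$ via $1$-Lipschitzness, not per step. Your scheme can be repaired by replacing $\Phi$ in (iii) with $(k+1)\min_X w(X)-\hPhi(\hw)$, at which point (iii) becomes exactly the per-step inequality above and your boundedness property (i) becomes the endpoint bound. The auxiliary permutation $y'$ obtained by shifting $r$ to an earlier position is unnecessary, and the ``reconciliation'' you flag as the main obstacle dissolves: optimality of $y$ for $w'$ and mere feasibility of $y$ for $w$ are already aligned in the right direction by $\hPhi(\hw)\le\hPhi_y(\hw)$.
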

\begin{proof}
	Consider a $k$-server instance on $M$ with a request sequence $r_1,\dots,r_T$ and associated sequence of work functions $w_0,\dots,w_T$. We first show that at each time $t$, the change in potential is an upper bound on the extended cost.
	
	By the premise of the lemma, $\hPhi(\hw_t)=\hPhi_y(\hw_t)$ for some $y$ with $y_n=r_t$. Thus,
	\begin{align*}
		\hPhi(\hw_t)-\hPhi(\hw_{t-1})&\ge \hPhi_y(\hw_t)-\hPhi_y(\hw_{t-1})\\
		&\ge \min_{\substack{C\subseteq M\\|C|=n-k}}\left(\hw_t(C)+d(C,r_t^{n-k})\right) - \min_{\substack{C\subseteq M\\|C|=n-k}}\left(\hw_{t-1}(C)+d(C,r_t^{n-k})\right)\\
		&= \min_{\substack{X\subseteq M\\|X|=k}}\left(w_t(X)-d(X,r_t^k)\right) - \min_{\substack{X\subseteq M\\|X|=k}}\left(w_{t-1}(X)-d(X,r_t^k)\right)\\
		&=\max_{X}w_t(X)-w_{t-1}(X),
	\end{align*}
	where the first inequality uses $\hPhi(\hw_{t-1})\le \hPhi_y(\hw_{t-1})$, the second inequality uses $y_n=r_t$ and the fact that $\hw_{t-1}(C)\le \hw_t(C)$ for each $C$, the first equation translates evader work functions to server work functions and uses $d(C,r_t^{n-k})=d(M,r_t^n)-d(M\setminus C,r_t^k)$, and the second equation is due to the duality lemma, which says that the same $X$ can be chosen in both minima and the maximum. So indeed, the change in potential upper bounds the extended cost.
	
	Now, we can bound the total extended cost by
	\begin{align*}
		\sum_{t=1}^T\max_X\left[w_t(X)-w_{t-1}(X)\right]&\le \hPhi(\hw_T)\\
		&\le (k+1)\cdot\min_{X}w_T(X)+c_M,
	\end{align*}
	where the last inequality is due to the fact that $\hPhi(\hw_T)$ is a sum of distances (which are absorbed by the constant $c_M$) and $k+1$ work function values, each of which differs from $\min_{X}w_T(X)$ by at most $k$ times the diameter of $M$ due to $1$-Lipschitzness of $w_T$ (and the diameters are also absorbed by $c_M$). The theorem now follows from the extended cost lemma.
\end{proof}

\subsection{\texorpdfstring{The $k$-server potential}{The k-server potential}}
We now derive an equivalent but simpler expression for the aforementioned potential. To formulate it, we need the notion of antipodal points.

Let $\Delta$ be the diameter of $M$. A point $\bar{p}\in M$ is called the \emph{antipode} of another point $p\in M$ if for each $x\in M$, $px+x\bar p=p\bar{p}=\Delta$. On some metric spaces such as the circle, every point has an antipode. As mentioned in~\cite{Koutsoupias99}, every metric space can be extended so that every point has an antipode: To achieve this, we add to $M$ another copy of the same points, $\bar M=\{\bar p\colon p\in M\}$, and define distances by $\bar p\bar q=pq$ and $\bar pq=2\Delta-pq$ for $p,q\in M$. It is easy to check that $M\cup\bar M$ is still a metric space (of diameter $2\Delta$) where $\bar p$ and $p$ are antipodes of each other.

Consider a metric space $M$ where every point has an antipode. Let $x_1,\dots,x_k\in M$. We define the $k$-server potential $\Phi$ via
\begin{align}
	\Phi_{x_1,\dots,x_k}(w)&:=\sum_{i=0}^k w(\bar{x}_i^i x_{i+1}\dots x_k)\nonumber\\
	\Phi(w) &:= \min_{x_1,\dots,x_k}\Phi_{x_1,\dots,x_k}(w).\label{eq:antiPot}
\end{align}

The following lemma states that the two potential functions differ by a fixed constant depending on $M$ and are therefore equivalent.
\begin{lemma}
	Let $M$ be a pseudo-metric space of diameter $\Delta$ where every point has an antipode and there are $k$ copies of each point.\footnote{It is only a pseudo-metric because the distance between two copies of the same point is $0$. We use the assumption of several copies of the same point because the definition of $\Phi_{x_1,\dots,x_k}$ allows points to repeat, whereas $\Phi_y$ requires $y$ to be a permutation.} For any work function $w\in \cW_M^k$ and any permutation $y=(y_1,\dots,y_n)$ of $M$,
	\begin{align*}
		\Phi_{y_{n-k+1}\dots y_n}(w)=\hPhi_y(\hw)-\cl(M)+\frac{k(k+1)}{2}\Delta.
	\end{align*}
\end{lemma}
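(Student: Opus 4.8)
The plan is to show that the two expressions agree term by term, once we align the $k+1$ summands of $\Phi_{y_{n-k+1}\dots y_n}(w)$ with the $k+1$ summands of the second block of $\hPhi_y(\hw)$ (the minima over $C$), and dispose of the leading clique term $\cl(y_1\dots y_{n-k-1})$ together with the additive constants. Writing $m=n-k$ and $x_j:=y_{m+j}$ for $j=1,\dots,k$, the $i$-th summand of $\Phi$ (for $i=0,\dots,k$) is $w(\bx_i^i x_{i+1}\dots x_k)$, i.e.\ a $k$-server configuration consisting of $i$ copies of the antipode $\bx_i$ of $x_i$ together with $x_{i+1},\dots,x_k$; note for $i=0$ this is $w(x_1\dots x_k)$ and for $i\ge 1$ it involves $x_i$ only through its antipode. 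On the evader side, the summand for index $n-k+j$, $j=0,\dots,k$ (reindexing so $i=k-(n-k+j)$... — more cleanly: the evader sum runs over $i$ from $m$ to $n$, and we match the evader index $i=n-\ell$ with the server index $\ell$), is $\min_{C\subseteq\{y_1,\dots,y_i\},|C|=m}\bigl(\hw(C)+d(C,y_i^{m})\bigr)$.

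First I would translate each evader summand into server language using $\hw(C)=w(M\setminus C)$ and the identity $d(C,y_i^{m})=d(M,y_i^{n})-d(M\setminus C,y_i^{k})$, exactly as in the proof of \pref{thm:evPot}; this turns the minimum over $m$-subsets $C$ of $\{y_1,\dots,y_i\}$ into a maximum over $k$-subsets $X=M\setminus C$ that contain $\{y_{i+1},\dots,y_n\}$, of $w(X)-d(X,y_i^k)$, plus the constant $d(M,y_i^n)$. The key geometric step is then to evaluate, for a configuration $X$ forced to contain the ``tail'' $\{y_{i+1},\dots,y_n\}$ and with its remaining $k-(n-i)$ free points to be chosen among $\{y_1,\dots,y_i\}$, the quantity $\max_X\bigl(w(X)-d(X,y_i^k)\bigr)$. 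Here I would use the antipode identity $d(p,y_i)+d(y_i,\bar y_i)=d(y_i,\bar y_i)=\Delta$, i.e.\ $d(p,y_i)=\Delta-d(p,\bar y_i)$, so that $-d(X,y_i^k)=-k\Delta+d(X,\bar y_i^k)$, and then invoke $1$-Lipschitzness of $w$ (Eq.~\eqref{eq:Lipschitz}) in the form $w(X)+d(X,\bar y_i^k)\le w(\bar y_i^k)+2d(X,\bar y_i^k)\le\dots$ — more precisely, $w(X)-d(X,y_i^k)=w(X)+d(X,\bar y_i^k)-k\Delta$, and moving each free point of $X$ to $\bar y_i$ can only... here one checks that the maximum is attained when every free point is replaced by $\bar y_i$ (moving a point $p$ to $\bar y_i$ changes the bracket by $w(X')-w(X)+d(p,\bar y_i)\ge -d(p,\bar y_i)+d(p,\bar y_i)=0$ by $1$-Lipschitzness, where $X'=X-p+\bar y_i$), so the optimal $X$ is $\bar y_i^{\,k-(n-i)}y_{i+1}\dots y_n$, matching the $i$-th term of $\Phi$ up to the constant $d(M,y_i^n)$. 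I would carry this out for all $k+1$ values of the index, treating the endpoint $i=n$ (evader configuration $C\subseteq$ everything, server term $w(\bar y_n^k)$, but in $\Phi$ the $i=0$ term $w(x_1\dots x_k)=w(y_{m+1}\dots y_n)$ — so the alignment is $\Phi$-index $i\leftrightarrow$ evader-index $n-i$, and one should double-check the boundary term $i=0$ of $\Phi$, which has no antipode and should correspond to the evader term with $C$ ranging over all $m$-subsets of $M$) separately.

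Finally I would account for the constant: summing the per-term constants $d(M,y_i^n)=\sum_{p\in M}d(p,y_i)$ over the relevant indices, plus the $-k\Delta$ contributions and the leading $\cl(y_1\dots y_{m-1})$ term on the evader side, and checking the total equals $-\cl(M)+\frac{k(k+1)}{2}\Delta$; this is a bookkeeping computation using that $\cl(M)=\cl(y_1\dots y_n)$ decomposes into $\cl(y_1\dots y_{m-1})$ plus cross terms plus $\cl$ of the tail, and that distances to a point and its antipode sum to $\Delta$. I expect the main obstacle to be the second step — verifying that the inner maximum over admissible $k$-configurations is genuinely attained at the all-antipodes configuration, where one must be careful that $1$-Lipschitzness gives the inequality in the right direction and that the $i$ copies of $\bar y_i$ (rather than fewer) are optimal, i.e.\ that it never helps to leave a ``free'' slot at some $y_j$ with $j\le i$ instead of moving it to $\bar y_i$. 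The endpoint cases $i=0$ and $i=n$, and the exact matching of indices between the $\Phi$-sum (running $0$ to $k$ over antipode-counts) and the $\hPhi$-sum (running $n-k$ to $n$), will also need care but should be routine once the generic term is understood.
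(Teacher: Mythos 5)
Your plan follows the same route as the paper's proof: rewrite each evader summand in server language, use the antipode identity and $1$-Lipschitzness to argue that the inner optimum is attained when all free points sit at $\bar y_i$, then do the constant bookkeeping. However, the central step as written does not go through. First, the optimization direction and the Lipschitz bound are both applied backwards. Substituting $X=M\setminus C$ turns $\min_C[\hw(C)+d(C,y_i^{n-k})]$ into $d(M,y_i^n)+\min_X[w(X)-d(X,y_i^k)]$ --- still a \emph{minimum}, not a maximum. Replacing a free point $p\in X$ by $\bar y_i$ changes the bracket by $[w(X')-w(X)]-d(p,\bar y_i)$ (the matching term gains $d(\bar y_i,y_i)-d(p,y_i)=d(p,\bar y_i)$, which enters with a minus sign), and the useful bound is $w(X')-w(X)\le d(p,\bar y_i)$, giving a change $\le 0$: the move never \emph{increases} the objective, so the all-antipode configuration attains the minimum. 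Your version (``changes the bracket by $w(X')-w(X)+d(p,\bar y_i)\ge 0$, so the maximum is attained there'') has both signs flipped; the conclusion happens to be the right one, but the stated argument does not establish it. Second, equality also requires the configuration $\bar y_i^{\,k-(n-i)}y_{i+1}\dots y_n$ to be \emph{feasible}, i.e., that at least $k-(n-i)$ copies of $\bar y_i$ lie in $\{y_1,\dots,y_i\}$; this is precisely what the hypothesis of $k$ copies of each point guarantees (at most $n-i\le k$ copies can be hidden in the tail $y_{i+1},\dots,y_n$), and your plan never invokes it.

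A smaller but still substantive slip is the index alignment, which you state backwards. The $\Phi$-term with $j$ antipodes, $w(\bar x_j^{\,j}x_{j+1}\dots x_k)=w(\bar y_{n-k+j}^{\,j}y_{n-k+j+1}\dots y_n)$, matches the evader summand with index $i=n-k+j$ (which has exactly $j=k-(n-i)$ free slots), not $i=n-j$. In particular the $j=0$ term $w(x_1\dots x_k)$ corresponds to the evader index $i=n-k$, where $C$ is forced to equal $\{y_1,\dots,y_{n-k}\}$, whereas the evader term with $C$ ranging over all $(n-k)$-subsets of $M$ ($i=n$) corresponds to $w(\bar y_n^{\,k})$, i.e., $j=k$. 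With the correct alignment the bookkeeping does close as you anticipate: the per-term constants reduce to $\sum_{j<i}y_jy_i$, which together with $\cl(y_1\dots y_{n-k-1})$ sum to $\cl(M)$, and the $(k-n+i)\Delta$ terms sum to $\tfrac{k(k+1)}{2}\Delta$.
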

\begin{proof}
	Subtracting $\cl(M)$ from the evader potential and using server work functions instead of evader work functions, we have
	\begin{align*}
		\hPhi_y(\hw)-\cl(M) &= \sum_{i=n-k}^n \min_{\substack{C\supseteq\{y_{i+1},\dots,y_{n}\}\\|C|=k}}\left(w(C)-\sum_{p\in C\cap\{y_1,\dots,y_i\}}py_i\right).
	\end{align*}
	Notice that the minimum in the summand for $i$ is achieved when $C\setminus\{y_{i+1},\dots,y_n\}$ consists of $k-n+i$ copies of the antipodal point $\bar y_i$. Thus, the expression is equal to
	\begin{align*}
		\sum_{i=n-k}^n \left(w(\bar{y_i}^{k-n+i}y_{i+1}\dots y_n)-(k-n+i)\Delta\right) &= \sum_{i=n-k}^n w(\bar{y_i}^{k-n+i}y_{i+1}\dots y_n)-\frac{k(k+1)}{2}\Delta\\
		&=\Phi_{y_{n-k+1}\dots y_n}(w)-\frac{k(k+1)}{2}\Delta.\qedhere
	\end{align*}
\end{proof}

\begin{corollary}\label{cor:serverPot}
	Let $(M,d)$ be a metric space where every point has an antipode. If for every $r\in M$ and every work function $w\in\cW_M^k(r)$ it holds that $\Phi(w)=\Phi_{x_1\dots x_k}(w)$ for some $x_1,\dots x_k\in M$ with $x_k=r$, then WFA is $k$-competitive on $M$.
\end{corollary}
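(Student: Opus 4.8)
The plan is to derive this from Theorem~\ref{thm:evPot} together with the lemma above that identifies $\Phi_{x_1\dots x_k}(w)$ with $\hPhi_y(\hw)$ up to an additive constant whenever $(x_1,\dots,x_k)$ is the tuple of last $k$ entries of the permutation $y$. The one mismatch to bridge is that $\Phi(w)$ is a minimum over all $k$-tuples $(x_1,\dots,x_k)$, possibly with repetitions, whereas $\hPhi(\hw)$ is a minimum over honest permutations. To deal with this, I would pass to the (pseudo-)metric space $M'$ obtained from $M$ by adjoining $k$ copies of each point. Adjoining copies changes neither the $k$-server problem (servers may always be placed on distinct copies) nor, up to the obvious identification of copies, the class of $1$-Lipschitz quasiconvex functions or the value of $\Phi$; and $M'$ still has an antipode for every point and now has at least $k$ copies of each point. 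Hence it suffices to prove $k$-competitiveness of WFA on $M'$, the hypothesis of the corollary holds on $M'$, and the equivalence lemma applies to $M'$.

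Fix $r\in M'$ and $w\in\cW_{M'}^k(r)$; by hypothesis $\Phi(w)=\Phi_{x_1\dots x_k}(w)$ for some $x_1,\dots,x_k$ with $x_k=r$. Since $M'$ has at least $k$ copies of each point and $w$ is constant across copies, we may replace any repeated entries by distinct copies, which changes neither $\Phi_{x_1\dots x_k}(w)$ nor the requirement $x_k=r$; so assume $x_1,\dots,x_k$ are pairwise distinct and extend them to a permutation $y$ of $M'$ whose last $k$ entries, in order, are $x_1,\dots,x_k$, so that the last entry of $y$ is $r$. Applying the equivalence lemma to $y$ and to every permutation of $M'$ yields, with $\Delta$ the diameter of $M'$ (unchanged by adding copies),
\begin{align*}
	\hPhi_y(\hw)&=\Phi_{x_1\dots x_k}(w)+\cl(M')-\frac{k(k+1)}{2}\Delta=\Phi(w)+\cl(M')-\frac{k(k+1)}{2}\Delta,\\
	\hPhi(\hw)&=\min_{y'}\hPhi_{y'}(\hw)=\Phi(w)+\cl(M')-\frac{k(k+1)}{2}\Delta,
\end{align*}
where the second line holds because, thanks to the copies, every $k$-tuple over $M'$ is realized as the final block of some permutation, so permutations and $k$-tuples minimize the (respectively translated) potential to the same value. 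Comparing the two lines, $\hPhi(\hw)=\hPhi_y(\hw)$ for a permutation $y$ whose last entry is $r$, which is precisely the hypothesis of Theorem~\ref{thm:evPot} on $M'$; that theorem then gives $k$-competitiveness of WFA on $M'$, hence on $M$.

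I expect the main obstacle to be purely a matter of bookkeeping: verifying carefully that adjoining copies leaves the $k$-server problem, the class of work functions (up to identifying copies), and the values of $\Phi$ truly unchanged -- so that the corollary's hypothesis transfers to $M'$ and competitiveness transfers back -- and that a $\Phi$-minimizing tuple may be taken with distinct coordinates. Everything else is a direct translation through the equivalence lemma; in particular no new use of quasiconvexity or of the triangle inequality is needed beyond what is already packaged into Theorem~\ref{thm:evPot} and that lemma. (An alternative that avoids $M'$ would be to argue directly that some $\Phi$-minimizer has pairwise distinct coordinates with last coordinate $r$ -- for instance by replacing a repeated coordinate with an antipode -- but routing through $M'$ is cleaner to state.)
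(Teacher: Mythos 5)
Your proposal is correct and follows exactly the route the paper intends: the corollary is stated without proof precisely because it is meant to follow from Theorem~\ref{thm:evPot} via the equivalence lemma, whose hypothesis of a pseudo-metric with $k$ copies of each point is there for the very tuple-versus-permutation mismatch you bridge. Your write-up just makes explicit the bookkeeping (transfer of the hypothesis and of competitiveness between $M$ and $M'$, taking the minimizing tuple with distinct copies) that the paper leaves implicit.
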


\section{Interpretation as a lazy adversary potential}\label{sec:interpretation}
\subsection{The implicitly defined potential by Chrobak and Larmore}
Chrobak and Larmore \cite{ChrobakL92} gave an \emph{implicit} definition of a potential that they conjectured to prove the $k$-server conjecture. This potential captures exactly a type of lazy adversary. To give a precise definition, we first need some additional notation.

For $r\in M$ and a work function $w\in \cW$, denote by $\nabla(w,r):=\max_{A}(w\land r)(A)-w(A)$ the extended cost of request $r$ on $w$. For a request sequence $\rho=(r_1,\dots,r_T)\in M^*$, let
\begin{align*}
\nabla(w,\rho):=\sum_{t=1}^T\nabla(w_{t-1},r_t)
\end{align*}
be the total extended cost, where $w_t=w\land r_1\land r_2\land \dots \land r_t$ is the updated work function after the first $t$ requests. The potential conjectured by Chrobak and Larmore is given by
\begin{align*}
\tPhi(w)&:=\min_X\tPhi_X(w)\qquad\qquad\qquad
\intertext{where the maximum is taken over configurations $X$ and}
\tPhi_{X}(w)&:=-\cl(X) + (k+1)w(X)-\sup_{\rho\in X^*}\nabla(w,\rho).
\end{align*}
Because of the term $\sup_{\rho\in X^*}\nabla(w,\rho)$, this potential captures exactly the worst-case extended cost when the future request sequence consists only of points in $X$, until the work function is a cone\footnote{A work function is a \emph{cone} if its support contains only a single configuration.} with support $\{X\}$. An adversary constructing such a request sequence can be thought of as ``lazy'' because it wants to force the online algorithm to the offline configuration $X$ before it requests different points. The additional term $\cl(X)$ is needed because of extended cost being incurred when passing from one cone to a different cone. The definition of $\tPhi$ is only implicit because of the supremum over request sequences $\rho\in X^*$. It was conjectured in \cite{ChrobakL92} that $\tPhi(w\land r)-\tPhi(w)\ge \nabla(w,r)$ for any (reachable) work function $w$ and request $r$. This would imply the $k$-server conjecture similarly to the proof of Theorem~\ref{thm:evPot}. They also conjectured that $\tPhi_X(w\land r)$ is minimized for a configuration $X$ containing $r$, and more specifically that it is minimized by a configuration $X\in\supp(w\land r)$. This would imply the previous conjectures. We show that for $k=3$, the potential $\tPhi$ matches our potential $\Phi$. For $k\ge 4$, our potential captures a more restricted type of lazy adversary. As we will show in Section~\ref{sec:nonlazy} that our potential fails to bound the extended cost for $k=3$ on the circle, this disproves the conjectures from \cite{ChrobakL92} and yields the surprising insight that the worst-case adversary on the circle is not lazy -- unlike the adversary for all cases where WFA is known to be $k$-competitive.

\subsection{Relationship to our potential}
Our next lemma shows that our potential $\Phi$ captures a more restricted adversarial strategy, where the configuration $X$ is ordered as $x_1,\dots,x_k$ and the next request in $\rho$ is always to the point $x_i$ with $i$ maximal that leads to a change of the work function. We will show later that for $k=3$, this imposes no additional restriction.

For fixed $x_1,\dots,x_k\in M$ and a work function $w\in \cW_M^k$, define a request sequence $r_1,r_2,\dots,r_T$ as follows. Let $w_t=w\land r_1\land r_2\land \dots \land r_t$ be the updated work function after the first $t$ requests. We define $r_t=x_i$ for $i$ maximal such that $w_{t-1}\land x_i\ne w_{t-1}$; if no such $i$ exists, the request sequence ends, $T=t-1$, and $w_{T}$ is a cone with support $\{\{x_1,\dots,x_k\}\}$. 

\begin{lemma}\label{lem:permIntuition}
	\begin{align*}
	\Phi_{x_1,\dots,x_k}(w)=\frac{k(k+1)}{2}\Delta-\cl(x_1,\dots,x_k) + (k+1)w(x_1\dots x_k)-\sum_{t=1}^T\nabla(w_{t-1},r_t).
	\end{align*}
\end{lemma}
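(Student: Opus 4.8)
The plan is to prove the identity by induction on the length $T$ of the request sequence $r_1,\dots,r_T$ constructed from $x_1,\dots,x_k$ and $w$, peeling off requests one at a time from the \emph{front} of the sequence and tracking how both sides change. Write $w' = w \land r_1$ and note that the request sequence constructed from $x_1,\dots,x_k$ and $w'$ is exactly $r_2,\dots,r_T$ (the ``greedy from the top'' rule is memoryless, depending only on the current work function). So if we let $S(w) := \sum_{t=1}^T \nabla(w_{t-1},r_t)$ denote the total extended cost term appearing in the lemma, then $S(w) = \nabla(w,r_1) + S(w')$. The right-hand side of the claimed identity therefore changes, as we pass from $w$ to $w'$, by $(k+1)\bigl(w(x_1\dots x_k) - w'(x_1\dots x_k)\bigr) + \nabla(w,r_1)$. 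Since $r_1 \in \{x_1,\dots,x_k\}$, we have $w(x_1\dots x_k) = w'(x_1\dots x_k)$, so the right-hand side increases by exactly $\nabla(w,r_1)$ when going from $w'$ to $w$. The task thus reduces to showing the left-hand side satisfies $\Phi_{x_1,\dots,x_k}(w) = \Phi_{x_1,\dots,x_k}(w') + \nabla(w,r_1)$, i.e. that $\Phi_{x_1,\dots,x_k}$ increases by exactly the extended cost under the specific request $r_1 = x_j$ (with $j$ maximal such that the work function changes). The base case is when no request is issued, i.e. $w$ is already a cone with support $\{x_1\dots x_k\}$; then $T=0$, and one checks directly that $\Phi_{x_1,\dots,x_k}(w) = \sum_{i=0}^k w(\bar x_i^i x_{i+1}\dots x_k)$ collapses: since $w(C) = w(x_1\dots x_k) + d(C, x_1\dots x_k)$ for all $C$, each summand equals $w(x_1\dots x_k) + \cl(\text{something})$, and summing the distance terms yields $\frac{k(k+1)}{2}\Delta - \cl(x_1,\dots,x_k)$ plus $(k+1)w(x_1\dots x_k)$.

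The heart of the proof is the inductive step: showing $\Phi_{x_1,\dots,x_k}(w \land x_j) - \Phi_{x_1,\dots,x_k}(w) = \nabla(w, x_j)$ when $j$ is the largest index for which $w \land x_j \neq w$. Recall $\Phi_{x_1,\dots,x_k}(w) = \sum_{i=0}^k w(\bar x_i^i x_{i+1}\dots x_k)$. First observe that for $i < j$, the configuration $\bar x_i^i x_{i+1}\dots x_k$ contains $x_j$ as one of its coordinates (since $i+1 \le j \le k$), hence the basic property ``if $r \in C$ then $(w\land r)(C) = w(C)$'' gives that these summands are unchanged. So only the summands with $i \ge j$ can change. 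For these, the configuration is $\bar x_i^i x_{i+1}\dots x_k$, which does \emph{not} contain $x_j$. Here I would use maximality of $j$: for every index $i > j$, we have $w \land x_i = w$, i.e. $x_i$ is already ``absorbed'' by $w$ — equivalently $w(C) = w(C - x_i + y) + x_i y$ for a suitable resolving move, and in particular the minimizer structure near $x_i$ is frozen. The claim to extract is that for $i > j$ the summand $w(\bar x_i^i x_{i+1}\dots x_k)$ is also unchanged by the update $w \mapsto w \land x_j$, because the cheapest way for the work function to reach that configuration can be routed through a configuration containing $x_j$ (or more carefully: $\nabla$-type minimizers stabilize). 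This leaves only the single summand $i = j$, namely $w(\bar x_j^j x_{j+1}\dots x_k)$, and the goal becomes $(w\land x_j)(\bar x_j^j x_{j+1}\dots x_k) - w(\bar x_j^j x_{j+1}\dots x_k) = \nabla(w,x_j) = \max_A (w\land x_j)(A) - w(A)$.

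For that final equality I would invoke the duality lemma: $\nabla(w,x_j) = \max_A (w\land x_j)(A) - w(A)$ is attained at $A \in \arg\min_X w(X) - d(x_j^k, X)$, i.e. at a minimizer of $w$ with respect to $x_j$. The configuration $\bar x_j^j x_{j+1}\dots x_k$ is designed precisely so that it is such a minimizer \emph{given} that $x_{j+1},\dots,x_k$ are already absorbed — one uses Lemma~\ref{lem:quasiSub} (or Lemma~\ref{lem:quasiMin}) repeatedly: among configurations containing the fixed ``tail'' $x_{j+1}\dots x_k$, the one minimizing $w(X) - d(x_j^k, X)$ puts all remaining $j$ servers at the antipode $\bar x_j$ of $x_j$ (moving a server to $\bar x_j$ gains the full diameter $\Delta$ in the $-d(x_j^k, \cdot)$ term, which by $1$-Lipschitzness and the antipode property is the maximum possible gain per server). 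The absorption of the tail points, guaranteed by maximality of $j$ together with Lemma~\ref{lem:resolveMonotone}, is what lets us restrict attention to configurations of this form. Then $(w\land x_j)(\bar x_j^j x_{j+1}\dots x_k) - w(\bar x_j^j x_{j+1}\dots x_k)$ equals the extended cost by the duality lemma, completing the step.

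The main obstacle I anticipate is the bookkeeping in the inductive step showing that the summands with $i > j$ are genuinely unchanged — this is where one must carefully exploit maximality of $j$ and the absorption of the points $x_{j+1},\dots,x_k$, presumably via Lemma~\ref{lem:resolveMonotone} and the fact that resolving structure at $x_i$ means the relevant minimizing configurations can be chosen to contain $x_i$, hence are unaffected by $\land x_j$. A secondary subtlety is making sure the antipode argument (``put all free servers at $\bar x_j$'') correctly identifies $\bar x_j^j x_{j+1}\dots x_k$ as a minimizer with respect to $x_j$; this needs the pseudo-metric setup with $k$ copies of each point so that the multiset $\bar x_j^j$ is legitimate, and it needs the defining property $x_j p + p\bar x_j = \Delta$ of antipodes to conclude optimality. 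Everything else is a routine telescoping of distance terms as in the base case.
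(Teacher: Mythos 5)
Your proposal is correct and follows essentially the same route as the paper's proof: a telescoping argument whose base case is the cone identity and whose inductive step splits $\sum_{i=0}^k w(\bar x_i^i x_{i+1}\dots x_k)$ at the request index, with the summands below it unchanged because they contain the requested point, the critical summand increasing by exactly $\nabla$ via the duality lemma and the minimizer property of $\bar x_j^j x_{j+1}\dots x_k$, and the summands above it unchanged by maximality of the request index. The one step you flag as the main obstacle (invariance of the summands with $i>j$) is exactly what the paper handles, via a short chain that resolves one $\bar x_i$ to $x_i$ (possible since $x_i$ lies in every support configuration), swaps $x_i$ for the request by $1$-Lipschitzness, and uses the antipode identity $x_j\bar x_i = \Delta - x_ix_j$ to return.
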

\begin{proof}
	It suffices to show
	\begin{align}
	\Phi_{x_1,\dots,x_k}(w_t)&=\Phi_{x_1,\dots,x_k}(w_{t-1})+\nabla(w_{t-1},r_t)\label{eq:step}\\
	\Phi_{x_1,\dots,x_k}(w_T)&=(k+1)w(x_1\dots x_k)+\frac{k(k+1)}{2}\Delta-\sum_{1\le i<j\le k}x_ix_j.\label{eq:cone}
	\end{align}
	For equation \pref{eq:cone}, we have
	\begin{align*}
	\Phi_{x_1,\dots,x_k}(w_T) &= \sum_{j=0}^k w_T(\bar{x}_j^j x_{j+1}\dots x_k)\\
	&= (k+1)w_T(x_1,\dots,x_k)+\sum_{1\le i\le j\le k}x_i\bar{x}_j\\
	&= (k+1)w(x_1,\dots,x_k)+\sum_{1\le i\le j\le k}(\Delta-x_ix_j)\\
	&= (k+1)w(x_1,\dots,x_k)+\frac{k(k+1)}{2}\Delta-\sum_{1\le i<j\le k}x_ix_j.
	\end{align*}
	We now show equation \pref{eq:step}. Let $i$ be such that $r_t=x_i$. Then,
	\begin{align}
	\Phi_{x_1,\dots,x_k}(w_t)&= \sum_{j=0}^k (w_{t-1}\land x_i)(\bar{x}_j^j x_{j+1}\dots x_k)\nonumber\\
	&= \sum_{j=0}^{i-1} w_{t-1}(\bar{x}_j^j x_{j+1}\dots x_k) + \sum_{j=i}^k (w_{t-1}\land x_i)(\bar{x}_j^j x_{j+1}\dots x_k).\label{eq:splitSum}
	\end{align}
	By maximality of $i$, $x_{j+1}\dots x_k$ is contained in every support configuration of $w_{t-1}$. Thus, $\bar{x}_i^ix_{i+1}\dots x_k$ is a minimizer of $w_{t-1}$ with respect to $x_i$ and hence
	\begin{align}
	(w_{t-1}\land x_i)(\bar{x}_i^i x_{i+1}\dots x_k) = w_{t-1}(\bar{x}_i^i x_{i+1}\dots x_k) + \nabla(w_{t-1},r_t)\label{eq:i=j}
	\end{align}
	by the duality lemma.
	
	We claim that
	\begin{align}
	(w_{t-1}\land x_i)(\bar{x}_j^j x_{j+1}\dots x_k)& =w_{t-1}(\bar{x}_j^j x_{j+1}\dots x_k)\qquad\forall j=i+1,\dots, k.\label{eq:bigJ}
	\end{align}
	Assuming this is true, we obtain \pref{eq:step} by substituting \pref{eq:i=j} and \pref{eq:bigJ} into \pref{eq:splitSum}.
	
	It remains to show \pref{eq:bigJ}. Since $w_{t-1}\land x_i\ge w_{t-1}$, the direction ``$\ge$'' is immediate. For the other direction, since $x_jx_{j+1}\dots x_k$ is contained in every support configuration of $w_{t-1}$, we get
	\begin{align*}
	w_{t-1}(\bar{x}_j^j x_{j+1}\dots x_k)&= w_{t-1}(\bar{x}_j^{j-1}x_j x_{j+1}\dots x_k) + \bar{x}_jx_j\\
	&\ge w_{t-1}(\bar{x}_j^{j-1}x_i x_{j+1}\dots x_k) - x_ix_j+\bar{x}_jx_j\\
	&= (w_{t-1}\land x_i)(\bar{x}_j^{j-1}x_i x_{j+1}\dots x_k) + x_i\bar{x}_j\\
	&\ge (w_{t-1}\land x_i)(\bar{x}_j^{j}x_{j+1}\dots x_k).\qedhere
	\end{align*}
\end{proof}

\begin{lemma}\label{lem:push3}
	Let $X\subset M$ with $|X|=3$ and $r\in X$ be fixed and let $w\in\cW_M^3(r)$. For a bijection $\pi\colon \{1,\dots,3\}\to X$, write $\Phi_{\pi}:=\Phi_{\pi(1)\pi(2)\pi(3)}$. Then
	\begin{align*}
	\min_{\pi\colon \pi(3)=r}\Phi_\pi(w) = \min_{\pi}\Phi_\pi(w).
	\end{align*}
\end{lemma}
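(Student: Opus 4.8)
\emph{The plan.} The inequality ``$\geq$'' is immediate, since on the right the minimum is over a superset of orderings. For ``$\leq$'' I would show that any ordering of $X$ can be rewritten as one ending in $r$ without increasing $\Phi$. Write $a,b$ for the two points of $X\setminus\{r\}$. For $k=3$,
\[
\Phi_{x_1,x_2,x_3}(w)=w(x_1x_2x_3)+w(\bar x_1 x_2 x_3)+w(\bar x_2^2 x_3)+w(\bar x_3^3),
\]
so $\Phi_\pi(w)$ depends on $\pi$ only through which point is $x_1$ and, among the remaining two, which is $x_2$ and which is $x_3$. Hence it suffices to prove (i) $\Phi_{p,r,q}(w)=\Phi_{p,q,r}(w)$ for every labelling $\{p,q,r\}=X$, and (ii) $\min\{\Phi_{a,b,r}(w),\Phi_{b,a,r}(w)\}\leq\Phi_{r,a,b}(w)$. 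Indeed, by (i) an ordering with $r$ in the middle becomes one ending in $r$ of equal value, and by (ii) (and its $a\leftrightarrow b$ variant) an ordering with $r$ first becomes one ending in $r$ of at most the same value.

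The basic tool is a \emph{resolving identity}: if $w\in\cW(r)$ and $r\notin C$, then $C$ is supported, and matching the copy of $r$ in an optimal support configuration $Y\ni r$ to its partner $c^\ast\in C$, together with $1$-Lipschitzness, gives $w(C)=\min_{c\in C}\bigl(w(C-c+r)+d(c,r)\bigr)$. For (i), cancel the shared terms $w(x_1x_2x_3)$ and $w(\bar p q r)$; it remains to show $w(\bar q^2 r)+w(\bar r^3)=w(\bar r^2 q)+w(\bar q^3)$. The configurations $\bar r^3,\bar q^3,\bar r^2 q$ all omit $r$, so the resolving identity, the antipode identity $d(\bar p,r)=\Delta-d(p,r)$, and $1$-Lipschitzness give $w(\bar r^3)=w(\bar r^2 r)+\Delta$, $w(\bar q^3)=w(\bar q^2 r)+\Delta-d(q,r)$, and $w(\bar r^2 q)=w(\bar r^2 r)+d(q,r)$ (for the last, both branches of the resolving minimum for $\bar r^2 q$ lead to this value, one of them after a one-step $1$-Lipschitz estimate). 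Substituting proves (i).

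Statement (ii) is the crux. I would cancel $w(x_1x_2x_3)$ and expand $w(\bar r^3)$, $w(\bar b^3)$, and $w(\bar a^2 b)$ (and, by symmetry, the $a\leftrightarrow b$ analogues) with the resolving identity; the expansion of $w(\bar a^2 b)$ forks according to whether $\bar a^2 b$ resolves from $\bar a$ or from $b$. The key new estimate is a lower bound on $w(\bar r a b)$: applying quasiconvexity to $\bar r a b$ and $\bar r^2 r$ (which share the point $\bar r$) and choosing an appropriate subset $A$ yields $w(\bar r a b)+w(\bar r^2 r)\geq w(\bar r^2 s)+w(\bar r r t)$ with $\{s,t\}=\{a,b\}$ (whichever way the matching produced by the quasiconvexity lemma runs); combined with $w(\bar r^2 s)=w(\bar r^2 r)+d(s,r)$ from the previous step and $1$-Lipschitzness on $w(\bar r r t)$ this gives $w(\bar r a b)\geq w(\bar r^2 r)+d(a,r)+d(b,r)-\Delta$. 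In the branch where $\bar a^2 b$ resolves from $\bar a$, this bound together with the resolving identities and $1$-Lipschitzness shows $\Phi_{a,b,r}(w)\leq\Phi_{r,a,b}(w)$ directly. In the other branch one instead proves $\Phi_{b,a,r}(w)\leq\Phi_{r,a,b}(w)$, using the case hypothesis to fix the labels (say $h(a,b)\le h(b,a)$, where $h(p,q):=w(\bar p^2 q)+w(\bar q^3)$), a further application of quasiconvexity to configurations assembled from $\bar r,\bar b,r$, the resolving identity applied to the antipode-only configurations it produces, and Lemma~\ref{lem:resolveMonotone} to transfer resolving directions; this branch is the most delicate. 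Finally, degenerate configurations (arising when a point of $X$ equals $\bar r$, so that a configuration above actually contains $r$ and the resolving identity no longer applies to it) are handled separately by elementary $1$-Lipschitz estimates. The principal obstacle is precisely this case analysis: the naive $1$-Lipschitz bounds fall short by a term of order $\Delta$, so quasiconvexity, the precise resolving structure of $\cW(r)$, and the freedom to compare against whichever of $\Phi_{a,b,r}(w)$, $\Phi_{b,a,r}(w)$ is smaller must all be combined.
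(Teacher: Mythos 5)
Your skeleton matches the paper's: the reduction to (i) ``$r$ in the middle'' and (ii) ``$r$ first'', and within (ii) the case split on how $\bar a^2 b$ resolves, are exactly the paper's decomposition. Your proof of (i) via the resolving identity is correct, and your estimate $(\star)$, namely $w(\bar r ab)\ge w(\bar r^2 r)+ar+br-\Delta$ from quasiconvexity of $\bar r ab$ against $\bar r^2 r$, does correctly close the branch where $\bar a^2 b$ resolves from $\bar a$ (the paper handles that branch differently, by a single Lipschitz step that moves $r$ into the middle position, so here your route is a valid variant).

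The gap is the other branch, where $\bar a^2 b$ resolves from $b$ --- and this is the crux of the lemma. After cancelling $w(abr)$ and substituting $w(\bar a^2 b)=w(\bar a^2 r)+br$ and $w(\bar b^3)=w(\bar b^2 r)+\Delta-br$, what you must show is
\[
w(\bar r ab)\ \ge\ \min\bigl\{\,w(\bar a br)-w(\bar a^2 r),\ w(\bar b ar)-w(\bar b^2 r)\,\bigr\}+w(\bar r^2 r),
\]
and $(\star)$ falls short of this whenever that minimum exceeds $ar+br-\Delta$; nothing in your sketch rules that out. The ingredients you list do not assemble into a proof: a quasiconvexity exchange between configurations drawn from $\{\bar r,\bar b,r\}$ cannot produce the terms $w(\bar a br)$ or $w(\bar b ar)$ appearing in the targets $\Phi_{a,b,r}$ and $\Phi_{b,a,r}$, since $a$ and $\bar a$ occur in neither input configuration. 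The mechanism that actually closes this branch (and which the paper uses) is different: first, $\bar r ab$ always resolves from $a$ or from $b$, never only from $\bar r$ (because $w(\bar r rb)+ar\le w(rab)+\Delta$ by Lipschitzness); say it resolves from $a$, so $w(\bar r ab)=w(\bar r rb)+ar$. Then quasiconvexity applied to the pair $\bar a^2 r$ and $\bar r rb$ gives $w(\bar a^2 r)+w(\bar r rb)\ge w(\bar a br)+w(\bar a\bar r r)\ge w(\bar a br)+w(\bar r^2 r)-ar$, hence $w(\bar r ab)\ge w(\bar a br)-w(\bar a^2 r)+w(\bar r^2 r)$ and $\Phi_{a,b,r}\le\Phi_{r,a,b}$; resolving from $b$ gives $\Phi_{b,a,r}\le\Phi_{r,a,b}$ symmetrically. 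Note also that which of the two targets you land on is dictated by which point $\bar r ab$ resolves from, not by your normalization $h(a,b)\le h(b,a)$; your plan to prove specifically $\Phi_{b,a,r}\le\Phi_{r,a,b}$ throughout this branch is not justified as stated.
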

\begin{proof}
	Let $\pi$ be a minimizer of the right hand side. If $\pi(k)=r$, we are done. The case $\pi(k-1)=r$ is also easy, using the fact that $r$ is contained in every support configuration. The remaining case $\pi(k-2)=r$ is non-trivial. Let $y:=\pi(k-1)$ and $z:=\pi(k)$. We will construct a permutation $\pi'$ with $\pi'(3)=r$ and $\Phi_\pi(w)\ge\Phi_{\pi'}(w)$. This will only affect the last three terms in the sum of the definition of $\Phi$,
	\begin{align*}
	w(\bar{r}^{k-2}yz)+w(\bar{y}^{k-1}z)+w(\bar{z}^k).
	\end{align*}
	
	If $w(\bar{y}^{k-1}z)=w(\bar{y}^{k-2}rz)+\bar yr$, then
	\begin{align*}
	w(\bar{r}^{k-2}yz)+w(\bar{y}^{k-1}z)+w(\bar{z}^k)&=w(\bar{r}^{k-2}yz)+w(\bar{y}^{k-2}rz)+w(\bar{z}^k)+\bar yr\\
	&\ge w(\bar{y}^{k-2}rz)+w(\bar{r}^{k-1}z)+w(\bar{z}^k)
	\end{align*}
	where the inequality uses $\bar yr=y\bar r$. This corresponds to a permutation with $r$ in the next-to-last position, and it is easy to push it from there to the last position.
	
	So we can assume $w(\bar{y}^{k-1}z)=w(\bar{y}^{k-1}r)+zr$. Thus
	\begin{align}
	w(\bar{r}^{k-2}yz)+w(\bar{y}^{k-1}z)+w(\bar{z}^k)&=w(\bar{r}^{k-2}yz)+w(\bar{y}^{k-1}r)+w(\bar{z}^{k-1}r)+zr+\bar z r\nonumber\\
	&=w(\bar{r}^{k-2}yz)+w(\bar{y}^{k-1}r)+w(\bar{z}^{k-1}r)+\Delta.\label{eq:push31}
	\end{align}
	
	In the last expression, $y$ and $z$ are symmetric, so we can assume
	\begin{align}
	w(\bar{r}^{k-2}yz)=w(\bar{r}^{k-2}rz)+yr.\label{eq:push32}
	\end{align}
	By quasi-convexity and Lipschitzness of the work function (and $\bar y\bar r = yr$, $\bar r r=\Delta$),
	\begin{align}
	w(\bar{y}^{k-1}r)+w(\bar{r}^{k-2}rz)&\ge w(\bar{y}^{k-2}zr)+w(\bar{r}^{k-2}\bar yr)\nonumber\\
	&\ge w(\bar{y}^{k-2}zr)+ w(\bar{r}^{k})-yr-\Delta\label{eq:push33}
	\end{align}
	Combining \pref{eq:push31}, \pref{eq:push32} and \pref{eq:push33}, we get
	\begin{align*}
	w(\bar{r}^{k-2}yz)+w(\bar{y}^{k-1}z)+w(\bar{z}^k) \ge w(\bar{y}^{k-2}zr)+w(\bar{z}^{k-1}r)+w(\bar{r}^{k}),
	\end{align*}
	corresponding to the permutation $(\pi(1),\pi(2),\pi(3))=(y,z,r)$.
\end{proof}

We remark (without proof) that the above lemma fails for $k=4$.

By the following corollary, for $k=3$ it holds that $\Phi$ is an explicit expression for the implicit potential of \cite{ChrobakL92}.

\begin{corollary}
	For $k=3$,
	\begin{align}
	\Phi(w)=6\Delta + \min_{\substack{X\colon |X|=3\\ r_1,\dots,r_T\in X}}\left[4w(X)-\cl(X)-\sum_{t=1}^T\nabla(w\land r_1\dots r_{t-1},r_t)\right].\label{eq:permIntuition3}
	\end{align}
\end{corollary}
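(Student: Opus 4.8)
The plan is to prove a per-set strengthening, $(\star)$: for every three-point set $X\subset M$ and every $w\in\cW_M^3$,
\begin{equation*}
	\Phi^X_{\min}(w):=\min_{\sigma}\Phi_{\sigma(1)\sigma(2)\sigma(3)}(w)\;=\;6\Delta-\cl(X)+4w(X)-\sup_{\rho\in X^*}\nabla(w,\rho)\;=\;6\Delta+\tPhi_X(w),
\end{equation*}
the minimum running over the six orderings $\sigma$ of $X$. Granting $(\star)$, the corollary follows: the right-hand side of \eqref{eq:permIntuition3} is $6\Delta+\min_{|X|=3}\tPhi_X(w)$, because for fixed $X$ the inner minimization over sequences $r_1,\dots,r_T\in X$ affects only $-\sum_t\nabla$ and converts it into $-\sup_{\rho\in X^*}\nabla(w,\rho)$; and $\Phi(w)=\min_{x_1,x_2,x_3}\Phi_{x_1,x_2,x_3}(w)=\min_{|X|=3}\Phi^X_{\min}(w)$, the reduction of non-distinct triples to distinct ones being handled by symmetry relations among the $\Phi_{x_1,x_2,x_3}$ such as $\Phi_{a,a,b}=\Phi_{a,\bar a,b}$ (for $b\neq\bar a$). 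So all the work is in $(\star)$.

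The inequality ``$\ge$'' in $(\star)$ is immediate from \pref{lem:permIntuition}: for each ordering $\sigma$ it gives $\Phi_{\sigma(1)\sigma(2)\sigma(3)}(w)=6\Delta-\cl(X)+4w(X)-\nabla(w,\rho_\sigma)$, where $\rho_\sigma\in X^*$ is the request sequence \pref{lem:permIntuition} associates with $\sigma$ and $w$; hence $\Phi_{\sigma(1)\sigma(2)\sigma(3)}(w)\ge 6\Delta-\cl(X)+4w(X)-\sup_{\rho\in X^*}\nabla(w,\rho)$, and we minimize over $\sigma$. The substance is ``$\le$'', for which the plan is to first prove a key claim: for every $r\in X$ and every $w\in\cW_M^3$, $\;\Phi^X_{\min}(w\land r)\ge\Phi^X_{\min}(w)+\nabla(w,r)$. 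To see this, note $w\land r\in\cW_M^3(r)$, so by \pref{lem:push3} a minimizing ordering of $X$ for $w\land r$ may be chosen with $r$ in the last slot, say $(x_1,x_2,r)$, whence $\Phi^X_{\min}(w\land r)=\Phi_{x_1,x_2,r}(w\land r)$. Writing $v$ for either $w$ or $w\land r$, we have $\Phi_{x_1,x_2,r}(v)=v(x_1x_2r)+v(\bar x_1 x_2 r)+v(\bar x_2^2 r)+v(\bar r^3)$, and all of these configurations contain $r$ except the last one, $\bar r^3$; since $(w\land r)(C)=w(C)$ whenever $r\in C$, this gives $\Phi_{x_1,x_2,r}(w\land r)=\Phi_{x_1,x_2,r}(w)+(w\land r)(\bar r^3)-w(\bar r^3)$. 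Now the antipode property yields $d(r^3,A)+d(A,\bar r^3)=3\Delta$ for every configuration $A$ (match the three copies of $r$, resp.\ of $\bar r$, to the points of $A$, and use $rx+x\bar r=\Delta$), so $1$-Lipschitzness of $w$ gives $w(\bar r^3)-d(r^3,\bar r^3)\le w(A)-d(r^3,A)$ for all $A$, i.e.\ $\bar r^3\in\arg\min_A(w(A)-d(r^3,A))$. By the duality lemma $\bar r^3$ then also maximizes $(w\land r)(A)-w(A)$, so $(w\land r)(\bar r^3)-w(\bar r^3)=\nabla(w,r)$, and therefore $\Phi_{x_1,x_2,r}(w\land r)=\Phi_{x_1,x_2,r}(w)+\nabla(w,r)\ge\Phi^X_{\min}(w)+\nabla(w,r)$, proving the claim.

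With the key claim, ``$\le$'' follows by telescoping. Given $\rho=(r_1,\dots,r_T)\in X^*$, let $\rho'$ be $\rho$ followed by the request sequence obtained by running the construction of \pref{lem:permIntuition} from the work function $w\land r_1\dots r_T$ and an arbitrary ordering of $X$; then $\rho'\in X^*$, it terminates, it drives the work function to the cone with support $\{X\}$ and apex value $w(X)$ (all requests lie in $X$), and $\nabla(w,\rho')\ge\nabla(w,\rho)$ since each summand of $\nabla$ is nonnegative. Iterating the key claim along $\rho'$ telescopes to $\Phi^X_{\min}(w)+\nabla(w,\rho)\le\Phi^X_{\min}(w\land\rho')$, and since $w\land\rho'$ is that cone, \eqref{eq:cone} gives $\Phi_{\sigma(1)\sigma(2)\sigma(3)}(w\land\rho')=6\Delta-\cl(X)+4w(X)$ for every ordering, hence $\Phi^X_{\min}(w\land\rho')=6\Delta-\cl(X)+4w(X)$. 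Taking the supremum over $\rho\in X^*$ gives ``$\le$'', completing $(\star)$ and the corollary.

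The main obstacle will be the ``$\le$'' direction. Its crux is the observation that $\bar r^k$ is always a minimizer of $w$ with respect to $r$, which forces the entire extended cost of a request to $r$ to be charged to the single $\Phi$-summand $\bar r^k$; the reduction that makes $\bar r^k$ actually be that summand---placing $r$ last among the $x_i$---is exactly where \pref{lem:push3} is invoked, and since \pref{lem:push3} fails for $k\ge4$, this is why the corollary is restricted to $k=3$.
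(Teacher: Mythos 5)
Your proposal is correct and takes essentially the same route as the paper's proof: the ``$\ge$'' direction from \pref{lem:permIntuition}, and the ``$\le$'' direction by using \pref{lem:push3} to place $r$ in the last slot, observing that only the $\bar r^3$ summand changes and that its change equals $\nabla(w,r)$ by duality, then telescoping down to the cone formula. The only (harmless) organizational difference is that you extend an arbitrary $\rho\in X^*$ to a sequence reaching the cone at $X$, whereas the paper works directly with the minimizing sequence, which reaches the cone by minimality.
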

\begin{proof}
	The direction ``$\ge$'' follows from \pref{lem:permIntuition}. For the direction ``$\le$'', select $X$ and $r_1,\dots,r_T$ to minimize the right hand side. Write $w_t=w\land r_1\dots r_{t}$. By minimality of the right hand side, $w_T$ is a cone at $X$. Let $\Phi_X=\min\Phi_{x_1x_2x_3}$, where the minimum is taken over permutations $x_1,x_2,x_3$ of $X$. By \pref{lem:push3}, we have $\Phi_X(w_t)=\Phi_{xyr_t}(w_t)$ for some $x,y\in X$. Thus,
	\begin{align*}
	\Phi_X(w_t)-\Phi_X(w_{t-1})&\ge \Phi_{xyr_t}(w_t)-\Phi_{xyr_t}(w_{t-1})\\
	&= w_t(\bar r_t^3)-w_{t-1}(\bar r_t^3)\\
	&=\nabla(w_{t-1},r_t).
	\end{align*}
	Hence,
	\begin{align*}
	\Phi(w)\le \Phi_X(w)&=\Phi_X(w_T)-\sum_{t=1}^T\left[\Phi_X(w_{t})-\Phi_X(w_{t-1})\right]\\
	&\le \Phi_X(w_T)-\sum_{t=1}^T\nabla(w_{t-1},r_t),
	\end{align*}
	which is equal to the right hand side of \pref{eq:permIntuition3} due to \pref{lem:permIntuition} and since $w_T$ is a cone at $X$.
\end{proof}

\section{Multi-ray spaces}\label{sec:multiray}
A multi-ray space is a tree of depth $1$ whose edges have infinite length and where requests can appear at arbitrary locations along the edges. We call these edges rays.

We will show in this section that WFA is $k$-competitive on multiray spaces. Note that a multiray space with only $2$ rays is equal to the line metric. A subset of a multi-ray space containing only one point from each ray is a weighted star. Our proof therefore recovers the known proofs that WFA is $k$-competitive on the line and on weighted stars as special cases.

We denote by $c$ the center/root of the multi-ray space, i.e., the origin of the rays. We can assume that every ray has finite length by considering only a sufficiently long part that all requests fall into. We call the endpoint of a ray that is not the center a \emph{leaf}. Denote by $\cL$ the set of leaves. For $w\in \cW^k$, define $m_w(X):=w(X)-d(c^k,X)$. Note that $m_w$ is also quasiconvex. As we use the server definition of the potential, we augment the multi-ray space by adding antipodes as discussed earlier. In the definition \pref{eq:permPot}, we require the points $x_1,\dots,x_k$ to be chosen from the original metric space $M$. This corresponds to requiring the permutation in the evader potential to end with $k$ points from the original metric space, which does not affect the proof of Theorem~\ref{thm:evPot}.

The proof that WFA is $k$-competitive on multi-ray spaces proceeds along the following three main steps:
\begin{enumerate}
	\item First we establish some properties of $\Phi_{x_1\dots x_k}$ when $x_i=\ell_i$ are leaves. In particular, we express $\Phi_{\ell_1\dots\ell_k}$ in terms of $m_w$, and show that $\ell_1,\dots,\ell_k$ can be permuted under certain conditions.
	\item We then show by induction on $k$ that $\Phi_{x_1\dots x_k}(w)$ is indeed minimized when $x_1,\dots, x_k$ are leaves and $\min_X m_w(X)=m_w(x_1\dots x_k)$.
	\item Finally, we show that $\Phi_{x_1\dots x_k}(w)$ is also minimized for some $x_1,\dots,x_k$ where only $x_1,\dots,x_{k-1}$ are leaves whereas $x_k=r$ is the last request.
\end{enumerate}

\subsection*{Step 1: Properties of $\Phi_{x_1\dots x_k}$ when $x_i$ are leaves}
\begin{lemma}
	Let $w\in \cW^k$. There exist leaves $\ell_1,\dots,\ell_k$ such that $\min_X m_w(X)=m_w(\ell_1\dots\ell_k)$.
\end{lemma}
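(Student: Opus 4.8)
The plan is to use the quasiconvexity machinery developed in Section~\ref{sec:prelim}, specifically the greedy-type Lemmas~\ref{lem:quasiMin}, \ref{lem:quasiSub} and \ref{lem:quasiGreedy}, applied to the quasiconvex function $m_w(X)=w(X)-d(c^k,X)$. Start with any $X\in\arg\min_X m_w(X)$. The goal is to transform $X$ into a configuration consisting entirely of leaves without increasing $m_w$. The key structural fact about a multi-ray space is that for any point $p$ on a ray, if $\ell(p)$ denotes the leaf of that ray, then $d(c,\ell(p))=d(c,p)+d(p,\ell(p))$, i.e.\ moving a server from $p$ outward to the leaf changes its distance from the center by exactly $+d(p,\ell(p))$. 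So if $x\in X$ is not a leaf and $\ell=\ell(x)$ is its leaf, then replacing $x$ by $\ell$ changes $d(c^k,X)$ by $+d(x,\ell)$ (assuming the matching still routes that server to the center, which it does since the root minimizing matching to $c^k$ pairs each server with $c$).

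The main step is therefore: I would show that for a non-leaf $x\in X$ with leaf $\ell$, either $X-x+\ell$ also minimizes $m_w$, or we can reduce to a situation with more leaves. Concretely, by $1$-Lipschitzness of $w$ we have $w(X-x+\ell)\le w(X)+d(x,\ell)$. If this held with equality for the right choice of $X$, we would get $m_w(X-x+\ell)=w(X)+d(x,\ell)-d(c^k,X-x+\ell)=w(X)+d(x,\ell)-(d(c^k,X)+d(x,\ell))=m_w(X)$, and we could iterate over all non-leaf points. To force this equality, I would invoke Lemma~\ref{lem:quasiSub}: among all minimizers $Y$ of $m_w$ containing the leaves already produced, pick one so that $Y$ minus those leaves is contained in $X$ minus those leaves; this lets me argue that the ``support'' constraint on $x$ is of the form that $x$ can be moved outward. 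The cleanest formulation is probably an induction on the number of non-leaf points in $X$, where at each step I use Lemma~\ref{lem:quasiGreedy} with $A$ equal to the current set of leaves to re-extract a global minimizer of $m_w$ whose remaining (non-fixed) coordinates lie on the same set of rays, and then push one more coordinate to its leaf. A mild subtlety is that two servers may lie on the same ray; pushing the outer one to the leaf and then the inner one to the same leaf (making $\ell$ a multi-point of $X$) is fine since configurations are multisets in this argument, and the distance bookkeeping above is unaffected.

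The step I expect to be the main obstacle is establishing that a non-leaf coordinate can always be pushed outward \emph{along its own ray} without increasing $m_w$, i.e.\ ruling out the possibility that quasiconvexity only lets us exchange $x$ for a point $y$ on a \emph{different} ray. The resolution is that $m_w$ is minimized by $X$, so for any exchange partner $y$ we have $w(X-x+y)-d(c^k,X-x+y)\ge m_w(X)$; choosing $y=\ell(x)$ and using the identity $d(c^k,X-x+\ell)=d(c^k,X)+d(x,\ell)$ together with $1$-Lipschitzness pins down equality as above — but to know that $\ell(x)$ is an \emph{available} exchange partner we need the bijection from quasiconvexity to be compatible with leaving the already-fixed leaves in place, which is exactly what the ``$\mu(x)=x$ on $X\cap Y$'' strengthening from \cite{KoutsoupiasP95} and the monotone-choice arguments in Lemmas~\ref{lem:quasiSub}–\ref{lem:quasiGreedy} provide. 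Once this is in place, the induction closes and we obtain leaves $\ell_1,\dots,\ell_k$ with $m_w(\ell_1\dots\ell_k)=\min_X m_w(X)$.
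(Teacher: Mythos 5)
Your proposal does reach a correct proof, but only because the one genuinely needed observation is already buried in the middle of it; the surrounding quasiconvexity machinery is a detour driven by a misconception. The paper's entire proof is the sliding argument: take any minimizer $X$ of $m_w$ and, for each non-leaf $x\in X$, replace $x$ by the leaf $\ell$ of its ray. By $1$-Lipschitzness, $w(X-x+\ell)\le w(X)+d(x,\ell)$, while $d(c^k,X-x+\ell)=d(c^k,X)+d(x,\ell)$ exactly (the minimum matching to $c^k$ is just the sum of distances to $c$, and $\ell$ lies beyond $x$ on the same ray), so $m_w(X-x+\ell)\le m_w(X)$. That inequality already points in the direction you need, so you do \emph{not} need equality in the Lipschitz bound, and consequently you do not need quasiconvexity at all. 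Your stated ``main obstacle'' --- that quasiconvexity might only offer an exchange partner on a different ray --- never arises, because $\ell$ is not obtained as an exchange partner from a quasiconvexity bijection; it is substituted directly, with $1$-Lipschitzness controlling the increase of $w$ and the ray geometry controlling the increase of $d(c^k,\cdot)$. The appeals to Lemmas~\ref{lem:quasiSub} and~\ref{lem:quasiGreedy} and the repeated re-extraction of minimizers can be deleted wholesale; what remains is exactly the paper's one-line proof. (Since $X$ is a global minimizer of $m_w$, the inequality $m_w(X-x+\ell)\le m_w(X)$ is automatically an equality, but that is a consequence of the argument, not a hypothesis you must engineer.)
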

\begin{proof}
	Follows from the fact that since $w$ is $1$-Lipschitz, $m_w(X)$ cannot increase when a point in $X$ moves away from $c$ towards a leaf.
\end{proof}

\begin{lemma}\label{lem:potTermLeaves}
	For any $w\in\cW^k$, a leaf $\ell\in\cL$ and $x_{i+1},\dots,x_k\in M$,
	\begin{align*}
		w(\bl^ix_{i+1}\dots x_k)&=\min_{\substack{X\supseteq x_{i+1}\dots x_k\colon\\X- x_{i+1}\dots x_k\subseteq\cL-\ell}}m_w(X) + i(\Delta-c\ell) + \sum_{j={i+1}}^k cx_j.
	\end{align*}
\end{lemma}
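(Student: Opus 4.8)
The plan is to unfold the definition of $m_w$ and match the two sides term by term after identifying the optimal configuration. Recall $w(\bl^i x_{i+1}\dots x_k)$ is a work-function value on the augmented space, where $\bl$ is the antipode of the leaf $\ell$. The key fact I would exploit is the structure of the augmented metric: for any point $p\in M$ lying on a ray, the antipode $\bl$ of a leaf $\ell$ satisfies $\bl p = 2\Delta - \ell p$, and since everything on the tree is ``on the far side'' of $\ell$'s antipode, moving a server of a configuration containing copies of $\bl$ toward any leaf $\ell'\ne\ell$ decreases the distance to $\bl$ while the configuration stays $1$-Lipschitz-feasible. So the plan is: first, show that $w(\bl^i x_{i+1}\dots x_k)$ can be rewritten using the support structure of $w$ — every optimal configuration $C$ realizing $w(\bl^i x_{i+1}\dots x_k)=\min_C w(C)+d(C,\bl^i x_{i+1}\dots x_k)$ can be taken with the $k-i$ ``matched'' points sitting exactly at $x_{i+1},\dots,x_k$ (because those points are on the tree and $\bl$ is far, so there is no benefit to the offline algorithm moving them) and the remaining $i$ points each matched to a copy of $\bl$.

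Concretely, I would argue as follows. Starting from
\begin{align*}
w(\bl^i x_{i+1}\dots x_k)=\min_{C}\Big(w(C)+d(C,\bl^i x_{i+1}\dots x_k)\Big),
\end{align*}
use Lemma~\ref{lem:quasiSub} (with $A=x_{i+1}\dots x_k$, or rather applied to the relevant sub-minimization) to conclude that the minimizing $C$ may be chosen so that $C\supseteq x_{i+1}\dots x_k$, and the minimum matching pairs each $x_j$ with itself at cost $0$ and pairs the remaining $i$ points of $C$ with the $i$ copies of $\bl$. For a point $p$ on the tree, $\bl p = \Delta - c\ell + c p$ if $p$ is on a different ray than $\ell$ (distance $c\ell + cp$ from $\ell$, antipode distance $2\Delta - \ell p = 2\Delta - c\ell - cp$... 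I'd reconcile the exact formula, but the upshot is) minimized over $p$ precisely when $p$ is a leaf $\ne\ell$, giving $\bl p = \Delta - c\ell + c p$ and hence $\bl\ell' = \Delta - c\ell + c\ell'$ for a leaf $\ell'$; since also $cp$ appears we get, summing the contribution of the $i$ antipode-matched points $p_1,\dots,p_i$,
\begin{align*}
\sum_{s=1}^i \bl p_s = i(\Delta - c\ell) + \sum_{s=1}^i c p_s,
\end{align*}
and each $p_s$ should be taken a leaf for optimality by the $1$-Lipschitz/monotonicity argument above. Writing $X = C = \{p_1,\dots,p_i\}\cup\{x_{i+1},\dots,x_k\}$, the total is
\begin{align*}
w(X) + i(\Delta - c\ell) + \sum_{s=1}^i c p_s = \big(w(X) - d(c^k,X)\big) + d(c^k,X) + i(\Delta-c\ell) + \sum_{s=1}^i cp_s.
\end{align*}
Since $d(c^k,X) = \sum_{s=1}^i cp_s + \sum_{j=i+1}^k cx_j$ (the matching to $c^k$ is just the sum of distances to the center), the two copies of $\sum cp_s$ do not cancel — rather $d(c^k,X)$ contributes one copy of $\sum cp_s$ plus $\sum_{j>i} cx_j$, and I have a second explicit $\sum cp_s$; I would instead not split $d(c^k,X)$ but directly note $w(X) = m_w(X) + d(c^k,X) = m_w(X) + \sum_{s=1}^i cp_s + \sum_{j=i+1}^k cx_j$, so the total becomes $m_w(X) + 2\sum_{s} cp_s + \ldots$ — so I must have mismatched a sign; the correct reading is that the antipode contribution is $i\Delta - (\text{something})$, and after careful bookkeeping the $cp_s$ terms cancel, leaving exactly $m_w(X) + i(\Delta - c\ell) + \sum_{j=i+1}^k cx_j$. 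Minimizing over the choice of the $i$ leaves $p_s\in\cL\setminus\{\ell\}$ (which is forced to be leaves $\ne\ell$ — note $\ell$ itself is excluded because $\bl\ell = \Delta \ne \Delta - c\ell + c\ell$ unless... in fact $\bl\ell=\Delta$ and placing $p_s=\ell$ is never better, and repetitions among the $p_s$ and with $x_j$ are allowed as multisets) then yields exactly the claimed formula
\begin{align*}
\min_{\substack{X\supseteq x_{i+1}\dots x_k\colon\\ X - x_{i+1}\dots x_k\subseteq \cL-\ell}} m_w(X) + i(\Delta - c\ell) + \sum_{j=i+1}^k cx_j.
\end{align*}

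The main obstacle I anticipate is the sign/accounting in the antipode-distance computation — getting the formula $\bl p = \Delta - c\ell + cp$ right (and checking the edge cases $p=\ell$, $p$ on the same ray as $\ell$, and $p$ a leaf) so that the $\sum cp_s$ terms combine correctly with $d(c^k,X)$ inside $m_w$. A secondary point requiring care is justifying that the minimizing configuration $C$ in the work-function formula can be taken to contain $x_{i+1},\dots,x_k$ with the remaining points at leaves: for the ``$\ge$'' direction of the main identity this follows from monotonicity ($1$-Lipschitzness lets us push servers to leaves without increasing the expression) together with Lemma~\ref{lem:quasiSub} to secure the inclusion $X\supseteq x_{i+1}\dots x_k$; the ``$\le$'' direction is immediate since every admissible $X$ on the right gives a valid matching on the left. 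Once the metric computation is pinned down, the rest is bookkeeping.
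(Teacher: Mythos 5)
Your skeleton is the same as the paper's: write $w(\bl^i x_{i+1}\dots x_k)$ as a minimum over configurations $X\supseteq x_{i+1}\dots x_k$ of $w(X)$ plus the cost of matching the remaining $i$ points to $\bl^i$, argue those $i$ points can be pushed to leaves other than $\ell$, and then convert distances to $\bl$ into the center-based quantities defining $m_w$. The gap is exactly the one you flag and then leave unresolved: your antipode formula is wrong, and it is not a harmless typo, because as stated it reverses the direction of your monotonicity argument. The correct relation is $\bl p=\Delta-\ell p$ (here $\Delta$ is the diameter of the \emph{augmented} space, which is how the paper uses the symbol in this section), so for a leaf $p\ne\ell$ one gets $\bl p=\Delta-c\ell-cp$, not $\Delta-c\ell+cp$. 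With your sign, $\bl p$ increases in $cp$, so ``minimizing $\bl p$'' would pull $p$ toward the center rather than out to a leaf. With the correct sign, the right statement is also not that $\bl p$ alone is minimized at a leaf, but that sliding $p$ away from $\ell$ decreases $\bl p$ by exactly the distance moved while increasing $w(X)$ by at most that much, so the sum $w(X)+\sum_s\bl p_s$ cannot increase; this is what forces $X-x_{i+1}\dots x_k\subseteq\cL-\ell$ and in particular rules out $p=\ell$ or $p$ on the ray of $\ell$.

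Once the sign is fixed, the bookkeeping you postponed does close: $\sum_s \bl p_s=i(\Delta-c\ell)-\sum_s cp_s$, and since $w(X)=m_w(X)+d(c^k,X)=m_w(X)+\sum_s cp_s+\sum_{j=i+1}^k cx_j$, the two copies of $\sum_s cp_s$ cancel (one enters with a plus, the other with a minus), leaving $m_w(X)+i(\Delta-c\ell)+\sum_{j=i+1}^k cx_j$ as claimed. A secondary correction: the reduction to minimizers $X\supseteq x_{i+1}\dots x_k$ with the identity matching on $x_{i+1},\dots,x_k$ does not follow from \pref{lem:quasiSub}, which concerns minimizers of $w$ itself rather than of $X\mapsto w(X)+d(X,\cdot)$; it is the elementary resolving identity $w(Zx)=\min_y w(Zy)+xy$ applied successively to the $i$ copies of $\bl$ (or, equivalently, a $1$-Lipschitz swap argument on the optimal matching), which is how the paper's proof begins. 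These are local repairs, but since the sign of the antipode distance is the entire computational content of the lemma, the proof is not complete as written.
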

\begin{proof}
	\begin{align*}
		w(\bl^ix_{i+1}\dots x_k)&=\min_{X\supseteq x_{i+1}\dots x_k}w(X) + d(X- x_{i+1}\dots x_k,\bl^i)\\
		&=\min_{X\supseteq x_{i+1}\dots x_k}w(X) + i\Delta - d(X- x_{i+1}\dots x_k,\ell^i)\\
	\end{align*}
	We claim that the minimum is achieved by some $X$ with $X- x_{i+1}\dots x_k\subseteq\cL-\ell$: Indeed, if there is some $x\in X- x_{i+1}\dots x_k$ that is not in $\cL-\ell$, sliding $x$ away from $\ell$ along a path to some other leaf increases $d(X- x_{i+1}\dots x_k,\ell^i)$ by the distance moved, and it increases $w(X)$ by at most this distance, so the whole term cannot increase.
	
	This also means that every distance in $d(X- x_{i+1}\dots x_k,\ell^i)$ goes across the center, i.e.,
	\begin{align*}
		d(X- x_{i+1}\dots x_k,\ell^i) = i\cdot c\ell + d(X- x_{i+1}\dots x_k,c^i).
	\end{align*}
	The lemma now follows by definition of $m_w$.
\end{proof}

As a consequence of \pref{lem:potTermLeaves}, we obtain the following expression for $\Phi_{\ell_1\dots\ell_k}$ whenever $\ell_1,\dots,\ell_k$ are leaves:
\begin{align}
	\Phi_{\ell_1\dots\ell_k}(w) &= \frac{k(k+1)}{2}\Delta+\sum_{i=0}^k\min_{\substack{X_i\supseteq \ell_{i+1}\dots \ell_k\colon\\X_i- \ell_{i+1}\dots \ell_k\subseteq\cL-\ell_i}}m_w(X_i).\label{eq:potLeaves}
\end{align}

The following symmetry and monotonicity properties allow us to reorder $\ell_1,\dots,\ell_k$ under certain circumstances.


\begin{lemma}[Symmetry and Monotonicity Lemma]\label{lem:symmetry}
	Let $w\in \cW^k$ and let $\ell_1,\dots,\ell_k$ be leaves such that $\min_X m_w(X)=m_w(\ell_1\dots\ell_k)$. The following properties hold:
	\begin{description}
		\item[Symmetry:] $\Phi_{\ell_1\dots \ell_k}(w)$ is constant under permutation of $\ell_1,\dots,\ell_k$.
		
		\item[Monotonicity:] For any leaf $\ell$, $\Phi_{\ell_1\dots \ell_{k-1}\ell}(w)\ge \Phi_{\ell\ell_1\dots \ell_{k-1}}(w)\ge \Phi_{\ell_1\dots \ell_{k}}(w)$.
	\end{description}
\end{lemma}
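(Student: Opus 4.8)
The plan is to work entirely with the expression \eqref{eq:potLeaves}, which writes $\Phi_{\ell_1\dots\ell_k}(w)$ as $\frac{k(k+1)}{2}\Delta$ plus a sum of $k+1$ terms, where the $i$-th term (for $i=0,\dots,k$) is
\begin{align*}
	S_i(\ell_{i+1},\dots,\ell_k):=\min_{\substack{X_i\supseteq \ell_{i+1}\dots\ell_k\\ X_i-\ell_{i+1}\dots\ell_k\subseteq\cL-\ell_i}}m_w(X_i).
\end{align*}
Crucially, this term depends only on the \emph{set} $\{\ell_{i+1},\dots,\ell_k\}$ that is forced into $X_i$ and on the single excluded leaf $\ell_i$; the order of $\ell_{i+1},\dots,\ell_k$ is irrelevant. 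For \textbf{Symmetry}, since $m_w(\ell_1\dots\ell_k)=\min_X m_w(X)$, I first claim that for every $i$ the minimum defining $S_i$ is attained at $X_i=\ell_1\dots\ell_k$: indeed $\ell_1\dots\ell_k$ is feasible (it contains $\ell_{i+1}\dots\ell_k$ and the remaining leaves $\ell_1,\dots,\ell_i$ avoid $\ell_i$ only if they are distinct from $\ell_i$, which holds as the $\ell_j$ are distinct leaves), and it is a global minimizer of $m_w$, so no feasible $X_i$ can do better. Hence each $S_i=m_w(\ell_1\dots\ell_k)$, the whole sum equals $(k+1)m_w(\ell_1\dots\ell_k)$, and $\Phi_{\ell_1\dots\ell_k}(w)=\frac{k(k+1)}{2}\Delta+(k+1)m_w(\ell_1\dots\ell_k)$, which is manifestly symmetric.

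Wait — I should double-check the feasibility subtlety, because if some $\ell_j$ with $j\le i$ equals $\ell_i$ then $\ell_1\dots\ell_k$ would not satisfy $X_i-\ell_{i+1}\dots\ell_k\subseteq\cL-\ell_i$. The hypothesis only says $\ell_1,\dots,\ell_k$ are leaves realizing the minimum of $m_w$; it does not say they are distinct. So the honest route is: if the $\ell_j$ happen to be distinct, the argument above gives the clean closed form directly. If not, one argues instead that $S_i\le m_w(\ell_1\dots\ell_k)$ always fails to be forced, so one instead shows $\Phi$ only depends on the multiset $\{\ell_1,\dots,\ell_k\}$ by observing that each $S_i$, as the minimum over $X_i$ containing a fixed multiset and avoiding one extra coordinate, is a symmetric function of the arguments; since $\sum_i S_i$ ranges over all ways of singling out one coordinate as the ``excluded'' $\ell_i$ and the rest as ``forced'', the total sum $\sum_{i=0}^k S_i$ is symmetric in $\ell_1,\dots,\ell_k$ by inspection (each coordinate plays the role of $\ell_i$ for exactly the indices where it is placed in position $i$, but summing over a fixed multiset the collection of (excluded, forced) pairs is permutation-invariant). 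This is the cleanest framing and I would present Symmetry this way.

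For \textbf{Monotonicity} I would leverage Symmetry to reduce to a single swap. By Symmetry, $\Phi_{\ell_1\dots\ell_{k-1}\ell}(w)=\Phi_{\ell\ell_1\dots\ell_{k-1}}(w)$ would be immediate \emph{if} $\ell$ also realized the minimum of $m_w$; since it need not, the first inequality $\Phi_{\ell_1\dots\ell_{k-1}\ell}(w)\ge\Phi_{\ell\ell_1\dots\ell_{k-1}}(w)$ is genuine content. Comparing the two via \eqref{eq:potLeaves} term by term: moving $\ell$ from the last slot to the first slot changes, for each $i$, which leaf is excluded and which set is forced. I would set up the comparison so that most terms cancel and the inequality reduces to a statement like ``forcing $\ell$ into the configuration is at least as costly as excluding it,'' which follows because $\ell_1,\dots,\ell_{k-1}$ together with a best extra leaf already minimize $m_w$, so $m_w$ of a set forced to contain $\ell$ is $\ge$ $m_w$ of the corresponding set where $\ell$ is replaced by the optimal leaf — i.e. an application of \pref{lem:quasiSub} (quasiconvexity) to $m_w$, using that $m_w$ is quasiconvex and $\ell_1\dots\ell_k$ is a global minimizer. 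The second inequality $\Phi_{\ell\ell_1\dots\ell_{k-1}}(w)\ge\Phi_{\ell_1\dots\ell_k}(w)$ is then an equality by Symmetry when $\ell$ also minimizes, and otherwise follows by the same quasiconvex exchange argument comparing the role of $\ell$ against the optimal leaf in each summand. The main obstacle I anticipate is the bookkeeping in Monotonicity: tracking precisely which of the $k+1$ summands in \eqref{eq:potLeaves} change under the cyclic shift of $\ell$ and verifying that the net effect is a single well-signed application of \pref{lem:quasiSub} to $m_w$; the quasiconvexity input itself is standard, but lining up the forced/excluded sets correctly so the telescoping works is where care is needed.
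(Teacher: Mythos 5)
Your proof of Symmetry breaks at its very first step: the configuration $X_i=\ell_1\dots\ell_k$ is \emph{not} feasible for the $i$-th minimum in \eqref{eq:potLeaves} when $i\ge 1$. The constraint there is $X_i-\ell_{i+1}\dots\ell_k\subseteq\cL-\ell_i$, and for $X_i=\ell_1\dots\ell_k$ the left-hand side is the multiset $\{\ell_1,\dots,\ell_i\}$, which contains the excluded leaf $\ell_i$ itself; distinctness of the $\ell_j$ is beside the point, since $\ell_i$ is among the first $i$ leaves rather than among the forced suffix. Consequently the closed form $\Phi_{\ell_1\dots\ell_k}(w)=\frac{k(k+1)}{2}\Delta+(k+1)m_w(\ell_1\dots\ell_k)$ is false: already for $k=1$ on the line with leaves $\ell_L,\ell_R$ and minimizer $\ell_L$, formula \eqref{eq:potLeaves} gives $\Phi_{\ell_L}(w)=\Delta+m_w(\ell_L)+m_w(\ell_R)$, not $\Delta+2m_w(\ell_L)$. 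Your fallback argument fails for the same structural reason: the sum in \eqref{eq:potLeaves} does \emph{not} range over ``all ways of singling out one coordinate as excluded and the rest as forced.'' The terms are nested -- term $i$ forces only the suffix $\ell_{i+1},\dots,\ell_k$ and excludes $\ell_i$, while $\ell_1,\dots,\ell_{i-1}$ do not appear in it at all. A transposition of two leaves therefore changes which minimization problems occur in the sum, and their equality is precisely the content of the lemma; it genuinely needs the hypothesis that $\ell_1\dots\ell_k$ minimizes $m_w$ together with quasiconvexity, and is not ``by inspection.''

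The paper's actual proof is an induction on $k$ that reduces both Symmetry and the first Monotonicity inequality to a swap of the \emph{last two} positions (earlier positions are handled by passing to the $(k-1)$-server work function obtained by fixing a server at $\ell_{k-1}$). It isolates the only two affected summands in a function $f(y,z)$ and shows $f(\ell_{k-1},\ell)\ge f(\ell,\ell_{k-1})$ by dropping a constraint in one minimum and invoking \pref{lem:quasiMin} for the quasiconvex function $m_w$; the second Monotonicity inequality is a separate two-term comparison using that $\ell_1\dots\ell_k$ is a global minimizer of $m_w$. Your instinct that \pref{lem:quasiMin}/\pref{lem:quasiSub} applied to $m_w$ is the engine is correct, but your Monotonicity paragraph (``set up the comparison so that most terms cancel'') is too vague to verify and, as written, leans on the invalid Symmetry claim. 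To repair the argument you would need to establish Symmetry via adjacent transpositions in the last two coordinates with the induction on $k$, and only then carry out the explicit two-term comparisons for Monotonicity.
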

\begin{proof}
	For the symmetry property and the first inequality of the monotonicity property, we proceed by induction on $k$. The base case $k=1$ is trivial. For the induction step, it suffices to show that $\Phi_{\ell_1\dots \ell_{k-1} \ell}(w)\ge \Phi_{\ell_1\dots \ell_{k-2}\ell\ell_{k-1}}(w)$, with equality if $\ell=\ell_k$. The lemma then follows by invoking the induction hypothesis on $\tw=w({}\cdot{} \ell_{k-1})\in \cW^{k-1}$, observing that $\min_{X} m_{\tw}(X)=m_{\tw}(\ell_1 \dots \ell_{k-2}\ell_k)$, and that $\Phi_{x_1\dots x_{k-1}\ell_{k-1}}(w)-\Phi_{x_1\dots x_{k-1}}(\tw)=w(\bl_{k-1}^k)$ is a constant function of $x_1,\dots,x_{k-1}$ (and thus $\Phi_{x_1\dots x_{k-1}\ell_{k-1}}(w)$ and $\Phi_{x_1\dots x_{k-1}}(\tw)$ are affected in the same way when $x_1,\dots,x_{k-1}$ are permuted).
	
	Note that only the two terms involving $X_{k-1}$ and $X_k$ in \pref{eq:potLeaves} are affected when the last two leaves are swapped. For two leaves $y$ and $z$, let
	\begin{align*}
		f(y,z):=\min_{\substack{Y\ni z\colon\\Y- z\subseteq\cL- y}}m_w(Y) + \min_{Z\subseteq\cL- z}m_w(Z)
	\end{align*}
	We only need to show that $f(\ell_{k-1},\ell)\ge f(\ell,\ell_{k-1})$, and that this holds with equality if $\ell=\ell_k$. Assume $\ell_{k-1}\ne \ell$ as otherwise there is nothing to show. Then
	\begin{align*}
		f(&\ell_{k-1},\ell)-f(\ell,\ell_{k-1})\\
		&= \min_{\substack{Y_1\ni \ell\colon\\Y_1\subseteq\cL- \ell_{k-1}}}m_w(Y_1) + \min_{Z_1\subseteq\cL- \ell}m_w(Z_1) - \min_{\substack{Y_2\ni \ell_{k-1}\colon\\Y_2\subseteq\cL- \ell}}m_w(Y_2) - \min_{Z_2\subseteq\cL- \ell_{k-1}}m_w(Z_2)\\
		&\ge \min_{Z_1\subseteq\cL- \ell}m_w(Z_1) - \min_{\substack{Y_2\ni \ell_{k-1}\colon\\Y_2\subseteq\cL- \ell}}m_w(Y_2)\\
		&=0,
	\end{align*}
	where the last equation follows by applying \pref{lem:quasiMin} to the quasi-convex function $m_w$. If $\ell=\ell_k$, then the same argument shows that the inequality can be replaced by equality. 
	
	It remains to show the second inequality of the monotonicity property. Due to the symmetry property, and by a renaming of leaves, it suffices to show that $\Phi_{\ell\ell_2\dots \ell_{k}}(w)\ge \Phi_{\ell_1\ell_2\dots \ell_{k}}(w)$. Assume $\ell\ne\ell_1$, otherwise we are done. In \pref{eq:potLeaves}, the only terms affected when the first leaf is replaced are the ones involving $X_0$ and $X_1$. Let $X_0$ and $X_1$ be these sets in $\Phi_{\ell_1\ell_2\dots \ell_{k}}(w)$ and $X_0'$ and $X_1'$ those in $\Phi_{\ell\ell_2\dots \ell_{k}}(w)$. Then $X_0=\ell_1\dots\ell_k$, $X_0'=\ell\ell_2\dots\ell_{k}$, and since $\min_X m_w(X)=m_w(\ell_1\dots\ell_k)$, we can choose $X_1'=\ell_1\dots\ell_k$. Moreover, $X_0'$ satisfies the requirements of $X_1$ (apart from minimality, possibly), hence $m_w(X_1)\le m_w(X_0')$. Thus,
	\begin{align*}
		\Phi_{\ell\ell_2\dots \ell_{k}}(w)-\Phi_{\ell_1\ell_2\dots \ell_{k}}(w)
		&= m_w(X_0')+m_w(X_1')-m_w(X_0)-m_w(X_1)\ge 0.\qedhere
	\end{align*}
\end{proof}

\subsection*{Step 2: $x_1,\dots,x_k$ are indeed leaves}
\begin{lemma}\label{lem:minwminPhi}
	Let $w\in \cW^k$. Let $\ell_1,\dots,\ell_k$ be leaves such that $\min_X m_w(X)=m_w(\ell_1\dots\ell_k)$. Then $\Phi(w)=\Phi_{\ell_1\dots\ell_k}(w)$.
\end{lemma}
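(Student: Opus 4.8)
\textbf{Proof plan for \pref{lem:minwminPhi}.}

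The plan is to prove the two inequalities $\Phi(w)\le\Phi_{\ell_1\dots\ell_k}(w)$ and $\Phi(w)\ge\Phi_{\ell_1\dots\ell_k}(w)$ separately. The first is trivial since $\Phi(w)$ is a minimum over all choices of $x_1,\dots,x_k$, so the whole content is in showing that no choice of $x_1,\dots,x_k$ does better than leaves $\ell_1,\dots,\ell_k$ that witness $\min_X m_w(X)$. By the Symmetry part of \pref{lem:symmetry}, the value $\Phi_{\ell_1\dots\ell_k}(w)$ does not depend on the order of the $\ell_i$, so it is legitimate to speak of ``the'' leaf-potential. I would prove the nontrivial inequality by induction on $k$, and the key reduction will be to ``unpeel'' one coordinate: given an arbitrary minimizer $x_1,\dots,x_k$ of $\Phi$, first argue one may take $x_k$ to be a leaf, then pass to $\tw:=w(\wc\, x_k)\in\cW^{k-1}$ and apply the induction hypothesis.

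In more detail, for the induction step I would proceed as follows. Fix a minimizer $(x_1,\dots,x_k)$ of $\Phi_{x_1\dots x_k}(w)$. \emph{Step A: the last point can be taken to be a leaf.} Sliding $x_k$ along its ray away from the center towards a leaf $\ell$ does not increase $w(X)$ faster than $cx_k$ decreases; one has to check that each of the $k+1$ terms $w(\bar x_i^i x_{i+1}\dots x_k)$ in $\Phi_{x_1\dots x_k}(w)$ is nonincreasing under this slide (for $i<k$ the term contains $x_k$ as an ordinary coordinate and moving $x_k$ towards the leaf decreases the relevant ``across the center'' distance exactly as in the proof of \pref{lem:potTermLeaves}; the $i=k$ term $w(\bar x_k^k)$ needs the antipode identity $w(\bar x_k^k)=w(\bar x_k^{k-1}c)+(k-1)(\Delta-cx_k)+\dots$, i.e.\ that moving $x_k$ outward moves $\bar x_k$ inward). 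So WLOG $x_k=\ell_k'$ is a leaf. \emph{Step B: peel off $\ell_k'$.} Write $\tw=w(\wc\,\ell_k')\in\cW^{k-1}$. As noted in the proof of \pref{lem:symmetry}, $\Phi_{x_1\dots x_{k-1}\ell_k'}(w)=\Phi_{x_1\dots x_{k-1}}(\tw)+w(\bar\ell_k'^{\,k})$, a relation in which the last term is a constant independent of $x_1,\dots,x_{k-1}$. Hence minimizing $\Phi_{\wc}(w)$ over all choices ending in the leaf $\ell_k'$ is the same as minimizing $\Phi_{\wc}(\tw)$ over $\cW^{k-1}$, up to this constant. \emph{Step C: apply induction and match $m$-minimizers.} By the induction hypothesis applied to $\tw$, $\Phi(\tw)=\Phi_{\ell_1''\dots\ell_{k-1}''}(\tw)$ for leaves with $m_{\tw}(\ell_1''\dots\ell_{k-1}'')=\min_X m_{\tw}(X)$. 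It remains to relate this to $w$: one checks $m_{\tw}(Y)=m_w(Y\ell_k')$ up to an additive constant, so a minimizer of $m_{\tw}$ corresponds to a set $Y\ell_k'$ minimizing $m_w$ over $k$-sets containing $\ell_k'$; by \pref{lem:quasiMin} (applied to the quasiconvex $m_w$, with the global minimizer $\ell_1\dots\ell_k$) one can choose this so that $Y\ell_k'$ shares $k-1$ leaves with $\ell_1\dots\ell_k$. Combining Steps A--C gives $\Phi(w)=\Phi_{(x_1,\dots,x_k)}(w)\ge w(\bar\ell_k'^{\,k})+\Phi(\tw)=w(\bar\ell_k'^{\,k})+\Phi_{\ell_1''\dots\ell_{k-1}''}(\tw)$, which equals $\Phi_{\ell_1''\dots\ell_{k-1}''\ell_k'}(w)$; and finally the Monotonicity part of \pref{lem:symmetry} (together with Symmetry) lets me replace this leaf tuple by $\ell_1,\dots,\ell_k$ itself, since both are $m_w$-minimizing leaf tuples, yielding $\Phi_{\ell_1\dots\ell_k}(w)\le\Phi_{\ell_1''\dots\ell_{k-1}''\ell_k'}(w)$ and hence equality with $\Phi(w)$.

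The step I expect to be the main obstacle is \emph{Step A}, reducing to $x_k$ being a leaf while keeping control of \emph{all} $k+1$ summands simultaneously — in particular the antipodal term $w(\bar x_k^{\,k})$, whose behaviour under sliding $x_k$ is governed by the antipode moving in the opposite direction and is easy to get backwards. A secondary subtlety is the bookkeeping in Step C that the set minimizing $m_{\tw}$ can be chosen to contain the right leaves; this is exactly the kind of place \pref{lem:quasiMin} is designed for, but one must be careful that after peeling $\ell_k'$ the ambient configuration class still admits the comparison (the global minimizer $\ell_1\dots\ell_k$ need not contain $\ell_k'$, so one genuinely needs the ``extend a near-minimizer'' form of \pref{lem:quasiMin} rather than a naive exchange). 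Everything else — the $k=1$ base case, the Symmetry invocation, and the final Monotonicity cleanup — is routine given the lemmas already established in Step 1.
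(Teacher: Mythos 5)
Your overall skeleton (induction on $k$, peel off the last coordinate via $\tw=w(\wc\,x_k)$, finish with Symmetry/Monotonicity) matches the paper, but you have reversed the order of the two key operations, and the step you rightly flag as the main obstacle --- Step A --- is a genuine gap: the justification you propose for it is false. The paper does \emph{not} first make $x_k$ a leaf; it peels first with the last coordinate $x$ still arbitrary, uses \pref{lem:quasiSub} and the induction hypothesis to reach $\Phi(w)=\Phi_{\ell_1\dots\ell_{k-1}x}(w)$, and only then replaces $x$ by the leaf $\ell$ below it. Your term-by-term monotonicity claim fails already for $k=1$: if $w$ is a cone at the center $c$, then $w(x_1)=cx_1$ strictly \emph{increases} as $x_1$ slides toward the leaf, and the sum $w(x_1)+w(\bar x_1)$ is preserved only because the antipodal term decreases by exactly the same amount. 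For general $k$ the $k$ terms containing $x_k$ as an ordinary coordinate can each increase by the slid distance, and the single term $w(\bar x_k^k)$ must absorb all of these increases; whether it does is precisely the hard content of the lemma, not something the Lipschitz bound gives you. The paper's actual mechanism is structurally different from sliding: in the transformation \pref{eq:transform1} the point $x$ is absorbed into each summand as an \emph{extra copy of the antipodal leaf} $\bl_i$ (producing $w(\bl_i^{i+1}\ell_{i+1}\dots\ell_{k-1})$, not $w(\bl_i^{i}\ell_{i+1}\dots\ell_{k-1}\ell)$), and the residual case $a<s$ is handled by a quasiconvexity-plus-pigeonhole exchange (\pref{cl:quasiPigeon}) that crucially exploits that the other $k-1$ coordinates are already the specific leaves $\ell_1,\dots,\ell_{k-1}$ of an $m_w$-minimizer, reordered so that the copies of $\ell$ sit at the end. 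Performing your Step A \emph{before} peeling, with arbitrary $x_1,\dots,x_{k-1}$, discards exactly the structure this exchange needs, so even repairing Step A along your lines would require rebuilding most of the paper's argument in a harder setting.

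Two smaller points: the tool you need in Step C is \pref{lem:quasiSub} (minimization over configurations \emph{containing} a prescribed set), not \pref{lem:quasiMin}; and the paper's base case is $k=0$, which is cleaner than starting at $k=1$. Steps B and C are otherwise consistent with the paper's proof.
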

\begin{proof}
	By induction on $k$. The base case $k=0$ is trivial. For the induction step, fix $x$ such that $\Phi(w)=\Phi_{x_1\dots x_{k-1}x}(w)$ for some  $x_1,\dots,x_{k-1}$. Consider the function $\tw=w({}\cdot{}x)\in\cW^{k-1}$. By \pref{lem:quasiSub}, $\min m_{\tw}(X)=m_{\tw}(\ell_1\dots\ell_{k}-\ell')$ for some $\ell'\in\ell_1\dots\ell_k$. By \pref{lem:symmetry}, we can assume without loss of generality that $\ell'=\ell_k$, i.e., $\min m_{\tw}(X)=m_{\tw}(\ell_1\dots\ell_{k-1})$. By the induction hypothesis, $\Phi(\tw)=\Phi_{\ell_1\dots\ell_{k-1}}(\tw)$. Hence,
	\begin{align*}
		\Phi(w)&=\min_{x_1\dots x_{k-1}}\Phi_{x_1\dots x_{k-1}x}(w)\\
		&= \min_{x_1\dots x_{k-1}}\Phi_{x_1\dots x_{k-1}}(\tw) + w(\bx^k)\\
		&= \Phi_{\ell_1\dots \ell_{k-1}}(\tw) + w(\bx^k)\\
		&= \Phi_{\ell_1\dots\ell_{k-1}x}(w).
	\end{align*}
	We will now transform the last expression in several steps with the goal of eventually replacing $x$ by $\ell_k$.
	
	Denote by $\ell$ the leaf below $x$. The goal of the following transformations is to replace $x$ by $\ell$. We have
	\begin{align*}
		w(\bx^k)=w(\bl^a\ell^{k-a})+a\cdot x\ell + (k-a)\cdot x\bl
	\end{align*}
	for some $a\in\{0,1,\dots,k\}$. 
	
	The symmetry property of \pref{lem:symmetry} allows us to assume that $\ell_1,\dots,\ell_{s-1}$ are all different from $\ell$ and $\ell_{s}=\ell_{s+1}=\dots=\ell_{k-1}=\ell$ for some $s\in\{1,\dots,k\}$.
	
	As an intermediate step, we will show by (backwards) induction on $j=k,k-1,\dots,\max\{s,a\}$ that
	\begin{align}
		\Phi(w)\ge \sum_{i=0}^{j-1}w(\bl_i^i\ell_{i+1}\dots\ell_{k-1}x)+\sum_{i=j}^{k-1}w(\bl_i^{i+1}\ell_{i+1}\dots\ell_{k-1}) \nonumber\\
		\qquad\qquad+ w(\bl^a\ell^{j-a}\ell_j\dots\ell_{k-1})+a\cdot x\ell + (j-a)\cdot x\bl\label{eq:transform1}
	\end{align}
	The base case $k=j$ follows from the previous equation. Suppose now that \pref{eq:transform1} holds for some $j>\max\{s,a\}$. From $\ell_{j-1}=\ell$ we get
	\begin{align*}
		w(\bl_{j-1}^{j-1}\ell_{j}\dots\ell_{k-1}x)&\ge w(\bl_{j-1}^{j}\ell_{j}\dots\ell_{k-1}) - x\bl
		\intertext{and}
		w(\bl^a\ell^{j-a}\ell_j\dots\ell_{k-1})&= w(\bl^a\ell^{j-1-a}\ell_{j-1}\dots\ell_{k-1}).
	\end{align*}
	The induction step of \pref{eq:transform1} follows by plugging these in to \pref{eq:transform1}.
	
	If $a\ge s$, then \pref{eq:transform1} for $j=a$ yields
	\begin{align*}
		\Phi(w)&\ge \sum_{i=0}^{a-1}w(\bl_i^i\ell_{i+1}\dots\ell_{k-1}x)+\sum_{i=a}^{k-1}w(\bl_i^{i+1}\ell_{i+1}\dots\ell_{k-1})+ w(\bl^a\ell_a\dots\ell_{k-1})+a\cdot x\ell\\
		&\ge \sum_{i=0}^{a-1}w(\bl_i^i\ell_{i+1}\dots\ell_{k-1}\ell)+\sum_{i=a}^{k-1}w(\bl_i^{i+1}\ell_{i+1}\dots\ell_{k-1}) + w(\bl^a\ell_a\dots\ell_{k-1})\\
		&= \Phi_{\ell_1\dots\ell_{k-1}\ell}(w),
	\end{align*}
	where we have used that $\ell_i=\ell$ for $i\ge a\ge s$. The lemma then follows from the monotonicity property of \pref{lem:symmetry}.
	
	Otherwise, $a<s$, and from \pref{eq:transform1} for $j=s$ we get
	\begin{align}
		\Phi(w)&\ge \sum_{i=0}^{s-1}w(\bl_i^i\ell_{i+1}\dots\ell_{k-1}x)+\sum_{i=s}^{k-1}w(\bl_i^{i+1}\ell_{i+1}\dots\ell_{k-1}) \nonumber\\
		&\qquad\qquad+ w(\bl^a\ell^{s-a}\ell_s\dots\ell_{k-1})+a\cdot x\ell + (s-a)\cdot x\bl\nonumber\\
		&\ge \sum_{i=0}^{s-a-1}w(\bl_i^i\ell_{i+1}\dots\ell_{k-1}x)+\sum_{i=s-a}^{s-1}w(\bl_i^i\ell_{i+1}\dots\ell_{k-1}\ell)+\sum_{i=s+1}^{k}w(\bl^{i}\ell^{k-i}) \nonumber\\
		&\qquad\qquad+ w(\bl^a\ell^{s-a}\ell_s\dots\ell_{k-1}) + (s-a)\cdot x\bl.\nonumber
	\end{align}
	
	By \pref{cl:quasiPigeon} below, replacing $a$ by $a+1$ does not increase the latter quantity. Inductively we may therefore replace $a$ by $s$ to obtain
	\begin{align*}
		\Phi(w)&\ge \sum_{i=0}^{s-1}w(\bl_i^i\ell_{i+1}\dots\ell_{k-1}\ell)+\sum_{i=s+1}^{k}w(\bl^{i}\ell^{k-i}) \nonumber+ w(\bl^s\ell^{k-s})\\
		&= \Phi_{\ell_1\dots\ell_{k-1}\ell}(w).
	\end{align*}
	The monotonicity property of \pref{lem:symmetry} completes the proof.
\end{proof}

\begin{claim}\label{cl:quasiPigeon}
	Let $0\le a<s\le k$ and $w\in \cW^k$. Let $\ell_{s-a-1},\dots,\ell_{k-1}$ and $\ell$ be leaves such that $\ell_i\ne \ell$ for $i< s$, and let $x$ be a point on the ray of $\ell$. Then 
	\begin{align*}
		&w(\bl_{s-a-1}^{s-a-1}\ell_{s-a}\dots \ell_{k-1} x)+w(\bl^a\ell^{s-a}\ell_s\dots\ell_{k-1})+x\bl\\
		&\qquad \qquad \ge w(\bl_{s-a-1}^{s-a-1}\ell_{s-a}\dots \ell_{k-1} \ell)+w(\bl^{a+1}\ell^{s-a-1}\ell_s\dots\ell_{k-1}).
	\end{align*}
\end{claim}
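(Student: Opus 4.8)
The plan is to rewrite every work-function value in the claim as a constrained minimum of the quasiconvex function $m_w$ using \pref{lem:potTermLeaves}, absorb the resulting constants, and finish with one use of quasiconvexity. First I would apply \pref{lem:potTermLeaves} to each of the four terms $w(\bl_{s-a-1}^{s-a-1}\ell_{s-a}\dots\ell_{k-1}x)$, $w(\bl_{s-a-1}^{s-a-1}\ell_{s-a}\dots\ell_{k-1}\ell)$, $w(\bl^a\ell^{s-a}\ell_s\dots\ell_{k-1})$ and $w(\bl^{a+1}\ell^{s-a-1}\ell_s\dots\ell_{k-1})$ — the first two with distinguished leaf $\ell_{s-a-1}$, the last two with distinguished leaf $\ell$ — so that each becomes $M_\bullet+(\text{explicit constant})$, where $M_\bullet$ is a minimum of $m_w$ over configurations that contain a prescribed sub-multiset and whose remaining points are leaves avoiding a specified leaf. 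Using $x\bl=\Delta-x\ell$ and $x\ell=c\ell-cx$ one checks that the constants on the left of the claim (including the $+\,x\bl$) exceed those on the right by exactly $2\,cx$, so the claim becomes
\[
M_1+M_2+2\,cx \ \ge\ M_1'+M_2',
\]
where $M_1,M_1'$ force $\{\ell_{s-a},\dots,\ell_{k-1}\}$ together with $x$ (resp.\ $\ell$) and keep the other $s-a-1$ points among leaves $\ne\ell_{s-a-1}$, while $M_2,M_2'$ force $\{\ell_s,\dots,\ell_{k-1}\}$ together with $\ell^{s-a}$ (resp.\ $\ell^{s-a-1}$) and keep the rest among leaves $\ne\ell$. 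Since $\{\ell_s,\dots,\ell_{k-1}\}$ is forced in all four minima, I would fix those servers and from now on argue with the $s$-server quasiconvex function $\phi:=m_{w(\wc\,\ell_s\dots\ell_{k-1})}$ on $s$-point configurations.

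Everything geometric reduces to one inequality: for a configuration $X$, a point $p\in X$ and a point $q$, moving $p$ to $q$ changes $\phi$ by at most $pq+d(c,p)-d(c,q)$ — immediate from $1$-Lipschitzness of the work function, since $d(c^s,X-p+q)-d(c^s,X)=d(c,q)-d(c,p)$. Two specializations are used: with $q=\ell$, since $x$ lies between $c$ and $\ell$ on one ray we get $\phi(X-x+\ell)\le\phi(X)$, hence $M_1'\le M_1$; with $q$ a leaf $\ne\ell$ on another ray (so $xq=cx+cq$) we get that turning the interior point $x$ into a leaf $\ne\ell$ costs at most $2\,cx$ in $\phi$.

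The heart of the proof is a bound on $M_2'$ obtained by merging $M_1$ and $M_2$. Take minimizers $Y_1^*$ of $M_1$ and $Y_2^*$ of $M_2$, apply quasiconvexity of $\phi$ to obtain a matching $\nu\colon Y_1^*\to Y_2^*$ fixing $Y_1^*\cap Y_2^*$, and choose a swap-set so that one output configuration is $M_1'$-admissible — it still contains every forced point together with $\ell$ — and the other becomes $M_2'$-admissible once its single interior point $x$ is replaced by a leaf $\ne\ell$. If $\nu$ pairs $x$ with one of the $s-a$ copies of $\ell$ in $Y_2^*$, then the swap-set $Y_1^*\setminus\{x\}$ does exactly this and gives $M_1+M_2\ge M_1'+M_2'-2\,cx$, which is the reduced claim.

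The hard part will be the remaining case, where $\nu$ pairs $x$ with a leaf of $Y_2^*$ other than $\ell$; then one has to follow the matching one step further, taking $q\in Y_1^*$ with $\nu(q)=\ell$ and performing the combined swap of $\{x,q\}$. Whether the two resulting configurations are admissible for $M_1'$ and $M_2'$ (up to the permitted $2\,cx$ correction) depends on \emph{where} $q$ lies in $Y_1^*$ — a free leaf, one of the forced leaves $\ell_{s-a},\dots,\ell_{s-1}$, or a copy of $\ell$ or of $\ell_{s-a-1}$ — so a sub-case analysis seems unavoidable; in the benign sub-cases the combined swap in fact yields $M_2\ge M_2'$, which together with $M_1'\le M_1$ finishes the claim, while in the awkward ones I would use the freedom in the choice of minimizers (applying \pref{lem:quasiMin} and \pref{lem:quasiSub} to $\phi$) to arrange that they use the distinguished leaves as sparingly as possible, ruling out the offending coincidences. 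I expect this bookkeeping of cases — not any one inequality — to be the genuine obstacle.
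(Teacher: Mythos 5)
Your reduction to $M_1+M_2+2\,cx\ge M_1'+M_2'$ is correct (the constant bookkeeping via \pref{lem:potTermLeaves} does come out to $2\,cx$, using $x\bl=\Delta-c\ell+cx$), and the case where the quasiconvexity matching pairs $x$ with a copy of $\ell$ is handled correctly. But the proof is not complete: you explicitly leave open the case where $\nu(x)$ is not a copy of $\ell$, and that is a genuine gap rather than routine bookkeeping. Once you pass from the two explicit configurations of the claim to \emph{minimizers} $Y_1^*,Y_2^*$ of the constrained minima, you lose all control over the multiplicities of $\ell$ and of the free leaves in $Y_1^*$ and $Y_2^*$ (for instance, $Y_1^*$ may contain several copies of $\ell$ among its free points, which the matching then fixes, so that no copy of $\ell$ in $Y_2^*$ need be matched to $x$ at all); there is no pigeonhole available, and the ``combined swap of $\{x,q\}$'' need not produce configurations admissible for $M_1'$ and $M_2'$. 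The deferred case analysis is exactly where the content of the claim lies, and you give no argument that it closes.

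The paper's proof avoids this difficulty entirely by applying quasiconvexity \emph{directly to the two multisets appearing in the statement}, not to minimizers of derived quantities. There the structure is rigid: after fixing the common points $\ell_s\dots\ell_{k-1}$, the second configuration contains $s-a$ copies of $\ell$ while the first contains only $s-a-1$ copies of $\bl_{s-a-1}$, so by pigeonhole some copy of $\ell$ must be matched to a point $p\in\ell_{s-a}\dots\ell_{s-1}x$. Swapping that single pair and then applying $1$-Lipschitzness twice (moving $x$ to $p$ in one resulting configuration and $p$ to $\bl$ in the other) costs $px+p\bl$, which equals exactly the available $x\bl$ because $x$ lies on the path from $p$ to $\ell$ for every admissible $p$ (this is where the hypotheses $\ell_i\ne\ell$ for $i<s$ and ``$x$ on the ray of $\ell$'' are used). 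I would recommend abandoning the $m_w$-reformulation for this claim and arguing along those lines: the reformulation is what created the hard case.
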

\begin{proof}
	Consider the bijection from the definition of quasiconvexity between the two configurations on the left hand side.\footnote{We remark that earlier proofs about competitiveness of the work function algorithm only used a weaker form of quasi-convexity and did not actually use the existence of such a bijection.} By the pigeonhole principle, at least one of the $s-a$ copies of $\ell$ in the second configuration maps to some point $p\in\ell_{s-a}\dots\ell_{s-1}x$ in the first configuration. Quasiconvexity gives
	\begin{align*}
		&w(\bl_{s-a-1}^{s-a-1}\ell_{s-a}\dots \ell_{k-1} x)+w(\bl^a\ell^{s-a}\ell_s\dots\ell_{k-1})\\
		&\qquad\qquad \ge w(\bl_{s-a-1}^{s-a-1}\ell_{s-a}\dots \ell_{k-1}\ell x-p)+w(\bl^a\ell^{s-a-1}\ell_s\dots\ell_{k-1}p).
	\end{align*}
	By $1$-Lipschitzness of $w$, we get
	\begin{align*}
		w(\bl_{s-a-1}^{s-a-1}\ell_{s-a}\dots \ell_{k-1}\ell x-p)&\ge w(\bl_{s-a-1}^{s-a-1}\ell_{s-a}\dots \ell_{k-1}\ell) - px
		\intertext{and}
		w(\bl^a\ell^{s-a-1}\ell_s\dots\ell_{k-1}p)&\ge w(\bl^{a+1}\ell^{s-a-1}\ell_s\dots\ell_{k-1})-p\bl.
	\end{align*}
	Since $\ell_i\ne \ell$ for $i<s$ and $p\in \ell_{s-a}\dots\ell_{s-1} x$, the point $x$ lies on the path from $p$ to $\ell$, i.e., $px+x\ell=p\ell$. Equivalently, $x\bl=px+p\bl$. The claim follows by combining these inequalities.
\end{proof}

\subsection*{Step 3: Alternatively, $x_k=r$}

\begin{lemma}\label{lem:multirayPhir}
	For any $w\in\cW^k(r)$, there exist leaves $\ell_1,\dots,\ell_{k-1}$ such that $\Phi(w)=\Phi_{\ell_1\dots\ell_{k-1}r}(w)$.
\end{lemma}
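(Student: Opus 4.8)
The plan is to prove the equivalent inequality
$$\Phi(w)\ \ge\ \Phi\bigl(w(\wc r)\bigr)+w(\br^k),$$
where $w(\wc r)\in\cW_M^{k-1}$ denotes $w$ with one server pinned at $r$. The reduction is purely formal: $w(\bx_i^i x_{i+1}\dots x_{k-1}r)=w(\wc r)(\bx_i^i x_{i+1}\dots x_{k-1})$ for every $i<k$, so $\Phi_{x_1\dots x_{k-1}r}(w)=\Phi_{x_1\dots x_{k-1}}(w(\wc r))+w(\br^k)$; minimising over $x_1\dots x_{k-1}$ and applying \pref{lem:minwminPhi} to $w(\wc r)$ shows $\min_{x_1\dots x_{k-1}}\Phi_{x_1\dots x_{k-1}r}(w)=\Phi(w(\wc r))+w(\br^k)$, attained with the $x_i$ equal to leaves. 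Since $\Phi(w)\le\min_{x_1\dots x_{k-1}}\Phi_{x_1\dots x_{k-1}r}(w)$ holds trivially, the displayed inequality is equivalent to the lemma and produces the required leaves.

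Assume $r\ne c$ (the case $r=c$ is analogous) and let $\ell$ be the leaf on the ray through $r$. The key structural claim is that every leaf configuration minimising $m_w$ contains $\ell$: if $X_0$ is such a minimiser, then since $w\in\cW(r)$ every support configuration of $w$ contains $r$, so $X_0$ resolves from one of its (leaf) points to $r$; but resolving from a leaf $\ell'$ off the ray of $r$ would force, via $\ell'r=c\ell'+cr$, the equality $m_w(X_0-\ell'+r)=m_w(X_0)-2cr<\min_X m_w(X)$, impossible; hence it resolves from $\ell\in X_0$. Choosing $\ell_k=\ell$, so that $w(\ell_1\dots\ell_{k-1}\ell)=w(\ell_1\dots\ell_{k-1}r)+\ell r$, the identity $c\ell-r\ell=cr$ gives $m_w(r\ell_1\dots\ell_{k-1})=\min_X m_w(X)$, so $\ell_1\dots\ell_{k-1}$ minimises $m_{w(\wc r)}$ and therefore realises $\Phi(w(\wc r))$ by \pref{lem:minwminPhi}. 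The target inequality thus becomes the concrete estimate $\Phi_{\ell_1\dots\ell_{k-1}\ell}(w)\ge\Phi_{\ell_1\dots\ell_{k-1}r}(w)$.

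I would prove this last inequality by expanding both potentials with \pref{lem:potTermLeaves}: for each $i<k$ the two relevant summands become minimisations of $m_w$ over leaf configurations containing $\ell_{i+1}\dots\ell_{k-1}\ell$ (resp. $\ell_{i+1}\dots\ell_{k-1}r$) plus the explicit distances $c\ell$ (resp. $cr$); sliding $r\mapsto\ell$ inside the second minimisation never increases $m_w$ since $\ell$ is a leaf on $r$'s ray, which relates the two minima, and for $i=0$ the resolution above makes the comparison exact. The remaining summand $w(\br^k)$ --- with $r$ an interior point, so outside the scope of \pref{lem:potTermLeaves} --- is handled via $w(\br^k)=k\Delta+\min_X\bigl(w(X)-d(X,r^k)\bigr)$, reducing the minimisation to leaf configurations and grouping them by how many copies of $\ell$ they contain; since $\ell_1\dots\ell_{k-1}\ell$ minimises $m_w$, the class with one copy of $\ell$ already attains $\min_X m_w(X)$, which anchors the bookkeeping.

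The delicate point --- and the place where the sketch above needs the most care --- is that plain $1$-Lipschitz bounds on $w(\bl_i^i\ell_{i+1}\dots\ell_{k-1}\ell)-w(\bl_i^i\ell_{i+1}\dots\ell_{k-1}r)$ and on $w(\br^k)-w(\bl^k)$ each leak a term of order $k\cdot r\ell$, so they do not add up to the required inequality. One must instead extract sharp comparisons from quasiconvexity of $m_w$ (via \pref{lem:quasiMin}/\pref{lem:quasiSub}, possibly after a short case split according to whether $m_w$ has a good minimiser avoiding $\ell$), and make the constrained minimisations occurring in the $i<k$ summands telescope against those in $w(\bl^k)$ and $w(\br^k)$ while keeping track of the centre-distances introduced by \pref{lem:potTermLeaves}.
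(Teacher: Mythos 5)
Your reduction is sound and matches the paper's: via the identity $\Phi_{x_1\dots x_{k-1}r}(w)=\Phi_{x_1\dots x_{k-1}}(w(\wc r))+w(\br^k)$ and \pref{lem:minwminPhi} applied to $w(\wc r)$, the lemma does amount to showing $\Phi_{\ell_1\dots\ell_{k-1}\ell}(w)\ge\Phi_{\ell_1\dots\ell_{k-1}r}(w)$ for a leaf minimizer of $m_w$ containing the leaf $\ell$ of $r$'s ray, and your argument that such a minimizer must contain $\ell$ (a resolution to $r$ from an off-ray leaf would decrease $m_w$ by $2\cdot cr$) is correct and is essentially the paper's opening move. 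But the proof stops exactly where the real work begins. You yourself observe that Lipschitz bounds leak $\Theta(k\cdot r\ell)$ in aggregate, and then merely assert that "sharp comparisons from quasiconvexity" and some unspecified telescoping will fix this; that assertion \emph{is} the lemma. Worse, the same-position comparison you set up (swapping $\ell\leftrightarrow r$ only in the last slot of each summand) does not even give the right sign term by term: from \pref{lem:potTermLeaves} the difference of the $i$th summands is $(\min_\ell-\min_r)+r\ell$ with $\min_r-\min_\ell\in[0,2\,r\ell]$, so individual summands can go either way, and the deficit of up to $k\cdot r\ell$ in the single term $w(\br^k)-w(\bl^k)$ has no visible counterpart to cancel against.

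What the paper actually proves is a coordinated resolution structure, equation \eqref{eq:mainTask}: there is a threshold $a$ such that $w(\bl_i^i\ell_{i+1}\dots\ell_k)$ resolves to $r$ from a copy of $\ell$ when $i<a$ and from $\bl_i$ when $i\ge a$. This realigns term $i$ of $\Phi_{\ell\ell_2\dots\ell_k}$ with term $i$ or $i-1$ of the \emph{shifted} potential $\Phi_{\ell_2\dots\ell_kr}$ (drop $\ell$ at the front, append $r$ at the end), so the sum telescopes with one leftover summand that is absorbed into $w(\br^k)$ by a single Lipschitz step using the antipodal identities $r\bl_i=\br\ell_i=\Delta-r\ell_i$. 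Establishing \eqref{eq:mainTask} is the bulk of the proof: one chooses $a$ maximal so that $\bl^{a-1}\ell_a\dots\ell_k$ resolves from $\ell$, handles $i<a$ by backwards induction via \pref{lem:resolveMonotone}, and handles the ranges $i>s$ and $a\le i\le s$ by two separate arguments combining \pref{lem:potTermLeaves} with \pref{lem:quasiMin} applied to restricted minimizations of $m_w$. None of this machinery is present or foreshadowed in your sketch, so the proposal has a genuine gap at its core step.
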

\begin{proof}
	Since $w\in\cW^k(r)$, we can choose $X\in\arg\min_X m_w(X)$ of the form $X=r\ell_2\dots\ell_k$ for $\ell_2,\dots,\ell_k\in\cL$. If $\ell:=\ell_1$ is the leaf of the ray containing $r$, then clearly $\ell_1\dots\ell_k$ is also a minimizer of $m_w$ and $\ell_1\dots\ell_k$ resolves from $\ell_1$. Let $\ell_2,\dots,\ell_k$ be ordered such that $\ell=\ell_1=\dots=\ell_s$ for some $s\ge 1$ and $\ell_i\ne\ell$ for $i>s$.
	
	The main part of this proof is to show that there exists $a\in\{1,\dots,s\}$ such that
	\begin{align}
		w(\bl_{i}^i\ell_{i+1}\dots\ell_k)= \begin{cases}
			w(\bl_{i}^i\ell_{i+2}\dots\ell_kr)+r\ell \qquad&\text{if }i< a\\
			w(\bl_{i}^{i-1}\ell_{i+1}\dots\ell_kr)+r\bl_i \qquad&\text{if }i\ge a.
		\end{cases}\label{eq:mainTask}
	\end{align}
	Before we prove this, let us see why it implies the lemma. By \pref{lem:minwminPhi} and the fact that $\ell=\ell_1=\dots=\ell_a$, we have
	\begin{align*}
		\Phi(w)&=\Phi_{\ell_1\dots\ell_k}(w)\\
		&= \sum_{i=0}^k w(\bl_{i}^i\ell_{i+1}\dots\ell_k)\\
		&= \sum_{i=0}^{a-1}w(\bl_{i}^i\ell_{i+2}\dots\ell_kr) + \sum_{i=a}^kw(\bl_{i}^{i-1}\ell_{i+1}\dots\ell_kr) + a\cdot r\ell + \sum_{i=a}^{k}r\bl_i\\
		&= \sum_{i=0}^{k-1}w(\bl_{i+1}^i\ell_{i+2}\dots\ell_kr) + w(\bl^{a-1}\ell_{a+1}\dots\ell_kr) +  (a-1)\cdot r\ell + \sum_{i=a+1}^{k}\bar r\ell_i + \Delta\\
		&\ge \sum_{i=0}^{k-1}w(\bl_{i+1}^i\ell_{i+2}\dots\ell_kr) + w(\bar r^k)\\
		&= \Phi_{\ell_2\dots\ell_kr}(w).
	\end{align*}
	
	It remains to show \pref{eq:mainTask}. We choose $a$ maximal such that $\bl_{a-1}^{a-1}\ell_{a}\dots\ell_k$ resolves from $\ell$. Recall from the start of this proof that $\ell_1\dots\ell_k$ resolves from $\ell_1$, so $a\ge 1$. Moreover, $a\le s$ since $\ell_i\ne\ell$ for $i>s$. So $a\in\{1,\dots,s\}$ as required. The case ``$i<a$'' of \pref{eq:mainTask} now follows by backwards induction on $i$, where the induction step is due to \pref{lem:resolveMonotone}.
	
	Consider now some $i > s$. Letting $\tw=w(\,\cdot\,\ell_{i+1}\dots\ell_k)\in\cW_{\cL+r}^i$, we have
	\begin{align*}
		w(\bl_i^i\ell_{i+1}\dots\ell_k)&= \tw(\bl_i^i)\\
		&= \min_{X\subseteq\cL+r-\ell_i}m_{\tw}(X) + i(\Delta-c\ell_i),
	\end{align*}
	where the last equation is proved similarly to \pref{lem:potTermLeaves}, but we may allow $X$ to contain the non-leaf $r$ since it is on a different ray than $\ell_i$ (thanks to $i>s$). Since $\min_X m_w(X)=m_w(r\ell_2\dots\ell_k)$, we have $\min_X m_{\tw}(X)=m_{\tw}(r\ell_2\dots \ell_i)$, so by \pref{lem:quasiMin} the minimum under the restriction $X\subseteq\cL+r-\ell_i$ is achieved for some $X$ with $r\in X$.  Thus,
	\begin{align*}
		w(\bl_i^i\ell_{i+1}\dots\ell_k)&= \min_{Y\subseteq\cL-\ell_i}\tw(Yr) - d(Y,c^{i-1}) -r\ell_i+ i(\Delta-c\ell_i)\\
		&= \min_{Y\subseteq\cL-\ell_i}\tw(Yr) + d(Y,\bl_i^{i-1}) +\bl_ir\\
		&\ge \tw(\bl_i^{i-1}r) +\bl_ir\\
		&= w(\bl_i^{i-1}r\ell_{i+1}\dots\ell_k) +r\bl_i.
	\end{align*}
	Note that the inequality between the first and last expression cannot be strict due to $1$-Lipschitzness of $w$, and their equality reveals that $\bl_i^i\ell_{i+1}\dots\ell_k$ resolves from $\bl_i$,	as desired.
	
	Finally, consider $i\in\{a,a+1,\dots,s\}$. Then $\ell_i=\ell$, and we need to show that $\bl^i\ell_{i+1}\dots\ell_k$ resolves from $\bl$. Suppose that it instead resolves from $\ell_h$ for some $h>i$. Since $a\le s$ was chosen maximal, we know that $\ell_h\ne\ell$. By \pref{lem:potTermLeaves},
	\begin{align*}
		w(\bl^i\ell_{i+1}\dots\ell_k)&=w(\bl^i\ell_{i+1}\dots\ell_{h-1}\ell_{h+1}\dots\ell_kr)+r\ell_h\\
		&= \min_{\substack{X\supseteq \ell_{i+1}\dots\ell_{h-1}\ell_{h+1}\dots\ell_kr\\ X-\ell_{i+1}\dots\ell_{h-1}\ell_{h+1}\dots\ell_kr\subseteq\cL-\ell}}m_w(X) + i(\Delta-c\ell) + \sum_{\substack{j=i+1\\j\ne h}}^k c\ell_j + cr+r\ell_h.\\
		&= \min_{Y\not\ni\ell}m_{\tw}(Y) + i(\Delta-c\ell) +r\ell_h,
	\end{align*}
	where $\tw=w(\,\cdot\,\ell_{i+1}\dots\ell_{h-1}\ell_{h+1}\dots\ell_kr)\in\cW_{\cL}^i$. Since $\min_X m_w(X)=m_w(r\ell_2\dots\ell_k)$, we have that $\min_X m_{\tw}(X)=m_{\tw}(\ell_2\dots \ell_i\ell_h)$, so by \pref{lem:quasiMin} the minimum under the restriction $Y\not\ni\ell$ is achieved for some $Y$ with $\ell_h\in Y$. Letting $Y'=Y-\ell_h$, we get
	\begin{align*}
		w(\bl^i\ell_{i+1}\dots\ell_k)&= m_{\tw}(Y) + i(\Delta-c\ell) +r\ell_h\\
		&= w(Y'\ell_{i+1}\dots\ell_kr) -d(Y,c^{i})+ i(\Delta-c\ell) +(rc+c\ell_h)\\
		&= w(Y'\ell_{i+1}\dots\ell_kr) -\sum_{y\in Y'}y\ell+i\Delta-c\ell+rc\\
		&\ge w(Y'\ell_{i+1}\dots\ell_kr) +\sum_{y\in Y'}y\bl+\Delta - r\ell\\
		&\ge w(\bl^{i-1}\ell_{i+1}\dots\ell_kr)+r\bl,
	\end{align*}
	where the second equation uses that $\ell\ne\ell_h$ and therefore $c$ lies on the path from $r$ to $\ell_h$, and the third equation uses that $y$ and $\ell$ are different leaves and therefore $yc+c\ell=y\ell$. Again, the inequality between the first and last expression cannot be strict due to $1$-Lipschitzness of $w$, and their equality reveals that $\bl^i\ell_{i+1}\dots\ell_k$ resolves from $\bl$, completing the proof.
\end{proof}

\begin{theorem}
	WFA is $k$-competitive on multiray spaces.
\end{theorem}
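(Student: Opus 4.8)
The plan is to read off the theorem directly from the pieces assembled in Steps 1--3. We work in the multiray space augmented with antipodes (as set up at the start of this section) and apply Corollary~\ref{cor:serverPot}: it suffices to show that for every point $r$ and every work function $w\in\cW_M^k(r)$, the value $\Phi(w)$ is attained by some tuple $x_1,\dots,x_k$ whose last coordinate is $r$. But this is precisely the content of \pref{lem:multirayPhir}, which produces leaves $\ell_1,\dots,\ell_{k-1}$ with $\Phi(w)=\Phi_{\ell_1\dots\ell_{k-1}r}(w)$. Hence the hypothesis of Corollary~\ref{cor:serverPot} is satisfied and WFA is $k$-competitive on multiray spaces.

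The only point that needs a word of care is that Corollary~\ref{cor:serverPot} (and Theorem~\ref{thm:evPot} behind it) is stated with the minimizing points $x_1,\dots,x_k$ required to lie in the original metric space $M$, not among the added antipodes; this is exactly the convention flagged when the potential was specialised to multiray spaces, where it was noted that this restriction does not affect the proof of Theorem~\ref{thm:evPot}. Since \pref{lem:multirayPhir} yields a minimizing tuple consisting only of leaves of the original multiray space together with the genuine request $r\in M$, this restriction is respected, and the argument goes through verbatim.

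I do not expect any residual obstacle at the level of the theorem itself: all the difficulty has been front-loaded into Steps 1--3. Concretely, \pref{lem:symmetry} supplies the symmetry and monotonicity needed to permute leaf coordinates freely (and to push an arbitrary leaf to the back) whenever the leaves jointly minimize $m_w$; \pref{lem:minwminPhi}, together with the pigeonhole/quasiconvexity argument of \pref{cl:quasiPigeon}, shows the $\Phi$-minimizing tuple can be taken entirely among such leaves; and \pref{lem:multirayPhir} then trades the last leaf coordinate for $r$ using \pref{lem:resolveMonotone}, \pref{lem:quasiMin}, and quasiconvexity of $m_w$. Granting these, the theorem follows in two lines as above, so the write-up is a short combination step rather than a new computation.
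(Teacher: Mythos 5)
Your proof is correct and matches the paper's own argument, which likewise derives the theorem immediately from Lemma~\ref{lem:multirayPhir} combined with Corollary~\ref{cor:serverPot}. The care you take about restricting the minimizing points to the original metric space (rather than the added antipodes) is exactly the convention the paper flags at the start of the multiray section, so nothing is missing.
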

\begin{proof}
	Follows from \pref{lem:multirayPhir} and Corollary~\ref{cor:serverPot}.
\end{proof}

\section{Trees}\label{sec:trees}
Let $V$ be the set of vertices of a tree. In Appendix~\ref{sec:quasiconcavity-trees} we show that a metric $(M,d)$ is a tree if and only if the map $d$ is quasiconcave (i.e., $-d$ is quasiconvex when viewed as a function defined on $2$-point sets). Our proof that WFA is $3$-competitive for the $3$-server problem on trees crucially relies on this property.

We again augment the tree by adding antipodes, as before. We should be careful, though, to apply quasiconcavity of the metric only to distances involving original tree points rather than antipodes.

\begin{lemma}\label{lem:k1arbitraryMinimizer}
	Let $w\in\cW^1$ and $c\in V$. Then $x\in\arg\min w(x)-cx\implies \Phi(w)=\Phi_x(w)$.
\end{lemma}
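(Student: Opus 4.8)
The plan is to reduce this $k=1$ statement to a direct computation using the structure of $\Phi$ in the one-server case. For $k=1$, the potential takes the simple form $\Phi_x(w)=w(x)+w(\bar x)$, and $\Phi(w)=\min_x(w(x)+w(\bar x))$. Since every point has an antipode, $w(\bar x)=\min_z w(z)+\bar x z=\min_z w(z)+(\Delta - xz)$ (using that $z$ ranges over original tree points, for which $\bar x z = \Delta - xz$), so $\Phi_x(w)=w(x)+\Delta+\min_z(w(z)-xz)$. Thus minimizing $\Phi_x(w)$ over $x$ amounts to minimizing $w(x)+\min_z(w(z)-xz)$ over $x$. I would fix $c\in V$ and let $x$ be any minimizer of $w(x)-cx$; the goal is to show this $x$ also minimizes $\Phi_x(w)=w(x)+\Delta+\min_z(w(z)-xz)$.

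First I would let $z^\ast$ attain $\min_z(w(z)-xz)$, so $\Phi(w)=\Phi_x(w)$ is equivalent to showing $w(x)+w(z^\ast)-xz^\ast \le w(x')+w(z')-x'z'$ for all $x',z'$. Take arbitrary $x',z'$; by symmetry of the expression we may assume $w(x')-cx'\le w(z')-cz'$, i.e.\ $x'$ is the point of the pair $\{x',z'\}$ that is "closer to being a $c$-minimizer". The key step is then to use the quasiconcavity of the tree metric $d$ restricted to original tree points: applied to the two-point sets $\{x,z^\ast\}$ — wait, more precisely I want to play $x'$ against the chosen minimizer. Since $x$ minimizes $w(\cdot)-c\,\cdot$, we have $w(x)-cx\le w(x')-cx'$, hence $w(x)-w(x')\le cx-cx'$. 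Combining with $w(z^\ast)-xz^\ast\le w(z')-xz'$ (optimality of $z^\ast$ against $x$) gives $w(x)+w(z^\ast)-xz^\ast\le w(x')+w(z')+(cx-cx')-xz'$, so it suffices to show $cx-cx'-xz'\le -x'z'$, i.e.\ $x'z'+cx\le cx'+xz'$. Equivalently $x'z' - xz' \le cx' - cx$. This is exactly a four-point inequality on the tree; by quasiconcavity of $d$ (Appendix~\ref{sec:quasiconcavity-trees}) applied to the pairs $\{x',z'\}$ and $\{c,x\}$ — noting all four points are original tree points, so the use of quasiconcavity is legitimate — there is a matching between these pairs along which the sum of distances does not increase, and choosing the crossing matching $\{x'c,\,z'x\}$ or $\{x'x,\,z'c\}$ appropriately yields $x'z'+cx\le \max\{x'c+z'x,\ x'x+z'c\}$. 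I then need to check that in either case the inequality $x'z'+cx\le cx'+xz'$ follows, possibly after swapping the roles of $x'$ and $z'$ using the symmetry reduction made above; this is where the assumption $w(x')-cx'\le w(z')-cz'$ gets used to pin down which of $x',z'$ must play the role forced by the matching.

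The main obstacle I anticipate is the bookkeeping in that last step: quasiconcavity of the metric only guarantees \emph{one} of the two crossing matchings is non-increasing, not the specific one needed, so I must argue that whichever matching is handed to us can be made to work after using the symmetry of the target expression and the $c$-optimality hypothesis on $x$. A secondary subtlety is ensuring the antipodes never enter any application of quasiconcavity — this is handled by always writing $w(\bar x)$ via its minimizer over original points $z$ before invoking any tree inequality, exactly as the lemma's hypothesis (which refers to $cx$ with $c\in V$ a genuine tree vertex) anticipates.
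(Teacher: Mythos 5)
Your argument is correct and is essentially the paper's proof read in the opposite direction: both hinge on applying quasiconcavity of the tree metric to the pairs $\{c,x\}$ and the pair realizing the other potential term, combined with the $c$-optimality of $x$ and the $x$-optimality of $z^\ast$ (which the paper phrases as $1$-Lipschitzness, $w(z)+\bar xz\ge w(\bar x)$), resolving the matching dichotomy via the symmetry of $w(x')+w(z')-x'z'$ in $x'$ and $z'$. One small remark: your opening WLOG $w(x')-cx'\le w(z')-cz'$ is never actually needed, since $x$ \emph{globally} minimizes $w(\cdot)-c\,\cdot$, so either of $x',z'$ may be paired with $c$ and whichever matching quasiconcavity hands you can be accommodated directly.
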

\begin{proof}
	Let  $x\in\arg\min w(x)-cx$ and let $y$ and $z$ be such that $\Phi(w)=w(y)+w(\by)=w(y)+w(z)+\Delta-yz$. By quasiconcavity, $cx+yz\le cy+xz$ or likewise with $y$ and $z$ reversed; we can assume without loss of generality that the written inequality is the correct one as $y$ and $z$ are symmetric in $\Phi(w)$. Then
	\begin{align*}
		\Phi(w)&=w(y)+w(z)+\Delta-yz\\
		&\ge w(y)+w(z)+\Delta+cx-cy-xz\\
		&\ge w(x)+w(z)+\Delta-xz\\
		&\ge w(x)+w(\bx)\\
		&=\Phi_x(w).
	\end{align*}
	where we have used that $w(y)-cy\ge w(x)-cx$ by choice of $x$.
\end{proof}

\begin{lemma}\label{lem:treeSwapx12}
	Let $w\in\cW^k$. There exist $x_1,\dots,x_k$ such that $\Phi(w)=\Phi_{x_1\dots x_k}(w)=\Phi_{x_2x_1x_3\dots x_k}(w)$ and a copy of $\bx_2$ in $\bx_2^2x_3\dots x_k$ resolves to $x_1$.
\end{lemma}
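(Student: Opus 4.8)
The statement asks for a minimizing tuple $x_1,\dots,x_k$ of $\Phi$ that is moreover robust to swapping the first two coordinates, with the extra ``resolves to $x_1$'' property. My first step would be to start from an arbitrary minimizing tuple $x_1,\dots,x_k$ and examine the only two terms of $\Phi_{x_1\dots x_k}(w)=\sum_{i=0}^k w(\bar x_i^i x_{i+1}\dots x_k)$ that change under the swap $x_1\leftrightarrow x_2$, namely the $i=1$ and $i=2$ terms:
\begin{align*}
w(\bar x_1 x_2 x_3\dots x_k)+w(\bar x_2^2 x_3\dots x_k)
\quad\text{versus}\quad
w(\bar x_2 x_1 x_3\dots x_k)+w(\bar x_1^2 x_3\dots x_k).
\end{align*}
Write $C=x_3\dots x_k$ (a $(k-2)$-multiset). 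By minimality of $\Phi_{x_1\dots x_k}$, the first sum is $\le$ the second, so it suffices to find a minimizer where also the reverse inequality holds, i.e.\ where these two sums are \emph{equal}; that equality is exactly $\Phi_{x_1\dots x_k}(w)=\Phi_{x_2x_1x_3\dots x_k}(w)$.

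**Producing the equality via quasiconcavity of the metric.** The natural tool is the fact (established in Appendix~\ref{sec:quasiconcavity-trees}, and usable here) that on a tree the metric itself is quasiconcave — but applied only to distances among \emph{original} tree points, not antipodes. The idea is to relate the four work-function values above to one another using the quasiconvexity bijection between the configurations $\bar x_1 C x_2$ and $\bar x_2^2 C$ (or between $\bar x_2 C x_1$ and $\bar x_1^2 C$), the $1$-Lipschitz property, and the identities $\bar x_1\bar x_2 = x_1 x_2$ and $p\bar q = \Delta - pq$ for original points $p,q$. Concretely I expect to argue: among all minimizing tuples, pick one for which (say) $x_1 x_2$ is as small as possible, or for which some auxiliary quantity like $w(\bar x_1^2 C)+w(\bar x_2^2 C)$ is extremal; then if the swap strictly decreased $\Phi$ we'd contradict minimality, and if it kept $\Phi$ the same we're done, so the only remaining case is that the $i=1,i=2$ terms are unaffected by the swap, which can happen only when a certain resolution holds. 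Chasing the resolution direction of $\bar x_2 C x_1$ (does it resolve from $\bar x_2$ to $x_1$, or from some coordinate of $C$?) via \pref{lem:resolveMonotone}-style arguments is how I'd extract the final clause ``a copy of $\bar x_2$ in $\bar x_2^2 x_3\dots x_k$ resolves to $x_1$.''

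**Getting the ``resolves to $x_1$'' clause.** For the last requirement, note $\bar x_2^2 C$ is a $k$-multiset containing the point $\bar x_2$ twice; I want one of these copies to resolve to $x_1$, i.e.\ $w(\bar x_2^2 C)=w(\bar x_2 x_1 C)+\bar x_2 x_1$. If this fails, then by $1$-Lipschitzness and quasiconvexity the copy of $\bar x_2$ resolves to some other point — either $x_1$ (done), or a coordinate of $C=x_3\dots x_k$, or $\bar x_2$ stays put (meaning $\bar x_2^2 C\in\supp(w)$, which combined with $w\in\cW^k(r)$ and $x_k=r$-type constraints I can usually contradict or handle). In the bad subcase where $\bar x_2$ resolves to some $x_j\in C$, I would re-run the swap argument with $x_2$ and $x_j$ exchanged — using the symmetry of the later coordinates of $\Phi$ under permutation together with \pref{lem:resolveMonotone} — to move into the good case, possibly after re-choosing the minimizing tuple. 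This bookkeeping (keeping $\Phi$ minimal while simultaneously arranging \emph{both} the swap-invariance and the resolution) is where I expect the real difficulty to lie; the individual inequalities are routine quasiconvexity/Lipschitz manipulations, but threading a single tuple through all the constraints — and invoking tree-quasiconcavity only on legitimate (original-point) distances — will require care, very much in the spirit of the multiray proof's Step 1 reorderings and \pref{lem:push3}.
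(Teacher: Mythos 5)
There is a genuine gap: your proposal never actually pins down \emph{which} tuple to take, and the candidate selection rules you float (``$x_1x_2$ as small as possible'', ``some auxiliary quantity extremal'') are not verified to yield either of the two required properties. The paper's proof hinges on a specific and different choice. After reducing to $k=2$ by absorbing $x_3\dots x_k$ into the work function (so only the $i=1,2$ terms matter, as you correctly observe), it fixes $x_2$ from an arbitrary minimizer and then \emph{redefines} $x_1$ to be the point that a copy of $\bx_2$ in $\bx_2^2$ resolves to. Two lemmas make this legal: \pref{lem:quasiGreedy}, applied to the quasiconvex function $X\mapsto w(X)-d(X,x_2^2)$ with $A=x_2$, shows this resolution target can be taken in $\arg\min_x w(xx_2)-xx_2$; and \pref{lem:k1arbitraryMinimizer} --- the tree-quasiconcavity lemma --- shows that \emph{any} point in $\arg\min_x w(xx_2)-xx_2$ still attains $\Phi(w)$. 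With that $x_1$, the resolution clause holds by construction, and the swap equality is a one-line consequence of $1$-Lipschitzness:
\begin{align*}
w(\bx_1x_2)+w(\bx_2^2)=w(\bx_1x_2)+w(\bx_2x_1)+x_1\bx_2\ge w(\bx_1^2)+w(\bx_2x_1),
\end{align*}
with equality forced by minimality. So the resolution property is established \emph{first} and the swap-invariance falls out; your plan inverts this order and leaves the hard part (``threading a single tuple through all the constraints'') unresolved.

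A second, concrete defect is your case analysis for the resolution clause: you consider only the possibilities that the copy of $\bx_2$ resolves to $x_1$, to some $x_j\in C$, or stays put. But $\bx_2^2C$ can resolve to an arbitrary point of the (extended) metric space, not just to a coordinate of the chosen tuple, so ``re-run the swap with $x_2$ and $x_j$ exchanged'' does not cover the generic case. This is precisely the hole that the greedy-minimizer argument plus \pref{lem:k1arbitraryMinimizer} is designed to close. Relatedly, the reasoning ``if the swap strictly decreased $\Phi$ we'd contradict minimality, and if it kept $\Phi$ the same we're done'' does not work as a dichotomy: the swap can never drop below the minimum, and equality of $\Phi$ under the swap does not by itself yield the resolution clause, which is the part of the statement that \pref{lem:treeSwapx12}'s applications actually use.
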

\begin{proof}
	It suffices to show the lemma for the case $k=2$ (otherwise, consider $w(\,\cdot\,x_3\dots x_k)\in\cW^2$). Fix $x_2$ such that $\Phi(w)=\Phi_{x_1x_2}(w)$ for some $x_1$. By \pref{lem:quasiGreedy} applied to $X\mapsto w(X)-d(X,x_2^2)$ and $A=x_2$, we conclude that a copy of $\bx_2$ in $\bx_2^2$ resolves to some $x_1\in\arg\min_x w(xx_2)-xx_2$. By \pref{lem:k1arbitraryMinimizer} applied to $w(\,\cdot\,x_2)\in\cW^1$ and $c=x_2$, it holds that $\Phi(w)=\Phi_{x_1x_2}(w)$. Finally, note that we can swap the order of $x_1$ and $x_2$ because
	\begin{align*}
		w(\bx_1x_2)+w(\bx_2^2)=w(\bx_1x_2)+w(\bx_2x_1)+x_1\bx_2\ge w(\bx_1^2)+w(\bx_2x_1).&\qedhere
	\end{align*}
\end{proof}

\begin{lemma}\label{lem:treeResolveLastTwo}
	Let $w\in \cW^k(r)$ and $x, y\in V$. Then
	\begin{align*}
		w(\bx^{k-1}y)+w(\by^k)+(k-1)ry \ge w(\bx^{k-1}r)+w(\br^k).
	\end{align*}
\end{lemma}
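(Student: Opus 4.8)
The statement resembles a ``resolve the last two coordinates'' inequality, so the plan is to use quasiconvexity of $w$ together with the tree structure (quasiconcavity of the metric on original points) and $1$-Lipschitzness. The two configurations on the left are $\bx^{k-1}y$ and $\by^k$, and the target configurations on the right are $\bx^{k-1}r$ and $\br^k$. Since $y, x, r$ are original tree points, I would first consider the path structure among $y$, $x$ and $r$, and write $y$'s contribution on the left in a form that makes the factor $(k-1)ry$ naturally cancel. Concretely, I expect to split into cases according to where the ``branch point'' of $y$, $r$ relative to $x$ lies, or equivalently apply quasiconcavity to a triple such as $\{x, y, r\}$ (using the antipode carefully, since $\bx$ is not an original point but $x, y, r$ are): quasiconcavity gives an inequality like $ry + xz \le rz + xy$ for the appropriate labeling, which lets us trade a copy of $\by$ for $\br$ at a controlled cost.

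The main technical step is to route the $(k-1)$ copies of $\by$ in $\by^k$ toward $\br$ one at a time. I would apply quasiconvexity to the pair $(\bx^{k-1}y, \by^k)$; the guaranteed bijection $\mu$ matches the single $y$ on the left to some copy of $\by$ on the right (or, after reindexing, we can arrange this), and matches the $k-1$ copies of $\bx$ to the remaining $k-1$ copies of $\by$. Taking $A$ to be the set of $\bx$-coordinates, quasiconvexity yields
\begin{align*}
	w(\bx^{k-1}y) + w(\by^k) \ge w(\bx^{k-1}\by) + w(\by^{k-1}y).
\end{align*}
Then $w(\bx^{k-1}\by) \ge w(\bx^{k-1}r) - \br r + \dots$ via $1$-Lipschitzness and the antipode relation $\bx\br + \br x = \Delta$ and similar identities; and $w(\by^{k-1}y)$ can be compared to $w(\br^k)$ by moving each of the $k-1$ copies of $\by$ to $\br$ and the copy of $y$ to $r$, paying $\by\br = yr$ for each of the $k-1$ antipodal moves (since $\by\br = yr$) plus $yr$ for the last move, but in fact one wants to be cleverer: use that $y$ is on the path from $\by$ to $r$ in the augmented space so that $\by\br = \by y + y \br$, and similarly exploit quasiconcavity of the original metric to ensure $r$ lies on the relevant paths. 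The careful bookkeeping is that the total cost of these moves must come out to exactly $(k-1)ry$, no more — this is where $1$-Lipschitzness must be tight in the right places, and where I expect the real work to be.

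The step I expect to be the main obstacle is controlling the interaction of the antipodal points $\bx, \by, \br$ with the tree structure: quasiconcavity of the metric is only valid for original points, so each use of it must be set up so that antipodes do not appear inside a quasiconcavity inequality, and instead the antipode relations $p\bar q = \Delta - pq$ (for $p,q$ original) and $\bar p\bar q = pq$ are applied separately. In particular, reducing the claim to a statement purely about the original metric — e.g.\ showing $w(\bx^{k-1}y) - w(\bx^{k-1}r) \ge (k-1)(xy - xr) $-type bounds combined with $w(\by^k) - w(\br^k) \ge \dots$ — and then verifying that the tree inequality $ry \le $ (relevant path length) closes the gap, is the crux. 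I would likely phrase the final combination as: apply quasiconvexity once to get the two ``mixed'' configurations, then apply $1$-Lipschitzness twice, once to push $\bx^{k-1}\by$ to $\bx^{k-1}r$ and once to push $\by^{k-1}y$ to $\br^k$, and check that the accumulated slack is exactly $(k-1)ry$ using the tree/antipode identities; an alternative is induction on $k$, peeling off one $\by$ at a time, with the base case $k=1$ being essentially \pref{lem:k1arbitraryMinimizer}-style reasoning.
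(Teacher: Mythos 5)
Your proposal is a plan rather than a proof, and the one concrete step you commit to does not close the required gap. After your quasiconvexity step
\begin{align*}
w(\bx^{k-1}y)+w(\by^k)\ \ge\ w(\bx^{k-1}\by)+w(\by^{k-1}y),
\end{align*}
the Lipschitz pushes you describe cost $\by r=\Delta-yr$ to move $\bx^{k-1}\by$ to $\bx^{k-1}r$, and $(k-1)\by\br+y\br=(k-1)yr+(\Delta-yr)$ to move $\by^{k-1}y$ to $\br^k$, for a total of $2(\Delta-yr)+(k-1)yr$. This exceeds your budget of $(k-1)ry$ by $2(\Delta-yr)$, which is large in general. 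You correctly sense that ``1-Lipschitzness must be tight in the right places,'' but after the quasiconvexity step you have already discarded the structural information that would make it tight; no bookkeeping recovers the loss from there. The missing idea is to exploit $w\in\cW^k(r)$ \emph{before} doing anything else: every support configuration contains $r$, so $\bx^{k-1}y$ resolves to $r$ from either $y$ or a copy of $\bx$, and $\by^k$ is supported by some configuration $a_1\dots a_{k-1}r$ with $a_i\in V$, giving the exact identity $w(\by^k)=w(a_1\dots a_{k-1}r)+k\Delta-\sum_i a_iy-ry$.

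The paper's proof then runs entirely on these resolutions plus quasiconcavity of the tree metric, and never invokes quasiconvexity of $w$. If $\bx^{k-1}y$ resolves from $y$, Lipschitzness alone finishes. Otherwise $w(\bx^{k-1}y)=w(\bx^{k-2}yr)+\Delta-rx$, and one examines the four original points $r,x,a_i,y$: if $rx+a_iy\le xy+a_ir$ for some $i$, a direct computation (trading $\by^{k-1}$ for $\br^{k-1}$ at cost exactly $(k-1)yr$ and paying the remaining moves out of the inequality) yields the claim; if not, quasiconcavity of the tree metric forces $rx+a_iy=ry+a_ix$ for \emph{every} $i$, and substituting this into the exact expression for the left-hand side again reduces to two applications of Lipschitzness. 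Your instinct to use quasiconcavity on a triple/quadruple of original points is the right one, but it must be applied to $\{r,x,a_i,y\}$ with the $a_i$ coming from the support decomposition of $w(\by^k)$ --- a quadruple your proposal never identifies.
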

\begin{proof}
	If $\bx^{k-1}y$ resolves from $y$, the statement follows directly from $1$-Lipschitzness of $w$. So assume
	\begin{align*}
		w(\bx^{k-1}y)=w(\bx^{k-2}yr)+\Delta-rx.
	\end{align*}
	We have
	\begin{align*}
		w(\by^k)=w(a_1\dots a_{k-1}r)+k\Delta-\sum_{i=1}^{k-1}a_iy-ry
	\end{align*}
	for some $a_1,\dots,a_{k-1}\in V$. If $rx+a_iy\le xy+a_ir$ for some $i$, then
	\begin{align*}
		w(\bx^{k-1}y)+w(\by^k)+(k-1)ry &= w(\bx^{k-2}yr)+w(\by^{k-1}a_i)+2\Delta-rx-a_iy+(k-1)ry\\
		&\ge w(\bx^{k-2}yr)+w(\br^{k-1}a_i)+\bx y+a_i\br\\
		&\ge w(\bx^{k-1}r)+w(\br^k).
	\end{align*}
	Otherwise, by quasiconcavity of the distance we have $rx+a_iy=ry+a_ix$ for \emph{every} $i$. Then
	\begin{align*}
		w(\bx^{k-1}&y)+w(\by^k)+(k-1)ry\\
		&= w(\bx^{k-2}yr)+w(a_1\dots a_{k-1}r)+(k+1)\Delta-rx-\sum_{i=1}^{k-1}a_iy+(k-2)ry\\
		&= w(\bx^{k-2}yr)+w(a_1\dots a_{k-1}r)+(k+1)\Delta-ry-\sum_{i=1}^{k-1}a_ix+(k-2)rx\\
		&\ge w(\br^k)+w(\bx^{k-1}r).\qedhere
	\end{align*}
\end{proof}

\begin{theorem}
	WFA is $3$-competitive for $3$ servers on trees.
\end{theorem}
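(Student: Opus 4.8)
Following the multiray treatment, I would augment the tree with antipodes and apply the original-points variant of Corollary~\ref{cor:serverPot} (where $x_1,\dots,x_k$ are required to lie in the original tree, which as for multiray spaces does not affect validity): it then suffices to show that for every $r$ and every $w\in\cW_M^3(r)$ there are original tree points $x_1,x_2$ with $\Phi(w)=\Phi_{x_1x_2r}(w)$. All invocations of quasiconcavity of $d$ will be for distances between original points only.

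Step~1: fix a structured minimizer. By \pref{lem:treeSwapx12} with $k=3$ there are $x_1,x_2,x_3$ with $\Phi(w)=\Phi_{x_1x_2x_3}(w)=\Phi_{x_2x_1x_3}(w)$, a copy of $\bx_2$ in $\bx_2^2x_3$ resolving to $x_1$, and (from its proof) $x_1\in\arg\min_p\bigl(w(px_2x_3)-px_2\bigr)$; unfolding, $\Phi_{x_1x_2x_3}(w)=w(x_1x_2x_3)+w(\bx_1x_2x_3)+w(\bx_2^2x_3)+w(\bx_3^3)$. If $r\in\{x_1,x_2,x_3\}$, then \pref{lem:push3} rotates $r$ into the last coordinate and we are done, so assume $r\notin\{x_1,x_2,x_3\}$ (so none of the first three configurations above is a support configuration, as $w\in\cW(r)$).

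Step~2: push $x_3$ to $r$. Applying \pref{lem:treeResolveLastTwo} to the last two terms with $x=x_2,\,y=x_3$ gives
\[
  w(\bx_2^2x_3)+w(\bx_3^3)+2\,rx_3\ \ge\ w(\bx_2^2r)+w(\br^3),
\]
hence $\Phi(w)\ge w(x_1x_2x_3)+w(\bx_1x_2x_3)-2\,rx_3+w(\bx_2^2r)+w(\br^3)$, and since $1$-Lipschitzness yields only the reverse of $w(x_1x_2x_3)+w(\bx_1x_2x_3)\ge w(x_1x_2r)+w(\bx_1x_2r)+2\,rx_3$ (with total slack exactly $4\,rx_3$), the target inequality $\Phi(w)\ge\Phi_{x_1x_2r}(w)$ holds precisely when \emph{both} $x_1x_2x_3$ and $\bx_1x_2x_3$ resolve from $x_3$ straight to $r$. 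When this fails, the offending configuration instead resolves towards $\supp(w)$ through $x_1$ or $x_2$; I would then use the symmetry $\Phi_{x_1x_2x_3}=\Phi_{x_2x_1x_3}$, the resolution $\bx_2^2x_3\to x_1$, \pref{lem:resolveMonotone}, and quasiconcavity of $d$ to reroute the relevant matchings through $r$ --- the same mechanism as the second half of \pref{lem:multirayPhir}, with the role of the center $c$ taken over by $r$ --- possibly after permuting the roles of $x_1,x_2,x_3$. As a fallback for the residual cases, \pref{lem:k1arbitraryMinimizer} applied to restrictions $w(\,\cdot\,x_ix_j)\in\cW^1$ (with center chosen suitably), together with \pref{lem:quasiSub}, lets one drag the minimizing triple into $\supp(w)$, whence it contains $r$ and \pref{lem:push3} finishes.

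The step I expect to be hardest is exactly this case analysis in Step~2. A tree has no global center, so $\Phi_{x_1x_2x_3}$ cannot be rewritten via a single quasiconvex function $m_w$ as in the multiray proof, and $x_3$ genuinely appears in all four summands of $\Phi$; the slack-$4\,rx_3$ budget is tight, so every rerouting has to be an equality. The resolution data from \pref{lem:treeSwapx12} together with quasiconcavity of the tree metric should be just enough, but the bookkeeping --- verifying after each swap or reroute that the new triple still minimizes $\Phi$ and still carries the resolution needed for the next step --- is where the real care is required.
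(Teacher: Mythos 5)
You have correctly identified the setup (antipodal augmentation, Corollary~\ref{cor:serverPot}), the roles of Lemma~\ref{lem:treeSwapx12} and Lemma~\ref{lem:push3}, and the one clean case: when both $x_1x_2x_3$ and $\bx_1x_2x_3$ resolve from $x_3$, the identity $\Phi(w)=w(x_1x_2r)+w(\bx_1x_2r)+w(\bx_2^2x_3)+w(\bx_3^3)+2rx_3$ combined with Lemma~\ref{lem:treeResolveLastTwo} (with $x=x_2$, $y=x_3$) gives $\Phi(w)\ge\Phi_{x_1x_2r}(w)$; this is exactly one of the paper's cases. But everything after ``When this fails'' is a gesture rather than an argument, and that is where essentially all of the work lies. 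The paper disposes of the remaining possibilities by a precise cascade of resolution dichotomies: (i) if $x_1x_2x_3$ resolves from $x_1$ or $x_2$, then Lemma~\ref{lem:treeSwapx12} plus $1$-Lipschitzness give $\Phi_{x_1x_2x_3}(w)\ge\Phi_{rx_2x_3}(w)$ and Lemma~\ref{lem:push3} finishes; (ii) otherwise $x_1x_2x_3$ resolves from $x_3$, and one splits on whether $\bx_1x_2x_3$ resolves from $\bx_1$ (handled as in (i)), from $x_3$ (your case), or from $x_2$; (iii) in the last case one further splits on whether $\bx_2^2x_3$ resolves from $x_3$ (reduced to an earlier case by an equality analysis via Lemma~\ref{lem:treeSwapx12}) or from $\bx_2$, and this final subcase needs a genuinely new step: rewriting $\Phi(w)=w(x_1x_2r)+w(\bx_1x_3r)+w(\bx_2x_3r)+w(\bx_3^2r)+2\Delta$ and applying quasiconvexity to the pair $w(x_1x_2r)+w(\bx_1x_3r)$, with two sub-subcases according to which exchange quasiconvexity provides. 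None of this appears in your proposal.

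Moreover, the proposed fallback is unsound: Lemma~\ref{lem:quasiSub} and Lemma~\ref{lem:k1arbitraryMinimizer} concern minimizers of $w$ (or of its restrictions), not minimizers of $\Phi$, and there is no reason the $\Phi$-minimizing triple can be ``dragged into $\supp(w)$'' --- in the paper's final case the relevant configurations are controlled only through their resolution behaviour, not through membership in the support. Likewise ``rerouting through $r$ as in the second half of Lemma~\ref{lem:multirayPhir}'' has no analogue here: that argument relies on decomposing $\Phi_{\ell_1\dots\ell_k}$ via the single quasiconvex function $m_w$ centred at $c$, which, as you yourself observe, is unavailable on a general tree. So the proposal is a correct outline of one case plus an accurate diagnosis of where the difficulty sits, but the difficult part is not proved.
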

\begin{proof}
	Let $w\in\cW^3(r)$ and let $x_1,x_2,x_3$ be such that $\Phi(w)=\Phi_{x_1x_2x_3}(w)$. If $x_1x_2x_3$ resolves from $x_1$ or $x_2$, then \pref{lem:treeSwapx12} allows us to assume that it resolves from $x_1$. But then $\Phi(w)=\Phi_{x_1x_2x_3}(w)\ge\Phi_{rx_2x_3}$ by $1$-Lipschitzness of $w$. Then, \pref{lem:push3} shows that $\Phi(w)=\Phi_{yzr}(w)$ for some $y$ and $z$, implying $3$-competitiveness. We can therefore assume that \emph{$x_1x_2x_3$ resolves from $x_3$}.
	
	If $\bx_1x_2x_3$ resolves from $\bx_1$, then since $\Phi_{x_1x_2x_3}=\Phi_{\bx_1x_2x_3}$ the same argument implies $3$-competitiveness. So $\bx_1x_2x_3$ resolves from $x_2$ or $x_3$. If it resolves from $x_3$, then
	\begin{align*}
		\Phi(w)&=w(x_1x_2r)+w(\bx_1x_2r)+w(\bx_2^2x_3)+w(\bx_3^3)+2rx_3\\
		&\ge w(x_1x_2r)+w(\bx_1x_2r)+w(\bx_2^2r)+w(\br^3)\\
		&=\Phi_{x_1x_2r}(w),
	\end{align*}
	where the inequality is due to \pref{lem:treeResolveLastTwo}. So assume \emph{$\bx_1x_2x_3$ resolves from $x_2$}.
	If $\bx_2^2x_3$ resolves from $x_3$, then \pref{lem:treeSwapx12} allows us to assume that a copy of $\bx_2$ in $\bx_2^2x_3$ resolves to $x_1$, so
	\begin{align*}
		w(\bx_1x_2x_3)+w(\bx_2^2x_3)&=w(\bx_1x_3r)+w(\bx_2x_1r)+rx_2+\bx_2x_1+rx_3\\
		&\ge w(\bx_1x_3r)+w(\bx_2x_1r)+r\bx_1+rx_3\\
		&\ge w(\bx_1^2x_3)+w(\bx_2x_1x_3)
	\end{align*}
	Here, note that all inequalities must be equality because otherwise we would have shown $\Phi_{x_2x_1x_3}(r)<\Phi_{x_1x_2x_3}$. So $\bx_2x_1x_3$ resolves from $x_3$, and $3$-competitiveness follows symmetrically to the case that $\bx_1x_2x_3$ resolves from $x_3$ with the roles of $x_1$ and $x_2$ reversed. So we can assume that \emph{$\bx_2^2x_3$ resolves from $\bx_2$}.
	
	For these resolutions, we can conclude $3$-competitiveness using only quasiconvexity of $w$. We have
	\begin{align*}
		\Phi(w)&=w(x_1x_2r)+w(\bx_1 x_3r)+w(\bx_2x_3r)+w(\bx_3^2r)+rx_3+rx_2+r\bx_2+r\bx_3\\
		&=w(x_1x_2r)+w(\bx_1 x_3r)+w(\bx_2x_3r)+w(\bx_3^2r)+2\Delta
	\end{align*}
	By quasiconvexity, $w(x_1x_2r)+w(\bx_1 x_3r)$ is lower bounded by $w(x_1\bx_1r)+w(x_2 x_3r)$ or $w(x_1x_3r)+w(\bx_1 x_2r)$. In the first case, using $\Delta=\br x_1+\br \bx_1$, we get
	\begin{align*}
		\Phi(w)&\ge w(x_1\bx_1r)+w(x_2 x_3r)+w(\bx_2x_3r)+w(\bx_3^2r)+\Delta+\br x_1+\br\bx_1\\
		&\ge w(\br^3)+w(x_2 x_3r)+w(\bx_2x_3r)+w(\bx_3^2r)\\
		&= \Phi_{x_2x_3r}(w).
	\end{align*}
	In the second case,
	\begin{align*}
		\Phi(w)&\ge w(x_1x_3r)+w(\bx_1 x_2r)+w(\bx_2x_3r)+w(\bx_3^2r)+2\Delta\\
		&\ge w(x_1x_3r)+w(\bx_1 x_2x_3)+w(\bx_2x_3r)+w(\bx_3^2r)+2\Delta-rx_3\\
		&\ge w(x_1x_3r)+w(\bx_1 x_3r)+w(\bx_2x_3r)+w(\bx_3^2r)+2\Delta-rx_3+rx_2\\
		&\ge w(x_1x_3r)+w(\bx_1 x_3r)+w(\br^3)+w(\bx_3^2r)\\
		&= \Phi_{x_1x_3r}(w),
	\end{align*}
	where the third inequality reuses the fact that $\bx_1x_2x_3$ resolves from $x_2$. Again, we conclude $3$-competitiveness.
\end{proof}
\section{Non-laziness of the worst-case adversary on the circle}\label{sec:nonlazy}
\begin{theorem}\label{thm:Nonlazy}
	For $k=3$ servers on the circle, there exists a reachable work function from where the worst-case adversarial continuation of the request sequence is not lazy. More precisely, there exists a request sequence such that the induced work functions $w_{t}$ and $w_{t+1}$ after time steps $t$ and $t+1$ and the WFA configuration $C_{t}$ after time step $t$ satisfy $w_{t+1}(C_{t})-w_t(C_t) > \Phi(w_{t+1}) - \Phi(w_t)$.
\end{theorem}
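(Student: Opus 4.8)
\emph{Proof proposal.} The statement is purely existential, so the plan is to exhibit one explicit instance on the circle and verify the inequality by a finite computation; since an unstructured search over request sequences is hopeless, the construction is guided by the connection to the $k$-taxi problem of~\cite{CoesterK19}. In the $k$-taxi problem a request is a pair $(a,b)$ that forces a server to travel from $a$ to $b$, and this ``teleportation'' of demand is inherently non-local: it is easy to reach taxi states whose optimal offline continuation is transiently \emph{non-lazy}, i.e.\ does not begin by collapsing the offline servers onto their current joint position. I would first recall from~\cite{CoesterK19} the mechanism by which such taxi phenomena get realised inside the ordinary $3$-server problem on the \emph{circle} --- a ``small'' metric space that nevertheless inherits, through this mechanism, the adversarial freedom of a much larger space --- the circle being convenient because every point already has an antipode at distance $\Delta$, so no extension of the space is needed.

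Concretely, I would isolate a minimal gadget in the taxi world (a short prefix producing a reachable state, plus one further request whose optimal cost-to-go does not start with a convergence phase) and push it through the correspondence to obtain a finite point set $P$ on the circle of small constant size, an initial configuration $C_0\subseteq P$, and an ordinary $3$-server request sequence $r_1,\dots,r_{t+1}\in P$ realising $w_t=w_0\wedge r_1\wedge\cdots\wedge r_t$ and $w_{t+1}=w_t\wedge r_{t+1}$. Since only finitely many points are involved, $w_t$ and $w_{t+1}$ are finite objects; I would tabulate $\supp(w_t)$, $\supp(w_{t+1})$ and their values there by iterating the update rule, being careful that on the circle an optimal offline routing may wrap ``the long way round'', so each value is a minimum over both orientations. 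In parallel I would run WFA on $r_1,\dots,r_t$ under a fixed tie-breaking rule, verify that the resulting configuration $C_t$ lies in $\supp(w_t)$ with $r_{t+1}\notin C_t$ (so that $w_{t+1}(C_t)-w_t(C_t)>0$), and record this difference.

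To evaluate $\Phi(w_t)$ and $\Phi(w_{t+1})$ I would use the $k=3$ reformulation of $\Phi$ obtained from \pref{lem:push3}, namely $\Phi(w)=6\Delta+\min_{X,\rho}\bigl(4w(X)-\cl(X)-\nabla(w,\rho)\bigr)$ with the minimum over $X\in\supp(w)$ and request sequences $\rho\in X^{*}$ that drive $w$ to the cone at $X$. This collapses the minimisation over antipodal triples $(x_1,x_2,x_3)$ to a finite search over the few support configurations, their orderings, and the finitely many iterated requests needed to reach a cone, making $\Phi(w_{t+1})-\Phi(w_t)$ an explicit number. The proof then concludes by comparing the two numbers and checking that $w_{t+1}(C_t)-w_t(C_t)>\Phi(w_{t+1})-\Phi(w_t)$ (note this is strictly stronger than ``$\Phi$ fails to bound the extended cost'', since the left-hand side is at most the extended cost); via the extended-cost and duality lemmas, the same instance also refutes the implicit Chrobak--Larmore potential $\tPhi$ and the circle conjecture of~\cite{BeinCL02}.

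The hard part is the verification. First there is reachability: one must certify that $w_t$ genuinely arises from the claimed prefix \emph{on the circle}, not merely from an abstract sequence of $\wedge$-updates --- this is precisely what the taxi correspondence supplies, and it is also what keeps $P$, and hence the whole computation, small. Second, the wrap-around of the circle multiplies both the number of competing offline solutions inside each work-function value and the number of candidate minimising triples for $\Phi$, all of which must be shown to be dominated; the laziness reformulation is what turns this from an optimisation over the continuum into a finite, human-checkable case analysis, and engineering the gadget so that this case analysis stays manageable is the real difficulty.
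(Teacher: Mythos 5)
Your overall strategy coincides with the paper's: use the $k$-taxi correspondence to engineer a reachable work function on the circle and then certify the inequality by explicit computation. But as written there are real gaps. First, for a purely existential statement the witness \emph{is} the proof, and you never produce one; "I would isolate a minimal gadget" defers exactly the content that matters. The paper's witness is concrete: the circle $[0,8)$, initial configuration $\{1,6,7\}$, the mixed sequence $(6.5,6),\,4,\,(2.5,2),\,3,\,4,\,(3.5,5)$ followed by a final request at $4$, giving $C_t=\{1,5,7\}$, extended cost $w_{t+1}(C_t)-w_t(C_t)=2$, and $\Phi(w_{t+1})-\Phi(w_t)\le 45-44=1$. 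Second, the taxi-to-server transfer is not an exact correspondence onto a point set "of small constant size": a taxi request $(s,t)$ is realized only as the \emph{limit} of $m$ equally spaced server requests, with error $O(k\epsilon)$ for finite $m$. One must therefore (a) prove the support/value lemma for $w\land(s,t)$ as such a limit, and (b) build the counterexample with strict slack (the paper gets slack $1$) so that the inequality survives the approximation. The paper explicitly reports that no counterexample exists on small equally spaced circles using pure server requests, so your claim of a small pure-server instance is not merely unproved but likely false.

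Third, your plan for evaluating $\Phi$ goes through the implicit lazy-adversary form of \pref{lem:push3} with $X$ restricted to $\supp(w)$ and $\rho$ ranging over sequences driving $w$ to a cone. This is backwards in two respects. Restricting the minimization to support configurations can only \emph{increase} the minimum, which is the wrong direction when you need a \emph{lower} bound on $\Phi(w_t)$ (you need an upper bound on $\Phi(w_{t+1})$ and a lower bound on $\Phi(w_t)$ to conclude). And the supremum over $\rho$ ranges over unboundedly long request sequences, so it is not a finite search without further argument. The paper instead computes $\Phi$ directly from its explicit antipodal definition $\Phi(w)=\min_{x_1,x_2,x_3}\sum_{i}w(\bar x_i^{\,i}x_{i+1}\cdots x_3)$ --- exhibiting one triple for the upper bound on $\Phi(w_{t+1})$ and verifying the minimizer of $\Phi(w_t)$ computationally --- which is precisely the point of having an explicit formula for the implicit potential. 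Your outline of the verification difficulties (wrap-around, reachability) is apt, but without the witness, the approximation argument, and a sound scheme for bounding $\Phi$ on both sides, the proposal does not yet constitute a proof.
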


In other words, the extended cost is strictly greater than the change in potential. Due to the interpretation of our potential (Section~\ref{sec:interpretation}), this means that the worst-case continuation of the request sequence after time $t$ is not lazy.

If Theorem~\ref{thm:Nonlazy} could be strengthened such that the request sequence to reach $w_t$ has extended cost equal to its induced change in potential, then this would disprove the premise of the extended cost lemma (because one could create a cyclic request sequence where extended cost is always at least the change in potential and exceeds it infinitely often; we remark that one can go from a cone work function to any other cone via a request sequence whose extended cost equals its potential change). Note that Theorem~\ref{thm:Nonlazy} holds even if in the extended cost $\max_X w_{t+1}(X)-w_t(X)$ we replace $X$ by the configuration $C_t$ of WFA at time $t$. The significance of this is that the sum of the terms $w_{t+1}(C_{t})-w_t(C_t)$ over all time steps is \emph{equal} to the sum of WFA's cost and the optimal offline cost (up to a bounded additive error). Thus, proving violation of the premise of the extended cost lemma with $X$ replaced by $C_t$ would imply that WFA's competitive ratio is strictly greater than $k$.

The proof of Theorem~\ref{thm:Nonlazy} is based on a tight connection between the $k$-server problem and the ``easy'' version of the $k$-taxi problem that we had observed in \cite{CoesterK19}. The $k$-taxi problem is the generalization of the $k$-server problem where each request is not a single point, but a pair $(s,t)$ of two points, representing the start $s$ and destination $t$ of a taxi request. To serve it, the algorithm has to select a server that first goes to $s$ and then to $t$. In the ``easy'' version relevant for us, the cost is defined as the total distance traveled by servers.\footnote{In contrast, the ``hard'' $k$-taxi problem defines the cost as only the overhead distance traveled while \emph{not} carrying a passenger, i.e., the distance from $s$ to $t$ is excluded from the cost (motivated by the fact that any algorithm has to travel this distance).} As we showed in \cite{CoesterK19}, the easy $k$-taxi problem has exactly the same competitive ratio as the $k$-server problem. The idea of this reduction is that a $k$-taxi request $(s,t)$ can be simulated by a sequence of many $k$-server requests along the shortest path from $s$ to $t$. We extend this idea here to show that we can use $k$-taxi requests to reach work functions that are arbitrarily close to work functions that are also reachable via $k$-server requests.

\subsection{\texorpdfstring{Approximate $k$-server work functions via $k$-taxi requests}{Approximate k-server work functions via k-taxi requests}}

For a work function $w$, we denote by $w\land(s,t)$ the updated work function when simulating a $k$-taxi request $(s,t)$ via $k$-server requests. More precisely, for any configuration $C$ we define $w\land(s,t)(C)$ to be the limit, as $m\to\infty$, of $w\land r_1\land r_2\land \dots \land r_m (C)$, where $r_1,r_2,\dots,r_m$ are equally spaced points along the shortest path from $s=r_1$ to $t=r_m$. Note that for $s$ and $t$ on the circle, the shortest path and hence $r_1,\dots,r_m$ are unique for fixed $m$ unless $s$ and $t$ are antipodes of each other, in which case we may choose them along any of the two shortest paths, say clockwise.

By the next lemma, simulating the taxi request $(s,t)$ via $k$-server requests has the same effect as issuing a $k$-server request at $s$, and then replacing $s$ by $t$ in each support configuration and increasing the supporting work function values by $d(s,t)$.

\begin{lemma}
	Let $M$ be the circle, $s,t\in M$ and $w\in \cW_{M}$ be a work function. Then
	\begin{align}
		\supp(w\land(s,t))=\{S-s+t\colon S\in \supp(w\land s)\}\label{eq:taxiSupp}
	\end{align}
	and for each $S\in \supp(w\land s)$,
	\begin{align}
		w\land(s,t)(S-s+t)=w(S)+d(s,t).\label{eq:taxiVal}
	\end{align}
\end{lemma}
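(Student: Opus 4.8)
The claim has two parts: a description of the support of $w\land(s,t)$, and the formula for the work function values there. My plan is to prove the value formula \eqref{eq:taxiVal} first, or rather a more general statement about $w\land(s,t)(C)$ for \emph{all} configurations $C$, and then derive the support characterization \eqref{eq:taxiSupp} from it. The natural approach is to track what happens during the limiting sequence of $k$-server requests $r_1,\dots,r_m$ along the shortest path from $s$ to $t$. First I would recall that $w\land(s,t)$ is defined as the pointwise limit of $w\land r_1\land\dots\land r_m$, and that these limits exist (each $w\land r_i$ is monotone nondecreasing in the pointwise order and bounded above, e.g.\ by $w(\,\cdot\,)+d(s,t)$ via $1$-Lipschitzness along the path). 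The key intuition to formalize is that once a request has been issued at $s$, each subsequent request along the path simply ``drags'' the support configuration along: a configuration $S$ containing $s$ gets replaced by $S-s+r_2$ at (essentially) the same cost plus $d(s,r_2)$, and so on, because the support configuration must contain each newly requested point and the cheapest way to get there is to slide the server sitting at the previous point.

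**Main steps.** (1) Reduce to understanding the effect of a single small step: show that if $u$ and $v$ are two nearby points on the path (with $v$ the next one requested after $u$), then for a work function $w'$ all of whose support configurations contain $u$, we have $\supp(w'\land v)=\{S-u+v\colon S\in\supp(w')\}$ and $(w'\land v)(S-u+v)=w'(S)+uv$. This is the heart of the matter: one direction ($\le$, i.e.\ $S-u+v$ is achievable at that cost) is immediate from the update rule $w'\land v(C)=\min_{X\ni v}w'(X)+d(X,C)$ by taking $X=S-u+v$ — wait, that needs $w'(S-u+v)$, so instead use $X = S$ itself only if $v\in S$; the correct argument is $(w'\land v)(S-u+v)\le w'(S)+d(S,S-u+v)\le w'(S)+uv$. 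For the reverse inequality and for showing these are exactly the support configurations, I would use quasiconvexity of $w'$ (Lemma-style arguments as in \pref{lem:quasiMin}) together with the fact that $v$ lies on a shortest path through $u$, so distances ``go straight.'' (2) Compose these single steps: after the first request at $s=r_1$, every support configuration of $w\land s$ contains $s$; then inductively each further request $r_{i+1}$ transforms the support by the map $S\mapsto S-r_i+r_{i+1}$ and raises values by $r_ir_{i+1}$. Telescoping the distances gives $\sum_i r_ir_{i+1}\to d(s,t)$ as $m\to\infty$. (3) Pass to the limit: since for every finite $m$ the support configurations of $w\land r_1\land\dots\land r_m$ are $\{S-s+r_m'\colon S\in\supp(w\land s)\}$ up to the discretization (with $r_m'$ the last path point, $\to t$), and the values converge, conclude \eqref{eq:taxiSupp} and \eqref{eq:taxiVal}. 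One must be slightly careful that taking a limit of work functions could in principle create or destroy support configurations, but since the support set is ``the same finite combinatorial object'' $\{S-s+t\}$ for all large $m$ (only the location of the moving point changes continuously), the limit support is exactly this set.

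**Anticipated obstacle.** The delicate point is Step (1), specifically proving that the single-step update does not \emph{merge} support configurations or \emph{create} spurious ones, and that the value formula holds with equality rather than just $\le$. The inequality $(w'\land v)(S-u+v)\ge w'(S)+uv$ requires knowing that in computing $\min_{X\ni v}w'(X)+d(X,S-u+v)$, the minimizer $X$ must essentially be $S$ with the $u$-server moved to $v$; this is where quasiconvexity of $w'$ and the geometry of the circle (the requested point $v$ lies on a geodesic through $u$, so there is no ``shortcut'' that avoids routing a server through $u$) both come in. A secondary subtlety is the antipodal case, where the shortest path is not unique — but the lemma statement already fixes a convention (clockwise), so this only needs a remark that the argument is insensitive to which geodesic is chosen for the intermediate points, as long as it is a single geodesic. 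I expect the clean way to organize this is to prove the single-step lemma as an auxiliary claim using the duality lemma (which identifies the configuration achieving the extended cost with a minimizer of $w'$ with respect to $v$), apply it repeatedly, and then invoke continuity of the work-function operations in the pointwise limit.
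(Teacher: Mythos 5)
Your Step (1) — the ``single-step'' claim that if every support configuration of $w'$ contains $u$ and $v$ is the next requested point, then $\supp(w'\land v)=\{S-u+v\colon S\in\supp(w')\}$ — is false, and since your whole induction and the limiting argument rest on it, this is a genuine gap. Concretely, take the circle of circumference $4$, $k=2$, and let $w$ be the cone at $\{0,2\}$, so $w\land s=w$ for $s=0$ and $\supp(w\land s)=\{\{0,2\}\}$. Request $v=\epsilon$ for small $\epsilon>0$ on the arc towards $t\in(0,2)$. Then $(w\land s\land v)(\{0,\epsilon\})=w(\{0,\epsilon\})=d(2,\epsilon)=2-\epsilon$ (the server at $2$ serves $\epsilon$ by coming the other way around), while $(w\land s\land v)(\{\epsilon,2\})+d(\{\epsilon,2\},\{0,\epsilon\})=\epsilon+2$. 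If the support were only $\{\{\epsilon,2\}\}$, these two quantities would have to be equal; they are not, so $\supp(w\land s\land v)$ contains a configuration (namely $\{0,\epsilon\}$) that is not of the form $S-s+v$. The same phenomenon persists at every finite stage of the discretization: after requests at $0,\epsilon,\dots,j\epsilon$, the configuration $\{0,j\epsilon\}$ has value $2+(j-2)\epsilon$, strictly below its supported value $2+j\epsilon$, so it sits in the support for every finite $m$. Your assertion that ``for every finite $m$ the support configurations are $\{S-s+r_m'\}$ up to the discretization'' is therefore wrong; the spurious configurations only become supported \emph{in the limit}. The geometric intuition that ``there is no shortcut avoiding $u$'' fails precisely because on the circle a different server can approach the requested point from the opposite side.

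The paper's proof sidesteps intermediate supports entirely. It proves the single identity $w\land(s,t)(C)=\min_{S\in\supp(w\land s)}\bigl(w(S)+d(s,t)+d(S-s+t,C)\bigr)$ for \emph{all} $C$, and only afterwards reads off \eqref{eq:taxiVal} (taking $C=S'-s+t$ and using $1$-Lipschitzness) and \eqref{eq:taxiSupp} (the identity gives ``$\subseteq$'', and strict inclusion would force some $S'\in\supp(w\land s)$ to be supported by another $S$, a contradiction). The upper bound on $w\land r_1\land\dots\land r_m(C)$ is the easy sliding argument you also have. The lower bound — the part your plan leaves vague — is obtained not by controlling supports step by step but by a direct offline-cost argument: in any service of $r_1,\dots,r_m$ from a configuration $S$, once a server is abandoned in favour of another it is never used again for a later $r_j$, so the requests split into at most $k$ contiguous blocks each served by one server; summing the block costs and applying the triangle inequality recovers $d(s,t)+d(S-s+t,C)$ up to an additive $2k\epsilon$, which vanishes as $m\to\infty$. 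If you want to keep a step-by-step structure, you would at minimum have to prove that the extra support configurations created at intermediate steps all converge to supported (hence non-support) configurations in the limit, which is essentially equivalent to the global cost bound anyway.
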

\begin{proof}
	We will show that for each configuration $C$,
	\begin{align}
		w\land (s,t)(C)=\min_{S\in\supp(w\land s)}w(S)+d(s,t)+d(S-s+t,C).\label{eq:taxiToShow}
	\end{align}
	Let us first argue why this implies the lemma. Taking $C=S'-s+t$ for $S'\in\supp(w\land s)$, equation \eqref{eq:taxiToShow} implies equation \eqref{eq:taxiVal} because
	\begin{align*}
		w\land (s,t)(S'-s+t) &=\min_{S\in\supp(w\land s)}w(S)+d(s,t)+d(S-s+t,S'-s+t)\\
		&=\min_{S\in\supp(w\land s)}w(S)+d(s,t)+d(S,S')\\
		&= w(S')+d(s,t)
	\end{align*}
	where the last equation uses that by $1$-Lipschitzness of $w$, the minimum is achieved for $S=S'$.
	
	Now, substituting \eqref{eq:taxiVal} in \eqref{eq:taxiToShow}, we get
	\begin{align*}
		w\land (s,t)(C)=\min_{S\in\supp(w\land s)}w\land(s,t)(S-s+t)+d(S-s+t,C),
	\end{align*}
	which yields the inclusion ``$\subseteq$'' in \eqref{eq:taxiSupp}. If this inclusion were strict, then there would exist $S, S'\in\supp(w\land s)$ such that $S'-s+t$ is supported by $S-s+t$ in $w\land(s,t)$. But then
	\begin{align*}
		d(S,S')&= d(S-s+t,S'-s+t)\\
		&= w\land(s,t)(S'-s+t) - w\land(s,t)(S-s+t)\\
		&= w(S')-w(S)\\
		&= w\land s(S')-w\land s(S)
	\end{align*}
	where the penultimate equation uses \eqref{eq:taxiVal} and the last equation uses $s\in S\cap S'$. But this would mean that $S'\notin \supp(w\land s)$, a contradiction.
	
	It remains to show \eqref{eq:taxiToShow}.
	
	Let $r_1,\dots, r_m$ be equally spaced from $s$ to $t$ and let $\epsilon=d(s,t)/(m-1)$ be the distance between any two adjacent $r_i$ and $r_{i+1}$. Note that $w\land r_1\land r_2\land \dots \land r_m(C)$ is the minimum, over all $S\in \supp(w\land r_1)=\supp(w\land s)$, of the sum of $w(S)=w\land s(S)$ plus the cheapest way of serving the requests $r_1, \dots r_m$ starting from $S$ and ending at $C$. One way of serving the requests $r_1,\dots,r_m$ starting from $S$ and ending at $C$ is to first take the server at $s$ and move it along the shortest path from $s$ to $t$ for cost $d(s,t)$ (which serves all requests $r_1,\dots, r_m$ and reaches configuration $S-s+t$) and then move to configuration $C$ for an additional cost $d(S-s+t,C)$. This shows the direction ``$\le$'' of \eqref{eq:taxiToShow}.
	
	However, the cheapest way of serving $r_1,\dots,r_m$ starting from $S$ and ending at $C$ might use several different servers to serve the $r_1,\dots,r_m$. To obtain the direction ``$\ge$'', we will show now that this can be at most $2k\epsilon$ cheaper, which is negligible as $m\to\infty$.
	
	If two different servers $a$ and $b$ are used to serve $r_i$ and $r_{i+1}$, then server $a$ will not be used to serve $r_j$ for \emph{any} $j>i$. This is because $a$ would have to move past the location of server $b$ to do so, but then it is at least as good to use server $b$ instead. Thus, $r_1,\dots,r_m$ can be partitioned into $j$ contiguous subsequences for some $j\le k$ such that the same server is used to serve the requests within each contiguous subsequence. Let $a_1,\dots,a_j$ be the locations of these servers in $S$ (in the order in which they are used, so $a_1=s$) and let $s_i$ and $t_i$ be the first and last request, respectively, of the contiguous subsequence of $r_1,\dots,r_m$ that is served by $a_i$. Also let $t_0=s$, so that $d(s_i,t_{i-1})\le\epsilon$ for each $i=1,\dots,j$. The $i$th server pays movement cost $d(a_i,s_i)+d(s_i,t_i)\ge d(a_i,t_{i-1}) + d(t_{i-1},t_i)-2\epsilon$ to serve its requests, and finally cost $d(S-a_1\dots a_j + t_1\dots t_j, C)$ is paid to reach configuration $C$. So the overall cost of serving $r_1,\dots,r_m$ starting from configuration $S$ and ending in $C$ is at least
	\begin{align*}
		&\sum_{i=1}^j\left[d(a_i,t_{i-1}) + d(t_{i-1},t_i)-2\epsilon\right] + d(S-a_1\dots a_j + t_1\dots t_j, C)\\
		&\ge d(t_0,t_j)-2k\epsilon+d(a_1,t_{0}) + d(S-a_1+t_j, C)\\
		&= d(s,t)-2k\epsilon+ d(S-s+t, C)
	\end{align*}
	where the inequality follows from several applications of the triangle inequality and the equation holds because $t_0=a_1=s$ and $t_j=t$. Since $\epsilon\to0$ as $m\to\infty$, we conclude direction ``$\ge$'' of equation~\eqref{eq:taxiToShow}.
\end{proof}

\subsection{The counterexample}
Consider the interval $[0,8)$ equipped with the circle metric $d$, i.e., $d(a,b)=\min\{b-a,8+a-b\bmod 8\}$ for $a\le b\in[0,8)$. We describe a sequence of mixed $k$-taxi and $k$-server requests to reach work functions $w_t$ and $w_{t+1}$ satisfying \pref{thm:Nonlazy} with the stronger inequality $w_{t+1}(C_{t})-w_t(C_t) \ge \Phi(w_{t+1}) - \Phi(w_t)+1$. Since work functions reachable by $k$-taxi requests are approximated arbitrarily well by work functions reachable by $k$-server requests, this shows that \pref{thm:Nonlazy} holds also for work functions reachable via $k$-server requests only.

\begin{figure}
	\centering
	\includegraphics[scale=1.15]{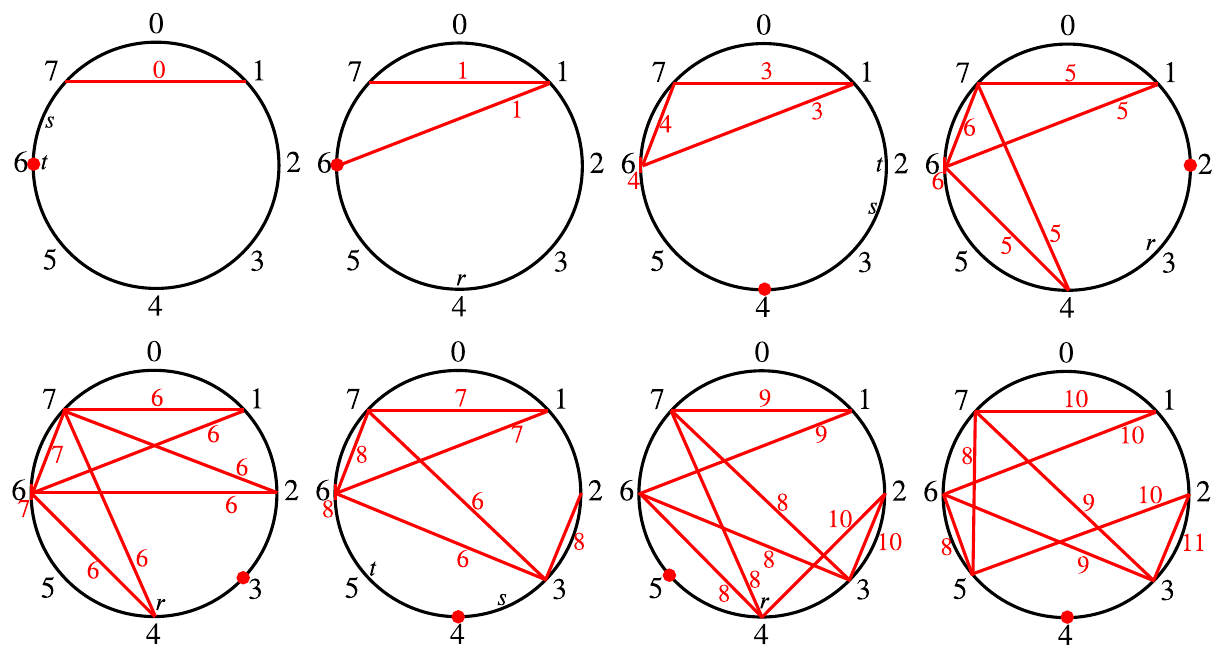}
	\caption{Evolution of the work function for request sequence yielding the counterexample to laziness of the adversary. Each of the eight images depicts one work function, starting with the work function before the first request (top left) and ending with work functions $w_{t}$ and $w_{t+1}$ satisfying $w_{t+1}(C_{t})-w_t(C_t) \ge \Phi(w_{t+1}) - \Phi(w_t)+1$ (last two images in the bottom row). In each image, the red dots indicates the location of the last request, which is contained every support configuration of the current work function. (In the initial configuration $\{1,6,7\}$, we view $6$ as the last request.) For every support configuration, there is a red line connecting the two other points of the configuration, and the red number next to the red line is the work function value of that configuration. In the third to sixth images, the very short line at $6$ indicates the support configuration containing two copies of $6$ and the last request. Only support configurations are depicted. In each but the last circle, either two points are marked as $s$ and $t$ or one point is marked as $r$, indicating that the next work function is reached by the $k$-taxi request $(s,t)$ or the $k$-server request $r$.}
	\label{fig:circle}
\end{figure}

Starting from an initial configuration $\{1,6,7\}$, the work function $w_t$ is reached by issuing the following requests:
\begin{enumerate}[(a)]
	\item $(6.5,6)$
	\item $4$
	\item $(2.5,2)$
	\item $3$
	\item $4$
	\item $(3.5,5)$
\end{enumerate}
where single numbers denote $k$-server requests and pairs denote $k$-taxi requests. Work function $w_{t+1}$ is reached by issuing one more request at $4$. The corresponding evolution of the work function is depicted in Figure~\ref{fig:circle}. For the corresponding $k$-server request sequence where taxi requests are simulated by sufficiently many equally spaced requests as described above, one can verify that WFA serves every request using the same server that initially resides at $6$. (This assumes that ties are broken in favor of this server. By a tiny perturbation of the request sequence, one can force any instantiation of WFA to break in this way.) Thus, when work function $w_t$ is reached, WFA is in configuration $C_t=\{1,5,7\}$, and from the last two images in Figure~\ref{fig:circle} we see that
\begin{align*}
	w_{t+1}(C_t)-w_t(C_t)=11-9 = 2.
\end{align*}
Moreover, we have
\begin{align*}
	&\Phi(w_{t+1})\\
	&\le\Phi_{572}(w_{t+1})\\
	&=w_{t+1}(572)+w_{t+1}(172)+w_{t+1}(332)+w_{t+1}(666)\\
	&\le \left[w_{t+1}(574)+d(2,4)\right]+\left[w_{t+1}(174)+d(2,4)\right]+\left[w_{t+1}(432)+d(3,4)\right]+\left[w_{t+1}(654)+d(5,6)+d(4,6)\right]\\
	&= \left[8+2\right]+\left[10+2\right]+\left[11+1\right]+\left[8+1+2\right]\\
	&=45.
\end{align*}
In $\Phi(w_t)=\min_{x_1,x_2,x_3}\Phi_{x_1x_2x_3}(w_t)$, the minimum is achieved for $x_1=4$, $x_2=5$, $x_3=6$ (and also several other choices). We omit a proof of this, which we found using a computer. A similar calculation then shows that, $\Phi(w_{t})=44$. Thus, $w_{t+1}(C_t)-w_t(C_t)\ge\Phi(w_{t+1})-\Phi(w_{t})+1$, as claimed.

We remark that up to symmetry and shift by an additive constant, for $k=3$ there exist over 280,000 different work functions reachable using taxi requests whose destinations are at $8$ equally spaced points on a circle and whose starts are at any of the same 8 points or the $8$ intermediate points (such as $6.5$, $2.5$ and $3.5$ above). Among these over 280,000 work functions, the above pair of $w_t$ and $w_{t+1}$ is the \emph{only} counterexample to laziness of the adversary. Using only $k$-server requests and no $k$-taxi requests, we were unable to find any counterexamples for $n$ equally spaced points on the circle for the values of $n$ that were computationally feasible for us to try. Of course, though, our approximability argument of $k$-taxi requests via $k$-server requests implies that such counterexamples do exist for $n$ sufficiently large. Given the rarity of these counterexamples, it is not surprising that Chrobak and Larmore~\cite{ChrobakL92} who reported testing their conjecture on tens of thousands of small metric spaces in the early 90s did not find any counterexample.
\section{Conclusion}
Our potential gives a unified perspective on all cases where WFA is known to be $k$-competitive. Unlike previous potentials, which were specific to their special case and had no clear intuition, our potential has a natural interpretation as capturing a lazy adversary. We remark that beyond the cases proved in this paper, our potential also proves $k$-competitiveness on $6$-point metric spaces. Since work functions, the WFA, and the generalized WFA are central to various online problems, similar potential functions may also prove useful to analyze different problems.

Since it was a major belief that a lazy adveresary would capture the worst case, our insights yield a qualitative explanation of the shortcomings of previous approaches and may point in a direction to overcome these shortcomings.

We are puzzled by the role of quasiconcavity of tree metrics. Does it have any deeper connection to the quasiconvexity property of work functions making it crucial for the existence of $k$-competitive algorithms? While the $k$-server problem is also known to be $k$-competitive in some cases without quasiconcavity (such as $k=2$ and $n=k-2$), the reason for this might simply be due to the fact that the subspaces relevant in all proof steps are small (note that any $3$-point metric is quasiconcave).

\bibliography{bibliography}{}
\bibliographystyle{alpha}

\appendix
\newpage
\section{Appendix}

In the following two subsections, we use our potential to prove $k$-competitiveness of WFA for the cases $k=2$, $k=n-1$ and $k=n-2$. We omit a formal proof that our potential also works for the special case of $3$ servers in the Manhattan plane~\cite{BeinCL02}, as it requires substantial case analysis. Alternatively, careful inspection of the results proved in~\cite{BeinCL02} reveals that our potential is in fact equal to the potential defined there for this special case.

In the last subsection, we prove that a metric is a tree if and only if it is quasiconcave.

\subsection{\texorpdfstring{$k=2$}{k=2}}

We consider here the relatively simple case of $k=2$. There are multiple proofs of
this special case, but we provide one here that uses the potential, for completeness.

\begin{lemma}
	For any $w\in \cW^2(r)$, there exists $x_1$ such that $\Phi(w)=\Phi_{x_1 r}(w)$.
\end{lemma}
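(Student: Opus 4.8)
The plan is to start from an arbitrary minimizer of $\Phi(w)=\Phi_{x_1x_2}(w)$ and show that the order of the two points can be chosen so that $r$ comes last. Since $w\in\cW^2(r)$, every support configuration of $w$ contains $r$. Fix $x_1,x_2$ with $\Phi(w)=\Phi_{x_1x_2}(w)=w(x_1x_2)+w(\bx_1x_2)+w(\bx_2^2)$. If $x_2=r$ we are done immediately, so the interesting situation is when $x_1=r$ (and, in the degenerate case where $r$ appears in neither slot, we will reduce to this after one swap). I first record the swap inequality: for any $a,b$,
\begin{align*}
\Phi_{ab}(w)=w(ab)+w(\bar a b)+w(\bar b^2)\quad\text{and}\quad \Phi_{ba}(w)=w(ab)+w(\bar b a)+w(\bar a^2),
\end{align*}
so $\Phi_{ab}(w)-\Phi_{ba}(w)=\big[w(\bar a b)-w(\bar a^2)\big]-\big[w(\bar b a)-w(\bar b^2)\big]$, and by $1$-Lipschitzness each bracket lies in $[-\Delta,\Delta]$; more usefully, $w(\bar a b)\le w(\bar a^2)+\bar a$-to-$b$ distance $=w(\bar a^2)+(\Delta-ab)$, etc.

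Next, I would handle the main case $x_1=r$. I want to produce $x_1'$ with $\Phi_{r x_2}(w)\ge \Phi_{x_1' r}(w)$. Consider the term $w(\br x_2)$. Either $\br x_2$ resolves from $x_2$ toward $r$, i.e. $w(\br x_2)=w(\bar r r)+rx_2=w(\bar r^2)+\bar r r - \bar r r + rx_2$ — more precisely $w(\br x_2)=w(\bar r^2)+d(\bar r, x_2)$ won't generally hold, so instead I use quasiconvexity on the pair $(\bar r x_2,\ \bar x_2^2)$: there is a matching giving
\begin{align*}
w(\bar r x_2)+w(\bar x_2^2)\ge w(\bar r x_2)+w(\bar x_2^2),
\end{align*}
which is vacuous, so the right move is to apply quasiconvexity to the pair $(x_1 x_2, \bar x_1 x_2)=(r x_2,\bar r x_2)$ against the target order. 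Concretely I expect the clean argument to mirror \pref{lem:push3} specialized to $k=2$: split on whether $w(\bar r x_2)$ equals $w(\bar r r)+d(r,x_2)$-type resolution or a resolution toward $\bar r$ through the antipode, in each branch use $\bar r\bar x_2 = r x_2$ and $\bar r r=\Delta$ together with quasiconvexity and $1$-Lipschitzness of $w$ to rewrite $\Phi_{r x_2}(w)$ as a sum $w(\bar{x}_2 r)+w(\bar x_2'\,r)+w(\bar r^2)$ for a suitable $x_2'$, i.e. as $\Phi_{x_2' x_2' ...}$ — matching a permutation with $r$ in the last slot.

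The steps in order: (1) fix a minimizing order $x_1,x_2$; (2) dispose of $x_2=r$ trivially and, if $r$ is absent, perform one swap using the displayed swap identity plus $1$-Lipschitzness to bring $r$ into a slot (this is where $w\in\cW^2(r)$, forcing $r$ into every support configuration, is used); (3) in the remaining case $x_1=r$, case-split on how $w(\bar r x_2)$ resolves and in each case bound $\Phi_{r x_2}(w)$ from below by $\Phi_{x_1' r}(w)$ using quasiconvexity of $w$, $1$-Lipschitzness, and the antipode identities $\bar p\bar q=pq$, $p\bar p=\Delta$. The main obstacle I anticipate is exactly step (3): getting the quasiconvexity matching to line up so that after rewriting, the new configuration really has $r$ as an antipode-free last coordinate rather than accidentally reproducing $r$ in the ``$\bar x_1$'' slot; this is the same subtlety that makes \pref{lem:push3} nontrivial, but with $k=2$ there are only two subcases to check and no pigeonhole is needed, so it should go through cleanly.
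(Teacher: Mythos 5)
There is a genuine gap, on two counts. First, your step (2) mishandles the main case: the minimizer $\Phi_{x_1x_2}(w)$ need not contain $r$ at all, and this is the \emph{generic} situation, not a degenerate one. Swapping $x_1$ and $x_2$ via your ``swap identity'' only permutes the two points already present; it cannot introduce $r$ into a slot. The ingredient you are missing is the resolution structure coming from $w\in\cW^2(r)$: since every support configuration contains $r$, each of the three values $w(x_1x_2)$, $w(\bx_1x_2)$, $w(\bx_2^2)$ equals $w(X-p+r)+pr$ for some point $p$ of the respective configuration $X$ (e.g.\ $w(\bx_2^2)=w(r\bx_2)+r\bx_2$ automatically). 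The paper's proof is a case split on \emph{which} point each configuration resolves from: if both $x_1x_2$ and $\bx_1x_2$ resolve from $x_2$, substituting $r$ for $x_2$ and using $w(\br^2)\le w(\bx_2^2)+2\,\br\bx_2=w(\bx_2^2)+2\,rx_2$ yields $\Phi_{x_1r}(w)$; otherwise one of them resolves from $x_1$ (or, by the symmetry $\Phi_{\bx_1x_2}=\Phi_{x_1x_2}$, from $\bx_1$), and combining $w(\bx_1x_2)=w(rx_2)+r\bx_1$ with $w(\bx_2^2)=w(r\bx_2)+r\bx_2$ and $w(\br^2)\le w(x_1x_2)+\br x_1+\br x_2$ yields $\Phi_{x_2r}(w)$. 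Notably, this uses only $1$-Lipschitzness and the support property --- no quasiconvexity is needed for $k=2$.

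Second, your step (3) is not actually carried out: the one quasiconvexity inequality you display is, as you yourself note, vacuous, and the rest is a statement of intent (``I expect the clean argument to mirror \pref{lem:push3}'') rather than an argument. Since the only concrete inequalities in the proposal are either tautological or address the wrong case, the proof as written does not establish the lemma. The fix is to abandon the swap-first plan, invoke the resolution of each of the three configurations toward $r$, and run the two-case analysis above.
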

\begin{proof}
	Suppose $\Phi(w)=\Phi_{x_1 x_2}(w)$. We will show that $\Phi_{x_1x_2}(w)\geq
	\min\{\Phi_{x_1 r}(w), \Phi_{x_2 r}(w)\}$, which will establish the lemma.
	
	\paragraph{Case 1:} $w(\bx_1 x_2)=w(\bx_1 r)+rx_2$ and $w(x_1 x_2)=w(x_1r)+rx_2$.
	
	Then
	\begin{align*}
	\Phi_{x_1x_2}(w)&=w(\bx_2^2)+w(\bx_1 x_2)+
	w(x_1x_2)\\
	&=w(\bx_2^2)+w(\bx_1 r)+rx_2+w(x_1r)+rx_2\\
	&\geq
	w(\br^2)+w(\bx_1 r)+w(x_1r)\\
	&=\Phi_{x_1 r}(w).
	\end{align*}
	
	\paragraph{Case 2:} Otherwise, $w(\bx_1 x_2)=w(r x_2)+r\bx_1$ or $w(x_1 x_2)=w(r x_2)+r x_1$.
	
	Since $\Phi_{\bx_1x_2}=\Phi_{x_1x_2}$, we may assume that the first of these two equations holds. Then since $w(\bx_2^2)=w(r\bx_2)+r\bx_2$ and $w(\br^2)\leq w(x_1x_2)+\br
	x_1+\br x_2=w(x_1x_2)+r\bx_1+r\bx_2$, we get 
	\begin{align*}
	\Phi_{x_1x_2}(w)&=w(\bx_2^2)+w(\bx_1 x_2)+
	w(x_1x_2)\\
	&=w(r\bx_2)+r\bx_2+w(rx_2)+r\bx_1+w(x_1x_2)\\
	&\geq
	w(r\bx_2)+w(rx_2)+w(\br^2)\\
	&=\Phi_{x_2 r}(w).\qedhere
	\end{align*}
\end{proof}

\subsection{\texorpdfstring{$k=n-1$ and $k=n-2$}{k=n-1 and k=n-2}}

\newcommand{\dw}{\hat w}

In this subsection we consider the case of $k=n-2$ servers (equivalent to the case of
2 evaders). We show below that the potential works for this case. Actually, the
potential works also for the case $k=n-1$ (1-evader), because it follows immediately
from its definition that the value of the potential is equal to the sum of the $n$
values of the work function, one value for each point of the metric space; therefore
any permutation of the points gives the same value and we can assume that $r$ is the
last point.

We now turn our attention to the case of 2 evaders ($k=n-2$). For this case, it is
easy to argue that the evader potential $\hPhi_y(w)$ is the value of a minimum
spanning tree with weights $\dw(x,y)+xy$, where $\dw(x,y)=w(M\setminus
\{x,y\})$. Notice first that the term
$$\min_{\substack{C\subseteq\{y_1,\dots,y_{i}\}\\|C|=2}}\left(\hw(C)+d(C,y_i^{n-2})\right)$$
of the potential is  equal to
$\min_{x\in\{y_1,\dots,y_{i-1}\}}\left(\hw(y_i,x)+xy_i\right)$, since some
optimal $C$ contains $y_i$ due to the Lipschitz
property. Thus each term adds the
minimum-weight edge $[y_i,x]$ from point $y_i$ to the tree constructed so far.  To
show that $\Phi_{x_1,\ldots, x_k}(w)$ is minimized when $x_k=r$, it is equivalent
to show that $\hat \Phi_{y_1,\ldots,y_n}(w)$ is minimized when $y_n=r$, or
equivalently that $r$ is a leaf of some minimum spanning tree with weights
$\dw(x,y)+xy$.

\begin{lemma} \label{lemma:r-is-a-leaf}
	Suppose that $w\in \cW^{n-2}(r)$ and let $\dw(x,y)=w(M\setminus \{x,y\})$. Consider the set
	of spanning trees when the weights are $\dw(xy)+xy$. There exists a minimum
	spanning tree in which $r$ is a leaf.
\end{lemma}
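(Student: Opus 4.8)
The plan is to prove the equivalent statement recalled above: that $r$ is a leaf of some minimum spanning tree of the complete graph on $M$ with edge weights $\omega(x,y):=\dw(x,y)+xy$, where $\dw(x,y)=w(M\setminus\{x,y\})$. I will use three properties of $\dw$. (i) $\dw$ is $1$-Lipschitz in each argument, $\dw(x,z)\le\dw(x,y)+yz$, which is immediate from $1$-Lipschitzness of $w$. (ii) A four-point inequality coming from quasiconvexity: for any four distinct points $a,b,c,d$, among the sums $\dw(a,b)+\dw(c,d)$, $\dw(a,c)+\dw(b,d)$, $\dw(a,d)+\dw(b,c)$ the two smallest are equal. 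This follows by applying the quasiconvexity of $w$ to the configurations $M\setminus\{a,b\}$ and $M\setminus\{c,d\}$ (and to the two other pairings of $\{a,b,c,d\}$): one fixed choice of the quasiconvexity bijection already forces $\dw(a,b)+\dw(c,d)\ge\min\{\dw(a,c)+\dw(b,d),\dw(a,d)+\dw(b,c)\}$. (iii) Since every support configuration of $w$ contains $r$ we have $w=w\land r$, and unfolding the definition of $w\land r$ together with a short computation using (i) gives
\begin{align*}
\dw(r,x)=\min_{y\ne r,x}\bigl(\dw(x,y)+yr\bigr)\qquad\text{for every }x\ne r,
\end{align*}
which I refer to as $(\star)$. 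A consequence of $(\star)$, used repeatedly: if $\dw(r,x)=\dw(x,y)+yr$, then $\omega(x,y)=\dw(x,y)+xy\le\dw(x,y)+xr+ry=\dw(r,x)+xr=\omega(r,x)$, so every vertex $x\ne r$ has an incident edge of weight at most $\omega(r,x)$.

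Next I would take a minimum spanning tree $T$ that minimizes $\deg_T(r)$ among all minimum spanning trees, and assume for contradiction that $\deg_T(r)\ge2$. Let $v_1,\dots,v_d$ be the neighbors of $r$ in $T$ and $S_1,\dots,S_d$ the components of $T-r$, with $v_i\in S_i$. A standard exchange argument --- any edge from $S_i$ to a vertex outside $S_i\cup\{r\}$ is a non-tree edge whose fundamental cycle passes through $rv_i$, so if it were not strictly heavier than $\omega(r,v_i)$ one could swap it in for $rv_i$ and obtain a minimum spanning tree of smaller $r$-degree --- shows that every such edge has weight strictly greater than $\omega(r,v_i)$. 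Together with the last sentence of the previous paragraph, this forces the minimizer $p_{v_i}$ in $(\star)$ (taken for $x=v_i$) to lie in $S_i$.

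The heart of the proof is a descent inside $S_2$ relative to $v_1$. Set $q_0:=v_2$ and $q_{j+1}:=p_{q_j}$, a minimizer in $(\star)$ for $x=q_j$, so that $\dw(r,q_j)=\dw(q_j,q_{j+1})+q_{j+1}r$. Applying (ii) to $\{r,v_1,q_j,q_{j+1}\}$ and substituting $\dw(q_j,q_{j+1})=\dw(r,q_j)-q_{j+1}r$, one checks using (i) that $\dw(r,v_1)+\dw(q_j,q_{j+1})\le\dw(r,q_j)+\dw(v_1,q_{j+1})$, so the two-smallest-equal property of (ii) leaves exactly two possibilities. If $\dw(r,v_1)+\dw(q_j,q_{j+1})=\dw(r,q_j)+\dw(v_1,q_{j+1})$, this rearranges to $\dw(r,v_1)=\dw(v_1,q_{j+1})+q_{j+1}r$; one checks $q_{j+1}\in S_2$, so $q_{j+1}$ is a $(\star)$-minimizer for $v_1\in S_1$, and then the edge $v_1q_{j+1}$, which joins $S_1$ to $S_2$, has weight $\omega(v_1,q_{j+1})\le\omega(r,v_1)$, contradicting the weight bound of the previous paragraph. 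Otherwise $\dw(r,v_1)+\dw(q_j,q_{j+1})=\dw(r,q_{j+1})+\dw(v_1,q_j)$, which simplifies to $\omega(r,q_{j+1})=\dw(r,v_1)+\dw(r,q_j)-\dw(v_1,q_j)\le\omega(r,q_j)$ by (i). Hence, if the first possibility never occurs, the sequence $(q_j)$ stays in $S_2$ and $\omega(r,q_j)$ is non-increasing; since $S_2$ is finite, the sequence enters a cycle on which all $\omega(r,q_j)$ are equal, which forces the $1$-Lipschitz inequality $\dw(r,v_1)-\dw(v_1,q_j)\le q_jr$ to hold with equality, i.e.\ $\dw(r,v_1)=\dw(v_1,q_j)+q_jr$; again $q_j\in S_2$ is a $(\star)$-minimizer for $v_1\in S_1$, and we reach the same contradiction. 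Therefore $\deg_T(r)=1$, i.e.\ $r$ is a leaf of $T$.

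I expect the main obstacle to be this last step: steering the four-point inequality so that it points in the useful direction --- this is why one fixes the pair $\{r,v_1\}$ and uses $(\star)$ to rewrite $\dw(q_j,q_{j+1})$ --- and then ruling out the recursive second case forever, the closing observation being that $S_2$ is finite and that equality in a $1$-Lipschitz bound of $\dw$ is itself a $(\star)$-relation. The first two paragraphs are essentially bookkeeping: the reduction to spanning trees and the extraction of properties (i)--(iii).
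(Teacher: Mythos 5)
Your proof is correct, and it shares the paper's overall skeleton: pass to a minimum spanning tree minimizing the degree of $r$, use the resolution relation $\hat w(r,x)=\hat w(x,y)+yr$ (your $(\star)$, which the paper also asserts) together with the four-point consequence of quasiconvexity of $\hat w$, and derive a degree-reducing edge swap. The difference is in how the decisive case is closed. The paper splits on whether the witness $y$ of a neighbor $x$ of $r$ lies in $x$'s component of $T-r$: if not, the swap of $rx$ for $xy$ already works; if so, it applies the four-point inequality \emph{once} to the quadruple $\{r,z,x,y\}$, where $z$ is a \emph{second} neighbor of $r$, and in either resulting subcase obtains an edge ($yz$ or $xz$) of weight at most $\omega(r,z)$ that can replace $rz$. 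You instead first rule out the cross-component witness via the crossing-edge weight bound, and then run an iterated descent $q_0=v_2$, $q_{j+1}=p_{q_j}$ inside $S_2$, applying the four-point inequality to $\{r,v_1,q_j,q_{j+1}\}$ at every step and closing with a finiteness/cycling argument. Both routes work, but the paper's choice of quadruple --- pairing the second neighbor of $r$ against the witness pair of the first --- collapses your entire descent into a single application of quasiconvexity. One presentational point: the membership $q_{j+1}\in S_2$, which you invoke in both branches, requires the inductively maintained bound $\omega(r,q_j)\le\omega(r,v_2)$ combined with the crossing-edge bound for $S_2$; since that bound is exactly the ``non-increasing'' property you establish only in the second branch, the two claims must be carried along in a simultaneous induction on $j$ --- all the ingredients are present in your argument, but this should be made explicit.
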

\begin{proof}
	
	Take a minimum spanning tree $T$ with weights $\dw(xy)+xy$ in which $r$ has minimum
	degree. If $r$ is a leaf of this tree, there is nothing to show. Suppose otherwise
	that $r$ has degree at least two and we will reach a contradiction. Consider one of
	its neighbors, $x$. We have $\dw(r x)=\dw(y x)+ry$ for some $y\neq r,x$.
	\paragraph{Case 1:} The path in $T$ from $x$ to $y$ contains $r$.
	
	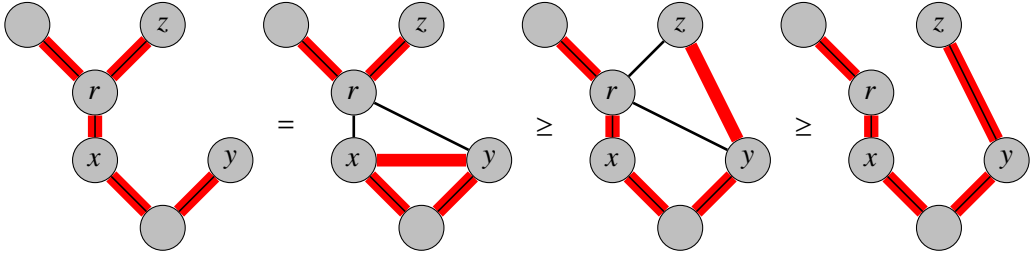
\begin{figure}
		\centering
		\begin{tikzpicture}[
		point/.style = {circle, draw, text centered, node distance=5cm, minimum size=17pt, inner sep=0pt, fill=black!25},scale=0.9]
		
		\node[point] at (3,0) (1) {};
		\node[point] at (2,1) (x) {$x$};
		\node[point] at (4,1) (y) {$y$};
		\node[point] at (2,2) (r) {$r$};
		\node[point] at (1,3) (2) {};
		\node[point] at (3,3) (z) {$z$};
		\draw[color=red,double=black,line width=2.4] (1) -- (x);
		\draw[color=red,double=black,line width=2.4] (1) -- (y);
		\draw[color=red,double=black,line width=2.4] (x) -- (r);
		\draw[color=red,double=black,line width=2.4] (r) -- (z);
		\draw[color=red,double=black,line width=2.4] (r) -- (2);
		\end{tikzpicture}
		\begin{tikzpicture}[
		point/.style = {circle, draw, text centered, node distance=5cm, minimum size=17pt, inner sep=0pt, fill=black!25},scale=0.9]
		
		\path (1,1.5) node {$=$};
		
		\node[point] at (3,0) (1) {};
		\node[point] at (2,1) (x) {$x$};
		\node[point] at (4,1) (y) {$y$};
		\node[point] at (2,2) (r) {$r$};
		\node[point] at (1,3) (2) {};
		\node[point] at (3,3) (z) {$z$};
		\draw[color=red,double=black,line width=2.4] (1) -- (x);
		\draw[color=red,double=black,line width=2.4] (1) -- (y);
		\draw[color=black,line width=1] (x) -- (r);
		\draw[color=black,line width=1] (y) -- (r);
		\draw[color=red,line width=4.8] (x) -- (y);
		\draw[color=red,double=black,line width=2.4] (r) -- (z);
		\draw[color=red,double=black,line width=2.4] (r) -- (2);
		\end{tikzpicture}
		\begin{tikzpicture}[
		point/.style = {circle, draw, text centered, node distance=5cm, minimum size=17pt, inner sep=0pt, fill=black!25},scale=0.9]
		
		\path (1,1.5) node {$\geq$};
		
		\node[point] at (3,0) (1) {};
		\node[point] at (2,1) (x) {$x$};
		\node[point] at (4,1) (y) {$y$};
		\node[point] at (2,2) (r) {$r$};
		\node[point] at (1,3) (2) {};
		\node[point] at (3,3) (z) {$z$};
		\draw[color=red,double=black,line width=2.4] (1) -- (x);
		\draw[color=red,double=black,line width=2.4] (1) -- (y);
		\draw[color=red,double=black,line width=2.4] (x) -- (r);
		\draw[color=black,line width=1] (r) -- (z);
		\draw[color=black,line width=1] (r) -- (y);
		\draw[color=red,line width=4.8] (y) -- (z);
		\draw[color=red,double=black,line width=2.4] (r) -- (2);
		\end{tikzpicture}
		\begin{tikzpicture}[
		point/.style = {circle, draw, text centered, node distance=5cm, minimum size=17pt, inner sep=0pt, fill=black!25},scale=0.9]
		
		\path (1,1.5) node {$\geq$};
		
		\node[point] at (3,0) (1) {};
		\node[point] at (2,1) (x) {$x$};
		\node[point] at (4,1) (y) {$y$};
		\node[point] at (2,2) (r) {$r$};
		\node[point] at (1,3) (2) {};
		\node[point] at (3,3) (z) {$z$};
		\draw[color=red,double=black,line width=2.4] (1) -- (x);
		\draw[color=red,double=black,line width=2.4] (1) -- (y);
		\draw[color=red,double=black,line width=2.4] (x) -- (r);
		\draw[color=red,double=black,line width=2.4] (y) -- (z);
		\draw[color=red,double=black,line width=2.4] (r) -- (2);
		\end{tikzpicture}
		\caption{Pictorial proof of Case 2a of Lemma~\ref{lemma:r-is-a-leaf}. Dark
			lines indicate distances and light (red) lines work function values.}
		\label{fig:evader}
	\end{figure}
	Then if we remove edge $[r,x]$ from tree $T$ and add edge $[x,y]$, we get a new
	spanning tree $T'$ with weight equal to the weight of $T$ minus $rx+ry-xy\geq
	0$. The new tree has weight at most equal to the weight of $T$ and the degree of
	$r$ is one less than its degree in $T$, a contradiction.
	
	\paragraph{Case 2:} The path in $T$ from $x$ to $y$ does not contain $r$.
	
	Since the degree of $r$ in $T$ is at least 2, let $z$ be another neighbor $r$.
	Then by the quasiconvexity property of $\dw$, one of the following two subcases must be true.
	
	\paragraph{Case 2a:} $\dw(r,z)+\dw(x,y)\geq \dw(r,x)+\dw(y,z)$.
	
	Using $\dw(r, x)=\dw(y, x)+ry$, we get $\dw(r,z)\ge \dw(y,z)+ry$. The tree
	$T'$ that we get if we replace edge $[r,z]$ with edge $[y,z]$ in $T$ has weight at most
	equal to the weight of $T$ minus $ry+rz-yz\geq 0$. As in the previous case, $T'$ is
	a minimum spanning tree in which the degree of $r$ is reduced by 1, a
	contradiction (see Figure~\ref{fig:evader}).

	\paragraph{Case 2b:} $\dw(r,z)+\dw(x,y)\geq \dw(r,y)+\dw(x,z)$.
	
	This is identical to the previous case when we interchange the role of $x$
	and $y$ (see Figure~\ref{fig:evader}).
\end{proof}

\subsection{Quasiconcavity and trees}
\label{sec:quasiconcavity-trees}

Let $T$ be a tree with set of leaves $M$ and with nonnegative weights on its
edges. We define its leaf-distance $v_T: M\times M\rightarrow \R$ such
that $v_T(x,y)$ is the weight of the unique path of $T$ between leaves $x$ and $y$.

\begin{theorem}
	A finite metric is quasiconcave if and only if it is a tree.
\end{theorem}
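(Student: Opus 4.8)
The plan is to prove both directions of the equivalence by working with the characterization of quasiconcavity on $2$-point sets, i.e. the condition that for any four leaves (or points) $x,y,z,w$ there is a pairing of the "diagonal" sums dominating the "matching" sums. For a finite metric $d$, quasiconcavity of $d$ means exactly: for all $x_1,x_2,y_1,y_2$, there is a bijection $\mu$ between $\{x_1,x_2\}$ and $\{y_1,y_2\}$ such that for every $A\subseteq\{x_1,x_2\}$ one has $-d(x_1,x_2)-d(y_1,y_2)\ge -d(A\cup\mu(\{x_1,x_2\}\setminus A))-d(\mu(A)\cup(\{x_1,x_2\}\setminus A))$. Unwinding the two choices of $\mu$ and the relevant $A$, this collapses to the familiar \emph{four-point condition}: among the three pairwise sums $d(x_1,x_2)+d(y_1,y_2)$, $d(x_1,y_1)+d(x_2,y_2)$, $d(x_1,y_2)+d(x_2,y_1)$, the largest two are equal. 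So the first step is to record carefully that quasiconcavity of the $2$-point restriction of $d$ is \emph{equivalent} to the four-point condition on $M$.

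For the "tree $\Rightarrow$ quasiconcave" direction, I would take $M$ to be the leaf set of a weighted tree $T$ and $v_T$ its leaf-distance, and verify the four-point condition directly: given four leaves, root a picture of the minimal subtree spanning them; this subtree is topologically either a "star" (one internal vertex) or an "H" (two internal vertices joined by a path). In the star case all three pairwise sums are equal; in the H-case the two sums that pair up leaves on opposite ends of the central path are equal and are $\ge$ the third. This is a short, classical case check. One must be slightly careful that the statement of the theorem speaks of a general finite metric being "a tree"; I will use the standard fact (cited implicitly earlier via the appeal to quasiconcavity of tree metrics) that a metric satisfying the four-point condition is realized as $v_T$ for some weighted tree, so "being a tree metric" and "satisfying the four-point condition" are interchangeable, and this direction is exactly the easy half.

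For the converse, "quasiconcave $\Rightarrow$ tree", I would show the four-point condition forces a tree realization. The cleanest route is induction on $n=|M|$: pick any point $p$; by induction the metric on $M\setminus\{p\}$ embeds in a weighted tree $T'$ with $M\setminus\{p\}$ as leaves; then use the four-point condition to locate the unique attachment point of $p$ on $T'$ — concretely, for $x\in M\setminus\{p\}$ the Gromov product $(p\mid y)_x=\tfrac12(d(p,x)+d(y,x)-d(p,y))$ is, by the four-point condition, maximized (as $y$ ranges) along a consistent direction in $T'$, which pins down where to subdivide an edge of $T'$ and hang a new pendant edge for $p$; checking that all distances to $p$ come out right is where the four-point condition gets used in full. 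I would verify the base cases $n\le 3$ (any $3$-point metric is a tree metric, giving a trivial base) and then the inductive step.

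The main obstacle I anticipate is the converse direction: translating the purely combinatorial four-point condition into an actual tree with the correct edge weights, and in particular handling possible "degeneracies" (attachment exactly at an existing vertex, zero-length edges, or $p$ coinciding with the projection of another leaf). These are routine but fiddly; the Gromov-product bookkeeping must be done consistently so that subdividing the right edge of $T'$ by the right amount reproduces every $d(p,\cdot)$ simultaneously. The forward direction, by contrast, is just the two-case picture and should be dispatched in a few lines.
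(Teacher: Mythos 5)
Your plan is correct and follows essentially the same route as the paper: the forward direction is the standard four-leaf (star vs.\ ``H'') case check of the four-point condition, and the converse is the same inductive construction that attaches each new point as a pendant leaf at a location pinned down by Gromov products --- the paper's choice of a pair $x,y$ minimizing $v(x,z)+v(y,z)-v(x,y)$ --- with the four-point condition and the minimality of that pair used to verify that all remaining distances to the new leaf come out right.
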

\begin{proof}
	We will show that a metric $v: M\times M\rightarrow \R$ is quasiconcave if and
	only if it is the leaf-distance of some tree $T$ with nonnegative weights.
	
	One direction is immediate: given a weighted tree $T$ with nonnegative weights,
	in every subtree with 4 leaves the maximum weighted matching between leaves is
	achieved in two out of the three possible matchings.
	
	For the opposite direction, given a quasiconcave metric $v$ we can create
	recursively a tree such that $v$ is equal to its leaf-distance function $v_T$, as
	follows: Create a tree $T'$ for the first $n-1$ elements of $M$ and extend it by
	adding a new leaf for the last element $z$ of $M$. To do this, we need to
	determine two things: first, a point $a$ of the tree where to attach the new edge
	$[a,z]$ and second the length of edge $[a,z]$. Since point $a$ splits some
	edge of $T'$ into two new edges and we add edge $[a,z]$, the resulting tree $T$
	has two more edges than $T$.
	
	To find point $a$, we first find $x$ and $y$ that minimize
	$v(x,z)+v(y,z)-v(x,y)$. We view the path from $x$ to $y$ in $T'$ as a real
	interval of length $v(x,y)$ and find a point $a$ on it with (nonnegative)
	distances $1/2(v(x,y)+v(x,z)-v(y,z))$, $1/2(v(x,y)+v(y,z)-v(x,z))$,
	$1/2(v(x,z)+v(y,z)-v(x,y))$, from $x$, $y$, and $z$ respectively. Note that this
	agrees with the distances between leaves $x$, $y$ and $z$.
	
	We need to verify that this construction works, i.e., that $v$ and $v_T$ are the
	same. The construction preserves the weight between leaves of $T'$, so we only
	need to verify that $v_T(z,u)=v(z,u)$, for every $u$. Note also that
	$v_T(x,z)=v(x,z)$ and $v_T(y,z)=v(y,z)$, by construction. Since $T$ is a tree,
	and $a$ is in the path between $x$ and $y$, we get
	$v_T(u,z)+v(x,y)=\max \{ v(x,z)+v(u,y), v(y,z)+v(u,x) \}$. On the other hand,
	since $x$ and $y$ were chosen to minimize $v(x,z)+v(y,z)-v(x,y)$, we must have
	that $v(u,z)+v(x,z)-v(u,x)\geq v(x,z)+v(y,z)-v(x,y)$ or equivalently,
	$v(u,z)+v(x,y)\geq v(y,z)+v(u,x)$. Interchanging the role of $x$ and $y$, we also
	have $v(u,z)+v(x,y)\geq v(x,z)+v(u,y)$. So, we must have that
	$v(u,z)+v(x,y)=\max \{ v(x,z)+v(u,y), v(y,z)+v(u,x) \}$ by quasiconcavity of $v$, which shows that
	$v_T(u,z)= v(u,z)$.
\end{proof}

\end{document}